\author{Abolfazl Motahari, Guy Bresler and David Tse\\
Department of Electrical Engineering and Computer Sciences\\
University of California, Berkeley\\
\{motahari,gbresler,dtse\}@eecs.berkeley.edu}
\date{}
\begin{document}
\title{Information Theory of DNA Shotgun Sequencing}
\maketitle

\begin{abstract}
DNA sequencing is the basic workhorse of modern day biology and medicine. Shotgun sequencing is the dominant technique used: many randomly located short fragments called reads are extracted from the DNA sequence, and these reads are assembled to reconstruct the original sequence. A basic question is: given a sequencing technology and the statistics of the DNA sequence, what is the minimum number of reads required for reliable reconstruction? This number provides a fundamental limit to the performance of {\em any} assembly algorithm.  For a simple statistical model of the DNA sequence and the read process, we show that the answer admits a critical phenomena in the asymptotic limit of long DNA sequences: if the read length is below a  threshold, reconstruction is impossible no matter how many reads are observed, and if the read length is above the threshold, having enough reads to cover the DNA sequence is sufficient to reconstruct. The threshold is computed in terms of the Renyi entropy rate of the DNA sequence. We also study the impact of noise in the read process on the performance.
\end{abstract}

\section{Introduction}
\label{sec:intro}

\subsection{Background and Motivation}

DNA sequencing is the basic workhorse of modern day biology and medicine. Since the sequencing of the Human Reference Genome ten years ago, there has been an explosive advance in sequencing technology, resulting in several orders of magnitude increase in throughput and decrease in cost. This advance allows the generation of a massive amount of data, enabling the exploration of a diverse set of questions in biology and medicine that were beyond reach even several years ago. These questions include discovering genetic variations across different humans (such as single-nucleotide polymorphisms SNPs), identifying genes affected by mutation in cancer tissue genomes, sequencing an individual's genome for diagnosis (personal genomics), and understanding DNA regulation in different body tissues.

Shotgun sequencing is the dominant method currently used to sequence long strands of DNA, including entire genomes. The basic shotgun DNA sequencing set-up is shown in Figure~\ref{fig:setup}. Starting with a DNA molecule, the goal is to obtain the sequence of nucleotides ($A,C,G$ or $T$) comprising it. (For humans, the DNA sequence has about $3 \times 10^9$ nucleotides, or base pairs.)  The sequencing machine extracts a large number of reads from the DNA; each read is a randomly located fragment of the DNA sequence, of lengths of the order of 100-1000 base pairs, depending on the sequencing technology. The number of reads can be of the order of 10's to 100's of millions. The {\em DNA assembly problem} is to reconstruct the DNA sequence from the many reads.

\begin{figure}
\begin{center}
\includegraphics[width=5.5in]{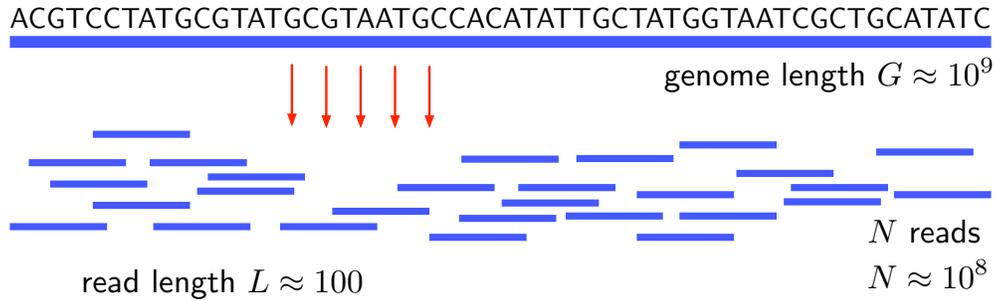}
\end{center}\vspace{-0.5cm}
\caption{Schematic for shotgun sequencing.}
\label{fig:setup}
\end{figure}

When the human genome was sequenced in 2001, there was only one sequencing technology, the Sanger platform \cite{FSA77}. Since 2005, there has been a proliferation of ``next generation" platforms, including Roche/454, Life Technologies SOLiD, Illumina Hi-Seq 2000 and Pacific Biosciences RS.
Compared to the Sanger platform, these technologies can provide massively parallel sequencing, producing far more reads per instrument run and at a lower cost, although the reads are shorter in lengths.
Each of these technologies generates reads of different lengths and with different noise profiles. For example, the 454 machines have read lengths of about 400 base pairs, while the SOLiD machines have read lengths of about $100$ base pairs. At the same time, there has been a proliferation of a large number of assembly algorithms, many tailored to specific sequencing technologies. (Recent survey articles \cite{Pop09,MKS10,PS10} discuss no less than $20$ such algorithms, and the Wikipedia entry on this topic listed $42$ \cite{wiki:assembly}.)

The design of these algorithms is based primarily on  {\em computational} considerations. The goal is to design efficient algorithms that can scale well with the large amount of sequencing data. Current  algorithms are often tailored to particular machines and are designed based on heuristics and domain knowledge regarding the specific DNA being sequenced; this makes it difficult to compare different algorithms, not to mention to define what it means by an ``optimal" assembly algorithm for a given sequencing problem. One reason for the heuristic approach taken towards the problem is that various formulations of the assembly problem are known to be NP-hard (see for example \cite{KS05})

An alternative to the computational view is the {\em information theoretic} view. In this view, the genome sequence is regarded as a random string to be estimated based on the read data. The basic question is: what is the minimum number of reads needed to reconstruct the DNA sequence with a given reliability? This minimum number can be used as a benchmark to compare different algorithms, and an optimal algorithm is one that achieves this minimum number. It can also provide an algorithm-independent basis for comparing different sequencing technologies and for designing new technologies.

This information theoretic view falls in the realm of {\em DNA sequencing theory} \cite{wiki:dna_theory}. A well-known lower bound on the number of reads needed can be obtained by a {\em coverage analysis}, an approach pioneered by Lander and Waterman \cite{LW88}. This lower bound is the number of reads $\Ncov$ such that with a desired probability, say $1-\epsilon$, the randomly located reads cover the entire genome sequence. The number $\Ncov$  can be easily approximated:
$$ \Ncov(\eps,G,L) \approx \frac{G}{L} \ln \left (\frac{G}{L\eps} \right ),$$
where $G$ and $L$ are DNA and read length, respectively. While this is clearly a lower bound on the minimum number of reads needed, it is in general not tight: only requiring the reads to cover the entire genome sequence does not guarantee that consecutive reads can actually be stitched back together to recover the entire sequence. The ability to do that depends on other factors such as the repeat statistics of the DNA sequence and also the noise profile in the read process. Thus, characterizing the minimum number of reads required for reconstruction is in general an open question.

\subsection{Main contributions}

In this paper, we make progress on this basic problem. We first focus on a very simple model:
\begin{enumerate}
\item the DNA sequence is modeled as an i.i.d. random process of length $G$ with each symbol taking values according to a probability distribution ${\bf p}$ on the alphabet $\{A,C,G,T\}$.
\item each read is of length $L$ symbols and begins at a uniformly distributed location on the DNA sequence and the locations are independent from one read to another.
\item the read process is noiseless.
\end{enumerate}

Fix an $\eps  \in (0,1/2)$ and let $\Nmin (\eps,G,L)$  be the minimum number of reads required to reconstruct the DNA with probability at least $1-\eps$. We would like to know how $\frac{\Nmin (\eps,G,L)}{\Ncov (\eps,G,L)}$ behaves in the asymptotic regime when $G$ and $L$ grow to infinity. It turns out that in this regime, the ratio depends on $G$ and $L$ through a normalized parameter:
$$ \bar{L} := \frac{L}{\log G},$$
where $\log(\cdot)$ represents logarithms to base 2. We define
$$\capacity (\bar L)= \lim_{G \rightarrow \infty,L = \bar L \log G } \frac{\Nmin(\eps,G,L)}{\Ncov(\eps,G,L)}.$$

Let $H_2({\bf p})$ be the Renyi entropy of order 2, defined to be
\begin{equation}
\label{eq:Renyi}
H_2({\bf p }) := - \log \sum_i p_i^2.
\end{equation}

Our main result, Theorem \ref{t:mainCapacity},  yields a {\em critical phenomenon}: when $\bar{L}$ is below the threshold $2/H_2({\bf p})$, reconstruction is impossible, i.e. $\capacity (\bar{L}) = \infty$, but when $\bar{L}$ is above that threshold,  the obvious necessary condition of coverage is also sufficient for reconstruction, i.e. $\capacity (\bar{L}) = 1$. The significance of the threshold is that when $\bar{L} < 2/H_2({\bf p})$, with high probability there are many repeats of length $L$ in the DNA sequence, while when  $\bar{L} > 2/H_2({\bf p})$, with high probability there are no repeats of length $L$. Thus, another way to interpret the result is that $\bar{L} < 2/H_2({\bf p})$  is a  {\em repeat-limited} regime while $\bar{L} > 2/H_2({\bf p})$ is a {\em coverage-limited} regime.
The result is summarized in Figure~\ref{fig:plot}.

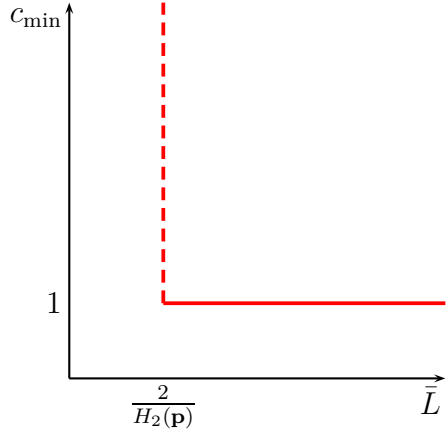
\begin{figure}\centering
\begin{pspicture}(-1,-1)(5,5)
\psline{->}(0,0)(5,0)
\psline{->}(0,0)(0,5)
\rput[t](4.8,-.1){$\bar L $}
\psline[linecolor=red,linestyle=dashed,linewidth=.05](1.25,1)(1.25,5)
\psline[linecolor=red,linewidth=.05](1.25,1)(5,1)
\rput[r](-.1,4.8){$\capacity$}
\rput[t](1.25,-.1){$2\over H_2(\mathbf{p})$}
\rput[r](-.1,1){1}
\end{pspicture}
\vspace{-5mm}
\caption{The critical phenomenon.}
\label{fig:plot}
\end{figure}

A standard measure of data requirements in DNA sequencing projects is the {\em coverage depth}:  the average number of reads covering each base pair. Thus, $\Ncov (\eps, G,L) \times  L/G$ is the coverage depth required to cover the DNA sequence with probability $1-\eps$ (as predicted by Lander-Waterman), and $\Nmin (\eps,G,L) \times L/G$ is the minimum coverage depth required to reconstruct the DNA sequence with probability $1-\eps$. Hence,
$\capacity(\bar{L})$ can be interpreted as the (asymptotic) normalized minimum coverage depth required to reconstruct the DNA sequence.

In a related work, Arratia et al \cite{Arr96} showed that $\bar{L} > 2/H_2({\bf p})$ is a necessary and sufficient condition for reconstruction of the i.i.d. DNA sequence if {\em all} length $L$ subsequences of the DNA sequence are given as reads. This arises in a technology called {\em sequencing by hybridization}.
Obviously, for the same read length $L$, having all length $L$ subsequences provides more information than any number of reads from shotgun sequencing, where the reads are randomly sampled. Hence, it follows that $\bar{L} > 2/H_2({\bf p})$ is also a necessary condition for shotgun sequencing. What our result says is that this condition together with coverage is sufficient for reconstruction asymptotically.

The basic model of i.i.d. DNA sequence and noiseless reads is very simplistic.  We provide two extensions to our basic result: 1) Markov DNA statistics; 2) noisy reads. In the first case, we show that the same result as the i.i.d. case holds except that the Renyi entropy $H_2({\bf p})$ is replaced by the Renyi entropy rate of the Markov process. In the second case, we analyze the performance of a modification of the greedy algorithm to deal with noisy reads, and show that the effect of noise is on increasing the threshold on the read length below which reconstruction is impossible.

Even with these extensions, our models still miss several important aspects of real DNA and read data.  Perhaps the most important aspect is the presence of long repeats in the DNA sequences of many organisms, ranging from bacteria to humans. These long repeats are poorly captured by i.i.d. or even Markov models due to their short-range correlation. Another aspect is the non-uniformity of the sampling of reads from the DNA sequences.  At the end of the paper, we will discuss how our results can be used as a foundation to tackle these and other issues.

\subsection{Related Work}

Li \cite{Min90} has also posed the question of minimum number of reads for the i.i.d. equiprobable DNA sequence model. He showed that if $L > 4 \log G$, then the number of reads needed is $O(G/L \ln G ) $, i.e. a constant multiple of the number needed for coverage. Specializing to the equiprobable case, our result shows that reconstruction is possible with probability $1-\eps$ {\em if and only if} $L > \log G$ and the number of reads is $G/L \ln (G/L\eps)$. Not only is our characterization necessary and sufficient, we have a much weaker condition on the read length $L$, and we get the correct pre-log constant on the number of reads needed.
As will be seen later, many different algorithms have the same scaling behavior in the number of reads they need, but it is the pre-log constant which distinguishes them.

A common formulation of  DNA assembly is the shortest common superstring (SCS) problem. The SCS problem is the problem of finding the shortest string containing a set of strings, where in the DNA assembly context, the given strings are the reads and the superstring is the estimate of the original DNA sequence.  While the general SCS problem with arbitrary instances is NP-hard \cite{KS05}, the greedy algorithm has been shown to be optimal for the SCS problem under certain probabilistic settings \cite{Frieze:kx, Ma09}.  Thus, the reader may have the impression that our results overlap with these previous works. However, there are significant differences.

First, at a basic problem formulation level, the SCS problem and the DNA sequence reconstruction problem are not equivalent: there is no guarantee that the shortest common superstring containing the given reads is the original DNA sequence. Indeed, it has already been observed in the assembly literature (eg. \cite{MB09}) that the shortest common superstring of the reads may be a significant compression of the original DNA sequence, especially when the latter has a lot of repeats, since finding the shortest common superstring tends to merge these repeats.  For example, in the case of very short reads the resulting shortest common superstring is definitely not the original DNA sequence. In contrast, we formulate the problem directly in terms of reconstructing the original sequence, and a lower bound on the required read length emerges as part of the result.

Second, even if we assume that the shortest common superstring containing the reads is the original DNA sequence, one cannot recover our result from either \cite{Ma09} or \cite{Frieze:kx}, for different reasons.  The main result (Theorem 1) in \cite{Ma09} says that if one models the DNA sequence as an arbitrary sequence perturbed by mutating each symbol independently with probability $p$ and the reads are arbitrarily located, the average length of the sequence output by the greedy algorithm is no more than a factor of $1+3\delta$ of the length of the shortest common superstring, provided that $p > 2 \log (GL)/(\delta L)$, i.e. $ p > 2/(\delta \bar{L})$.  However, since $p \le 1$, the condition on $p$ in their theorem implies that $\delta \ge \frac{2}{\bar{L}}$.  Thus, for a fixed $\bar{L}$ they actually only showed that the greedy algorithm is approximately optimal to within a factor of $1+ 6/\bar{L}$, and optimal only under the further condition that $\bar{L} \rightarrow \infty$. In contrast, our result shows that the greedy algorithm is  optimal for {\em any} $\bar{L} > 2/H_2({\bf p})$, albeit under a weaker model for the DNA sequence (i.i.d. or Markov) and read locations (uniform random).



Regarding \cite{Frieze:kx}, the probabilistic model they used does not capture the essence of the DNA sequencing problem.   In their model, the given reads are all independently distributed and not from a single ``mother" sequence, as in our model. In contrast, in our model, even though the original DNA sequence is assumed to be i.i.d., the reads will be highly {\em correlated}, since many of the reads will be physically overlapping. In fact, it follows from \cite{Frieze:kx} that, given $N$ reads and the read length $L$ scaling like $\log N$, the length of the shortest common superstring scales like $N \log N$. On the other hand, in our model, the length of the reconstructed sequence would be proportional to $N$. Hence, the length of the shortest common superstring is much longer for the model studied in \cite{Frieze:kx}, a consequence of the reads being independent and therefore much harder to merge. So the two problems are completely different, although coincidentally the greedy algorithm is optimal for both problems.

\subsection{Notations and Outline}

A brief remark on notation is in order. Sets (and probabilistic events) are denoted by calligraphic type, e.g. $\calA, \BB, \EE$, vectors by boldface, e.g. $\bs,\bx,\by$, and random variables by capital letters such as $S,X,Y$. Random vectors are denoted by capital boldface, such as $\s, \bX, \bY$. The exception to these rules, for the sake of consistency with the literature, are the (non-random) parameters $G, N,$ and $L$. The natural logarithm is denoted by $\ln(\, \cdot \, )$ and the base~2 logarithm by $\log(\, \cdot \, )$.

The rest of the paper is organized as follows. Section~\ref{sec:formulation} gives the precise formulation of the problem.  Section~\ref{sec:upper} explains why reconstruction is impossible for read length below the stated threshold. For read length above the threshold,  an optimal algorithm is presented in Section~\ref{sec:achieve}, where a heuristic argument is given to explain why it performs optimally.  Sections~\ref{sec:source} and \ref{sec:noise} describe extensions of our basic result to incorporate read noise and a more complex model for DNA statistics, respectively.  Section ~\ref{sec:discussions} discusses future work. Appendices contain the formal proofs of all the results in the paper.

\section{I.i.d. DNA Model}

This section states the main result of this paper, addressing the optimal assembly of i.i.d. DNA sequences. We first  formulate the problem and state the result. Next, we compare the performance of the optimal algorithm with that of other existing algorithms. Finally, we discuss the computational complexity of the algorithm.

\subsection{Formulation and Result}
\label{sec:formulation}
The DNA sequence  $\bs=s_1 s_2 \dots s_G$ is modeled as an i.i.d. random process of length $G$ with each symbol taking values according to a probability distribution $\p=(p_1,p_2,p_3,p_4)$ on the alphabet $\{A,C,G,T\}$. For notational convenience we instead denote the letters by numerals, i.e. $s_i\in \{1,2,3,4\}$. To avoid boundary effects, we assume that the DNA sequence is circular, i.e., $s_i=s_j$ if $i=j$ mod $G$; this simplifies the exposition, and all results apply with appropriate minor modification to the non-circular case as well.


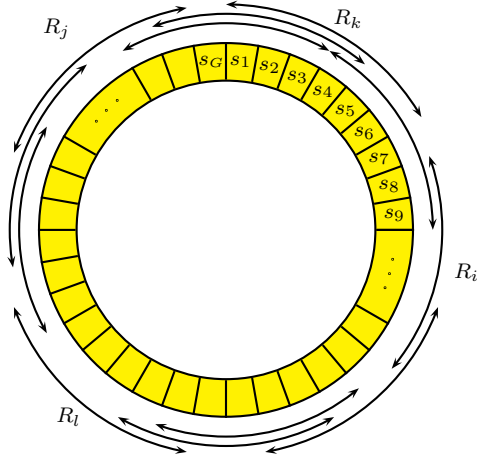
\begin{figure}
\centering
\scalebox{1}
{
\begin{pspicture}(-3,-3)(3,3)
\pscircle[fillstyle=solid,fillcolor=yellow]{2.5}
\pscircle[fillstyle=solid,fillcolor=white]{2}
\multido{\i=90+-10,\ij=85+-10,\ik=1+1}{9}{\psline(2;\i)(2.5;\i)\rput(2.25;\ij){\scriptsize $s_{\ik}$}}
\psline(2;0)(2.5;0)
\psline(2;0)(2.5;0)
\multido{\i=100+10,\ij=95+10,\ik=1+-1}{3}{\psline(2;\i)(2.5;\i)}
\rput(2.25;95){\scriptsize $ s_{\scriptscriptstyle G }$}
\multido{\i=-10+-5}{3}{\pscircle(2.25;\i){.01}}
\multido{\i=130+5}{3}{\pscircle(2.25;\i){.01}}
\multido{\i=-30+-10}{19}{\psline(2;\i)(2.5;\i)}
\psarc{<->}{2.75}{0}{60}
\rput{0}(3.25;-10){\scriptsize $R_i$}
\rput{0}(3.25;60){\scriptsize $R_k$}
\rput{0}(3.5;130){\scriptsize $R_{j}$}
\rput{0}(3.25;-130){\scriptsize $R_l$}
\psarc{<->}{3}{30}{90}
\psarc{<->}{2.875}{50}{110}
\psarc{<->}{2.75}{60}{120}
\psarc{<->}{3}{100}{160}
\psarc{<->}{2.875}{130}{190}
\psarc{<->}{2.75}{150}{210}
\psarc{<->}{3}{200}{260}
\psarc{<->}{2.875}{240}{300}
\psarc{<->}{2.75}{250}{310}
\psarc{<->}{3}{280}{340}
\psarc{<->}{2.875}{320}{380}
\end{pspicture}
}
\caption{A circular DNA sequence which is sampled randomly.}\label{f:circular DNA}
\end{figure}

The objective of DNA sequencing is to reconstruct the whole sequence $\bs$ based on $N$ \emph{reads} drawn randomly from the sequence (see Figure \ref{f:circular DNA}).
A read is a substring of length $L$ from the DNA sequence. The set of reads is denoted by $\RR =\{\br_1,\br_2,\ldots,\br_{N}\}$. The starting location of read $i$ is $t_i$, so $r_i=\bs[t_i,t_i+L-1]$.  The set of starting locations of the reads is denoted $\TT=\{t_1,t_2,\ldots,t_{N}\}$, where we assume $1\leq t_1\leq t_2\leq \dots\leq t_{N}\leq G$. We also assume that the starting location of each read is uniformly distributed on the DNA and the locations are independent from one read to another.

An \emph{assembly algorithm} takes a set of $N$ reads $\RR=\{\br_{1},\dots,\br_N\}$ and returns an estimated sequence $\hat \bs=\hat\bs(\RR)$.
We require \emph{perfect reconstruction}, which presumes that the algorithm $\phi$ makes an error if $\hat\bs\neq \bs$\footnote{The notion of perfect  reconstruction can be thought of as a mathematical idealization of the notion of ``finishing" a sequencing project as defined by the National Human Genome Research Institute \cite{finishing}, where finishing a chromosome requires at least 95\% of the chromosome to be represented by a contiguous sequence.}. We let $\P$ denote the probability model for the (random) DNA sequence $\s$ and the sample locations $\TT$, and $\EE :=\{\hat\s\neq \s\}$ the error event. A question of central interest is: what are the conditions on the read length $L$  and the number of reads $N$ such that the reconstruction error probability is less than a given target $\epsilon$? 
Unfortunately, this is in general a difficult question to answer.
We instead ask an easier {\em asymptotic} question: what is the ratio of the minimum number of reads $\Nmin$ and number of reads needed to cover the sequence $\Ncov$ as $L,G \rightarrow \infty$ with $\bar L =L/\log G$ being a constant, and which algorithm achieves the optimal performance asymptotically? More specifically, we are interested in $\capacity (\bar L)$, which is defined as
\begin{equation}
\capacity (\bar L ) = \lim_{G \to \infty, L=\bar L \log G} \frac{\Nmin(\epsilon,G,L)}{\Ncov(\epsilon,G,L)}.
\end{equation}

The main result for this model is:
\begin{theorem}\label{t:mainCapacity}
Fix an $\epsilon < 1/2$. \Fr $\capacity(\bar L)$ is given by
\begin{equation}
\label{eq:capacity}
\capacity(\bar L)=
\begin{cases}
\infty & \text{if} ~ \bar{L}  < 2/H_2({\bf p}),\\
1 & \text{if} ~ \bar{L} > 2/H_2({\bf p}),
\end{cases}
\end{equation}
where $H_2(\p)$ is the Renyi entropy of order 2 defined in \eqref{eq:Renyi}.
\end{theorem}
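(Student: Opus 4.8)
The plan is to prove the two regimes separately, after recording the trivial matching bound $\capacity(\bar L)\ge 1$: since $\p$ is non-degenerate, a base pair covered by no read is conditionally independent of $\RR$, so any algorithm errs with conditional probability $\ge 1-\max_i p_i>0$; hence $\Nmin(\epsilon,G,L)$ is at least the number of reads needed to cover $\s$ with probability $1-\epsilon'$ for a suitable $\epsilon'$, and by the Lander--Waterman asymptotics the ratio of that quantity to $\Ncov(\epsilon,G,L)$ tends to $1$.

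For the converse ($\bar L<2/H_2(\p)$) I would exhibit, with probability tending to $1$, an obstruction that persists even if one observes the entire multiset of length-$L$ substrings of $\s$: a pair of interleaved maximal repeats of length $L-1$, i.e.\ positions at which $\s$ reads, around the circle, $\mathbf{u}\,\mathbf{P}\,\mathbf{v}\,\mathbf{Q}\,\mathbf{u}\,\mathbf{R}\,\mathbf{v}\,\mathbf{S}$ with $|\mathbf u|=|\mathbf v|=L-1$, with $\mathbf P,\mathbf R$ nonempty, and with the two copies of $\mathbf u$ and of $\mathbf v$ interleaved. Interchanging $\mathbf P$ and $\mathbf R$ yields a circular string $\s'\neq\s$; one checks that the swap only permutes the length-$L$ windows, so $\s'$ has the same $L$-spectrum as $\s$, and it has the same symbol composition, so $\s$ and $\s'$ are equally likely under the prior. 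Since the distribution of $\RR$ depends on the sequence only through its $L$-spectrum, no algorithm can tell $\s$ from $\s'$, and on this event every algorithm errs with conditional probability $\ge 1/2$ \emph{for every} $N$; hence $\P(\EE)\ge \frac12\,\P(\mbox{such a pair exists})$, which exceeds $\epsilon$ for all large $G$, so $\Nmin=\infty$ and $\capacity=\infty$. The analytic heart is a first- and second-moment argument: the expected number of such interleaved pairs is of order $G^{\,4-2\bar L H_2(\p)}$, which diverges precisely when $\bar L<2/H_2(\p)$, and a Chebyshev bound --- after separating off the lower-order contribution of configurations whose defining blocks are not disjoint --- turns divergence of the mean into existence with high probability.

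For achievability ($\bar L>2/H_2(\p)$) I would fix $\delta>0$, take $N=(1+\delta)\Ncov$ reads, and pick $\gamma>0$ with $\ell_0:=(1+\gamma)\frac{2}{H_2(\p)}\log G<L-1$. The target is: with high probability (a) the reads cover $\s$, where the factor $1+\delta$ drives the coverage-failure probability to $0$ and so leaves slack for the rest; (b) $\s$ has no repeat of length $\ge\ell_0$, a first-moment bound with expectation $\Theta(G^{-2\gamma})\to 0$; and (c) every occurrence of every repeat of $\s$ is \emph{bridged}, i.e.\ contained in some read together with a flanking symbol on each side. On (a)$\cap$(b)$\cap$(c) I would argue the greedy algorithm --- repeatedly merge the two current fragments with the longest overlap --- recovers $\s$: by (b) any suffix--prefix overlap longer than $\ell_0$ fixes a unique location in $\s$, so every such merge is correct; and (a) together with (c) keeps a contig from ever ending in the interior of a repeat, since the reads bridging that repeat push the contig past the boundary, so the final merges, at overlaps $\le\ell_0$, join fragments along strings that occur only once and are again correct. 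Combined with $\capacity\ge1$, this gives $\capacity(\bar L)=1$.

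The main obstacle I anticipate is precisely the achievability under (c) and the accompanying check that greedy is not fooled. Because $\Ncov$ is only a constant factor from optimal, for small $\delta$ there are $\omega(1)$ junctions where consecutive reads overlap by less than the longest repeat, so a crude ``every overlap beats every repeat'' argument fails, and a naive union bound for (c) over the $\Theta(G^2)$ repeat occurrences is too weak. One has to show that a configuration genuinely producing a second reconstruction --- or making greedy merge across the wrong copy of a short repeat --- requires several independent rare coincidences, hence has $o(1)$ expected count, and then carefully track how the bridging reads are consumed by the greedy process; this refined counting and bookkeeping is the technical core. The second-moment estimate in the converse is a secondary and essentially routine difficulty.
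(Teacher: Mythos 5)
Your treatment of the converse and of the trivial bound $\capacity(\bar L)\ge 1$ is essentially the paper's route (interleaved repeats of length $L-1$, the swap producing $\s'\neq\s$ with the same multiset of length-$L$ windows and the same prior probability, hence MAP error probability at least $1/2$); the paper imports the with-high-probability existence of interleaved repeats from Arratia et al.\ rather than re-deriving it by a second-moment calculation, but your plan there is workable. The genuine gap is in the achievability skeleton: your condition (c) --- every occurrence of every repeat of $\s$ is bridged by a read with a flanking symbol on each side --- is not merely hard to prove by a union bound; it is \emph{false} with high probability in the regime you need. Take $m=\log G/H_2(\p)$: the expected number of pairs of positions sharing a common substring of length $m$ is of order $G$, so order $G$ repeat occurrences of this length are present, and with $N=(1+\delta)\Ncov$ each one is unbridged with probability about $e^{-\lambda(L-m)}=G^{-(1+\delta)\left(1-1/(\bar L H_2(\p))\right)}$, independently of the sequence. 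The expected number of unbridged repeat occurrences is therefore of order $G^{\,1-(1+\delta)(1-1/(\bar L H_2(\p)))}$, which diverges (like $G^{(1-\delta)/2}$ when $\bar L H_2(\p)$ is near $2$) unless $\bar L H_2(\p)>(1+\delta)/\delta$. Equivalently, making (c) hold with high probability forces $N/\Ncov$ to be at least about $\bar L H_2(\p)\ln 2/(\bar L H_2(\p)-1)$, which is precisely the sequential-algorithm rate of Proposition~\ref{t:sequentialRate} and strictly exceeds $1$; so conditioning on (a), (b), (c) can never certify $\capacity(\bar L)=1$ near the threshold.

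The missing idea --- which your ``main obstacle'' paragraph gestures at but does not supply, and which must \emph{replace} (c) rather than refine its proof --- is to bound directly the probability that greedy ever errs, organized by the overlap score $\ell$ at which the first error occurs. The point is that an erroneous merge at stage $\ell$ requires \emph{two} coverage-type rare events, not one: both reads involved must have physical overlap at most $\ell$ with their true successors (otherwise they would already have been merged correctly at an earlier stage), and on top of that the chance $\ell$-match must occur; this is what produces the squared factor in \eqref{eq:pairs}, $\bigl[Ne^{-\lambda(L-\ell)}\bigr]^2 2^{-\ell H_2(\p)}$, whose maximum over $\ell\in[0,L]$ is controlled by coverage (at $\ell=0$) and by absence of length-$L$ repeats (at $\ell=L$) alone --- unbridged interior repeats never need to be excluded. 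A complete argument must also handle the case your plan omits: candidate merges between reads that physically overlap on the genome, and a read matching its own successor at an overlap exceeding their physical overlap, for which the match probability degrades from $2^{-\ell H_2(\p)}$ to $2^{-\ell H_2(\p)/2}$; the paper isolates these in the events $\CC_\ell$ and the overlapping part of $\BB_\ell$ and shows their total contribution is only an $O(\lambda L)$ multiple of a vanishing term. Without this per-stage, two-boundary-reads accounting, your skeleton proves at best the suboptimal constant $\rate_{\text{seq}}$, not $\capacity(\bar L)=1$.
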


Section \ref{sec:upper} proves the first part of the theorem, that reconstruction is impossible for $\bar L < 2/H_2({\bf p})$. Section \ref{sec:achieve} shows how a simple greedy algorithm can achieve optimality for $ \bar L > 2/H_2({\bf p})$.

\subsection{$\bar L < \frac{2}{H_2({\bf p})}$: Repeat-limited regime}

\label{sec:upper}

The random nature of the DNA sequence gives rise to a variety of patterns.
The key observation in \cite{ukkonen92} is that there exist two patterns in the DNA sequence precluding reconstruction from an arbitrary set of reads of length $L$. In other words, reconstruction is not possible even if the $L$-spectrum, i.e. the set of all substrings of length $L$ appearing in the DNA sequence,  is given. The first pattern is the three way repeat of a substring of length $L-1$. The second pattern is two interleaved pairs of repeats of length $L-1$, shown in Figure~\ref{fig:repeat}. Arratia et al. \cite{Arr96} carried out a thorough analysis of randomly occurring repeats for the same i.i.d. DNA model as ours, and showed that the second pattern of two iterleaved repeats is the typical event for reconstruction to fail. A consequence of Theorem~7 in \cite{Arr96} is the following lemma (see also \cite{DFS94}).


\begin{figure}
\begin{center}
\includegraphics[scale=.5]{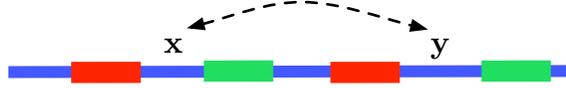}
\end{center}
\vspace{-.4cm}
\caption{Two pairs of interleaved repeats of length $L-1$ create ambiguity: from the reads it is impossible to know whether the sequences $\bx$ and $\by$ are as shown, or swapped. }
\label{fig:repeat}
\end{figure}

\begin{lemma}[Arratia et al. \cite{Arr96}]\label{t:repeat}
Fix $\bar{L} < \frac2{H_2(\mathbf{p})}$. An i.i.d. random DNA sequence contains interleaved repeats of length $L = \bar{L}\log G$ with probability $1-o(1)$.
\end{lemma}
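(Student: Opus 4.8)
\medskip
\noindent\textbf{Proof strategy.} The plan is to apply the second moment method to the number of interleaved‑repeat patterns of length $L$ in $\bs$. Listing positions in cyclic order around the sequence, I would call a quadruple $(a,b,c,d)$ an \emph{interleaved repeat} if the four length‑$L$ windows starting at $a,b,c,d$ are pairwise non‑overlapping and $\bs[a,a+L-1]=\bs[c,c+L-1]$ and $\bs[b,b+L-1]=\bs[d,d+L-1]$ — exactly the ambiguity‑creating configuration of Figure~\ref{fig:repeat} (the difference from the length‑$(L-1)$ repeats mentioned in the text is immaterial in the limit). Writing $Z$ for the number of such quadruples, it suffices to show $\Pr[Z\ge 1]=1-o(1)$.

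\medskip
\noindent\textbf{First moment.} For non‑overlapping windows the two required equalities are independent, each of probability $\big(\sum_i p_i^2\big)^{L}=2^{-LH_2(\mathbf p)}=G^{-\bar L H_2(\mathbf p)}$. Since $L=O(\log G)$, all but an $o(1)$ fraction of the $\sim G^4/24$ candidate quadruples have pairwise non‑overlapping windows, so
\[
\mathbb{E}\,Z=(1-o(1))\frac{G^{4}}{24}\,G^{-2\bar L H_2(\mathbf p)}=\Theta\!\big(G^{\,4-2\bar L H_2(\mathbf p)}\big),
\]
and the hypothesis $\bar L<2/H_2(\mathbf p)$ makes the exponent strictly positive, so $\mathbb{E}\,Z\to\infty$ polynomially in $G$.

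\medskip
\noindent\textbf{Second moment.} Next I would bound $\mathbb{E}\,Z^2=\sum_{q,q'}\Pr[\mathbf 1_q=\mathbf 1_{q'}=1]$ over admissible quadruples. Pairs $(q,q')$ with eight pairwise non‑overlapping windows are independent and contribute $(1+o(1))(\mathbb{E}\,Z)^2$; the point is that the rest contribute $o((\mathbb{E}\,Z)^2)$. Classifying a pair by the number $j\ge1$ of windows it shares (and, more finely, by partial overlaps), it has $8-j$ free starting positions and equality constraints that group into components of joint probability $G^{-\kappa\bar L H_2(\mathbf p)}$ with $\kappa\ge2$ an integer; a short case check — the tightest instances being $j=4$ (so $q=q'$, contributing $\mathbb{E}\,Z=o((\mathbb{E}\,Z)^2)$ since $\mathbb{E}\,Z\to\infty$) and $j=2$ with the two shared windows forming one full repeat pair (exponent $6-3\bar L H_2(\mathbf p)$) — shows the exponent $8-j-\kappa\bar L H_2(\mathbf p)$ is always strictly below $8-4\bar L H_2(\mathbf p)$ precisely because $\bar L H_2(\mathbf p)<2$. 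Then the Paley--Zygmund inequality gives $\Pr[Z\ge1]\ge(\mathbb{E}\,Z)^2/\mathbb{E}\,Z^2=1-o(1)$, and on $\{Z\ge1\}$ the sequence contains an interleaved repeat of length $L$, which proves the lemma. (With probability $1-o(1)$ the four realizing positions are moreover $\Theta(G)$‑apart, so the connecting blocks $\bx,\by$ of Figure~\ref{fig:repeat} are long and generically distinct — the feature that makes reconstruction fail in the subsequent argument.)

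\medskip
\noindent\textbf{Main obstacle.} The delicate part is the second‑moment estimate: one must exhaust the ways two interleaved‑repeat quadruples can interact — coinciding windows, partially overlapping windows (which bring in string‑periodicity effects), and coincidences among the connecting blocks — and verify each time that the loss in free positions beats the gain in matching probability. This is exactly the combinatorial bookkeeping carried out, via Chen--Stein Poisson approximation, in the proof of Theorem~7 of \cite{Arr96} (see also \cite{DFS94}); invoking that analysis directly even yields the stronger statement that $Z$ is asymptotically Poisson with mean tending to infinity, whence $\Pr[Z\ge1]=1-o(1)$ at once.
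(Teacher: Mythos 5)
Your proposal is correct in outline but takes a genuinely different route from the paper. The paper never proves Lemma \ref{t:repeat} directly: it quotes it as a consequence of Theorem~7 of \cite{Arr96} (offering only the heuristic first-moment count of repeats in the main text), and the rigorous argument it does contain --- for the Markov analogue, Lemma \ref{l:markovConverse} in Appendix \ref{s:proofMarkovConverse} --- goes through a Chen--Stein Poisson approximation of the number of length-$L$ repeats (with a left-non-extendability condition to suppress clumping) combined with the Catalan-number fact that $k$ repeats in exchangeable positions avoid interleaving with probability $2^k/(k+1)!$. You instead count interleaved quadruples and apply Paley--Zygmund; your first moment is right, and your case check for pairs of quadruples sharing whole windows does close exactly when $\bar L H_2(\mathbf{p})<2$ (e.g.\ the shared-repeat-pair case has exponent $6-3\bar L H_2(\mathbf{p})$ against $8-4\bar L H_2(\mathbf{p})$). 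Two caveats: the claim that constraint components have probability $G^{-\kappa\bar L H_2(\mathbf{p})}$ with integer $\kappa\ge 2$ is not literally right --- a component of three coinciding windows has probability $\big(\sum_i p_i^3\big)^L$, and you need $\sum_i p_i^3\le\big(\sum_i p_i^2\big)^{3/2}$ to get exponent $\tfrac32\bar L H_2(\mathbf{p})$, which is exactly what saves the tight one-shared-window case --- and the partially overlapping windows (periodicity effects, handled in the paper's setting by Lemma \ref{lem self match iid}, part 2) are precisely the bookkeeping you defer to \cite{Arr96}, which is the same deferral the paper makes. What your route buys is elementarity: no Poisson approximation and no Catalan/exchangeability step, just a direct second moment on the ambiguity-creating configurations. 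What the Chen--Stein route buys is the Poisson law for the repeat count and a clean separation between ``many repeats'' and ``uniformly placed repeats interleave,'' structure the paper reuses verbatim for the Markov extension.
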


We give a heuristic argument for the lemma, following \cite{Arr96}. As shown below, the expected number of length-$L$ repeats in an i.i.d. sequence has a rather sharp transition, from almost none to many, as $\bar L$ decreases below $\frac2{H_2(\mathbf{p})}$. It turns out that the positions of the repeats are approximately uniformly distributed throughout the sequence, so if there are many repeats, then it is likely that at least two of them are interleaved. 

We proceed with computing the expected number of repeats. Denoting by $\s_i^L$ the length-$L$ subsequence starting at position $i$, we have
 \begin{equation}\label{e:expectedDuplicates1}
 \E[\text{\# of length $L$ repeats}] = \sum_{1\leq i<j\leq G} \P(\s_i^L=\s_j^L)\,.
 \end{equation}
 Now, the probability that two specific \emph{physically disjoint} length-$\ell$ subsequences are identical is
 $$\Big(\sum_i p_i^2\Big)^\ell= e^{-\ell H_2(\p)}\,,$$
where $H_2(\p)=-\log \big(\sum_i p_i^2\big)$ is the R\'enyi entropy of order 2.
 Ignoring the $GL$ terms in \eqref{e:expectedDuplicates1} where $\s_i^L$ and $\s_j^L$ overlap, we get a lower bound,
\begin{equation}
\label{eq:expdupl}
\E[\text{\# of repeats}] >  \left(\frac{G^2}2 - GL\right) \,e^{-L H_2(\p)} \approx \frac{G^2}2 \,e^{-L H_2(\p)} .
\end{equation}
This number approaches zero if $\bar{L}> 2/H_2(\p)$ and it approaches infinity if $\bar{L}< 2/H_2(\p)$. 

Hence, if $\bar{L}< 2/H_2(\p)$, then the probability of having two pairs of interleaved repeats is very high. Moreover, as a consequence of Lemma \ref{lem self match iid} in Appendix \ref{sec:basic_proofs}, the contribution of the terms in (\ref{e:expectedDuplicates1}) due to physically overlapping subsequences is not large, and so the lower bound in (\ref{eq:expdupl}) is essentially tight. This suggests that  $\bar{L} = 2/H_2(\p)$ is in fact {\em the} threshold for existence of interleaved repeats.

\noindent
{\bf Proof of Theorem \ref{t:mainCapacity}, Part 1:}
\begin{proof}
The probability of observing a sequence of reads $\br_1, \ldots, \br_N$  given a DNA sequence $\bs$ is:
$$ \P(\br_1,\ldots, \br_N| \bs) = \prod_{i=1}^N \P(\br_i|\bs) = \prod_{i=1}^N \frac{\mbox{$\#$ of occurrences of $\br_i$ in $\bs$}}{G}.$$

Now suppose the DNA sequence $\bs$ has two interleaved repeats of length $L-1$ as in Figure~\ref{fig:repeat} and let $\bs'$ be the sequence with the subsequences $\bx$ and $\by$ swapped. Then,  the number of occurrences of each read $\br_i$ in $\bs$ and $\bs'$ is the same and hence
$$ \P(\br_1, \ldots, \br_N|\bs) = \P(\br_1, \ldots, \br_N|\bs').$$
Moreover, $\P(\bs) = \P(\bs').$ Hence
$$ \P(\bs|\br_1, \ldots, \br_N) = \P(\bs'|\br_1, \ldots, \br_N).$$
Thus, the optimal MAP rule will have a probability of reconstruction error of at least $1/2$ conditional on the DNA sequence having interleaved repeats of length $L-1$, regardless of the number of reads. By Lemma \ref{t:repeat}, this latter event has probability approaching $1$ as $G \rightarrow \infty$ if $\bar{L} < 2/H_2({\bf p})$. Since $\epsilon < 1/2$, this implies that for sufficiently large $G$, $\Nmin (\epsilon, G,L) = \infty$, thus proving the result.

\end{proof}

Note that for any {\em fixed} read length $L$, the probability of the interleaved repeat event will approach $1$ as the DNA length $G \rightarrow \infty$. This means that if we had defined \fr for a fixed read length $L$, then for {\em any} value of $L$ \fr would have been $\infty$. Thus, to get a meaningful result, one must scale $L$ with $G$, and Lemma \ref{t:repeat} suggests that letting $L$ and $G$ grow while fixing $\bar{L}$ is the correct scaling.



\subsection{$\bar L > \frac{2}{H_2({\bf p})}$: Coverage-limited regime}
\label{sec:achieve}

In order to reconstruct the DNA sequence it is necessary to observe each of the nucleotides, i.e. the reads must cover the sequence (see Figure~\ref{fig:coverage}). Worse than the missing nucleotides, a gap in coverage also creates ambiguity in the order of the contiguous pieces. Thus, $\Ncov(\epsilon, G,L)$, the minimum number of reads to cover the entire DNA sequence with probability $1-\epsilon$, is a lower bound to $\Nmin(\epsilon,G,L)$, the minimum number of reads to reconstruct with probability $1-\epsilon$.
The paper of Lander and Waterman \cite{LW88} studied the coverage problem in the context of DNA sequencing, and from their results, one can deduce the following asymptotics for $\Ncov(\epsilon, G,L)$.

\begin{figure}
\begin{center}
\vspace{-0.5cm}
\includegraphics[scale=.5]{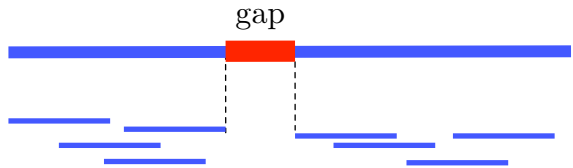}
\end{center}
\vspace{-.5cm}
\caption{The reads must cover the sequence.}
\label{fig:coverage}
\end{figure}

\begin{lemma}
\label{lem:cov_asym}
For any $\epsilon \in (0,1)$:
$$ \lim_{L,G \rightarrow \infty, L/\log G = \bar{L}} \frac{\Ncov(\eps, G,L)}{G/\bar{L}} = 1.$$
\end{lemma}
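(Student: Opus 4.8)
The plan is to determine $\Ncov(\eps,G,L)$ up to a $(1+o(1))$ factor by a first- and second-moment analysis of the uncovered positions, and then check that the resulting asymptotics equals $G/\bar L$ under the scaling $L=\bar L\log G$. Fix $N$ and let $W=W_N$ be the number of positions of the circular sequence covered by none of the $N$ reads. A position $x$ is uncovered exactly when none of the $N$ independent, uniformly located read start-points lands in the length-$L$ window $\{x-L+1,\dots,x\}$, so $\P(x\text{ uncovered})=(1-L/G)^N$ and $\E[W]=G(1-L/G)^N$. Perfect coverage is the event $\{W=0\}$, so $\Ncov(\eps,G,L)$ is the least $N$ with $\P(W\ge 1)\le\eps$. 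The strategy is to show, for every fixed $\delta>0$, that $(1-\delta)\,G/\bar L\le \Ncov(\eps,G,L)\le(1+\delta)\,G/\bar L$ for all large $G$, and then let $\delta\downarrow0$.

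For the upper bound, take $N=\lceil(1+\delta)\,G/\bar L\rceil$, so that $NL/G=(1+\delta)L/\bar L=(1+\delta)\log G$. Using $1-L/G\le e^{-L/G}$ gives $\E[W]\le G\,e^{-NL/G}\to0$, so Markov's inequality yields $\P(W\ge1)\le\E[W]<\eps$ for all large $G$; hence $\Ncov\le(1+\delta)\,G/\bar L$, and $\limsup_G \Ncov/(G/\bar L)\le1$.

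For the lower bound --- the more delicate half --- take $N=\lfloor(1-\delta)\,G/\bar L\rfloor$; now $\E[W]\to\infty$, but that alone does not preclude $W=0$, so I control $\mathrm{Var}(W)$. Writing $W=\sum_x\mathbf{1}\{x\text{ uncovered}\}$, I split the pairs $(x,y)$, $x\ne y$: for the $\le 2GL$ pairs whose length-$L$ windows overlap, $\P(x,y\text{ both uncovered})\le(1-L/G)^N=\E[W]/G$, contributing at most $2L\,\E[W]$; for the remaining (disjoint) pairs, $\P(x,y\text{ both uncovered})=(1-2L/G)^N\le\big((1-L/G)^N\big)^2$, so summing over all $\le G^2$ such pairs contributes at most $\E[W]^2$. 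Adding the diagonal term gives $\E[W^2]\le\E[W]^2+(2L+1)\,\E[W]$, i.e.\ $\mathrm{Var}(W)\le(2L+1)\,\E[W]$, and Chebyshev's inequality yields $\P(W=0)\le\mathrm{Var}(W)/\E[W]^2\le(2L+1)/\E[W]\to0$, because under $L=\bar L\log G$ one checks that $\E[W]=G(1-L/G)^N$ grows polynomially in $G$ while $L$ grows only logarithmically. Hence $\P(W\ge1)\to1>\eps$, so $\Ncov>(1-\delta)\,G/\bar L$ for all large $G$, giving $\liminf_G \Ncov/(G/\bar L)\ge1$. Combining the two bounds and letting $\delta\downarrow0$ proves the claim.

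I expect the lower bound to be the main obstacle: the first moment shows that many positions are uncovered \emph{in expectation} just below the critical number of reads, but one must still rule out the possibility that the randomly placed reads nonetheless cover everything. The second-moment estimate above does this; the key structural point is that ``$x$ uncovered'' and ``$y$ uncovered'' are appreciably positively correlated only for the $O(L)$ values of $y$ within distance $L$ of $x$, so these correlations contribute only the lower-order term $L\,\E[W]$ to the variance. (Equivalently, one could invoke the Lander--Waterman Poisson approximation for the number of coverage gaps, but the self-contained second-moment argument above suffices.) The rest is routine: handling wrap-around on the circular sequence, and verifying that the corrections from $\ln(1-L/G)=-L/G+O(L^2/G^2)$ and the sub-polynomial factors in $\E[W]$ are absorbed into the $o(1)$, so that the critical read count is $(1+o(1))\,G/\bar L$ uniformly over $\eps\in(0,1)$.
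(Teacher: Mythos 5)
Your route is genuinely different from the paper's: the paper does not actually prove this lemma, it invokes the coupon-collector/Lander--Waterman result of \cite{LW88} and supports it with a Poisson heuristic for the number of gaps, whereas you give a self-contained first/second-moment argument on the number $W$ of uncovered positions. Structurally your argument is the right one: Markov's inequality for the upper bound, and for the lower bound the variance decomposition into overlapping-window pairs (at most $2GL$ of them, contributing $\le 2L\,\E[W]$) and disjoint-window pairs (using $(1-2L/G)\le(1-L/G)^2$, contributing $\le \E[W]^2$), giving $\mathrm{Var}(W)\le (2L+1)\E[W]$ and $\P(W=0)\le (2L+1)/\E[W]$. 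This is a cleaner, more rigorous treatment than the paper's own, and it correctly isolates the only thing that must be checked: that $\E[W]\gg L$ just below the claimed threshold.

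That one check, however, is exactly where your write-up has a gap, and as stated it fails under the paper's declared convention. The paper defines $\bar L = L/\log G$ with $\log$ to base $2$, so $G/\bar L = G\log_2 G/L$ and taking $N=\lfloor(1-\delta)G/\bar L\rfloor$ gives $NL/G=(1-\delta)\log_2 G=\frac{1-\delta}{\ln 2}\ln G$, which exceeds $\ln G$ whenever $\delta<1-\ln 2\approx 0.31$. Then $\E[W]\approx G e^{-NL/G}=G^{1-(1-\delta)/\ln 2}\to 0$: far from ``growing polynomially,'' the expected number of uncovered sites vanishes, Chebyshev gives nothing, and in fact the genome is covered w.h.p.\ with these $N$ reads. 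This reflects the fact that with base-$2$ logs the true limit of $\Ncov/(G/\bar L)$ is $\ln 2$ (since $\Ncov\sim\frac{G}{L}\ln\frac{G}{L\eps}\sim\frac{G\ln G}{L}$), so no proof can establish the value $1$ under that reading; the lemma, the paper's own heuristic ``$Ne^{-\lambda L}$ vanishes iff $N>G/\bar L$,'' and your computation are all consistent only with the normalization $\bar L=L/\ln G$, under which $\E[W]\approx G^{\delta}\gg L$ and your argument goes through verbatim. So you need to either state that convention explicitly (or equivalently target the Lander--Waterman value $\frac{G}{L}\ln\frac{G}{L\eps}$ and track the $\ln 2$), and in any case carry out the one-line computation of $\E[W]$ rather than asserting ``one checks''---it is the step on which the whole lower bound turns.
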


A standard coupon collector-style argument proves this lemma in \cite{LW88}. An intuitive justification of the lemma, which will be useful in the sequel, is as follows. To a very good approximation, the starting locations of the reads are given according to a Poisson process with rate $\lam=N/G$, and thus each spacing has an exponential$(\lam)$ distribution. Hence, the probability that there is a gap between two successive reads is approximately $e^{-\lam L}$. Hence, the expected number of gaps is approximately:
$$ N e^{-\lambda L}. $$
Asymptotically, this quantity is bounded away from zero if $N < G/\bar{L}$, and approaches zero otherwise.




We show that for $\bar{L} > 2/H_2(\bf p)$, a simple greedy algorithm (perhaps surprisingly) attains the coverage lower bound. Essentially, the greedy algorithm merges the reads repeatedly into {\em contigs}\footnote{Here, a contig means a contiguous fragment formed by overlapping sequenced reads.}, and the merging is done greedily, according to an overlap score defined on pairs of strings. For a given score the algorithm is as follows.

\paragraph{Greedy Algorithm:}

Input: $\RR$, the set of reads of length $L$.
\begin{enumerate}
\item Initialize the set of contigs as the given reads.
\item Find two contigs with largest overlap score, breaking ties arbitrarily, and merge them into one contig.
\item Repeat Step 2 until only one contig remains.
\end{enumerate}
For the i.i.d. DNA model and noiseless reads, we use the overlap score $W({\bf s_1},{\bf s_2}) $, defined as the length of the longest suffix of ${\bf s_1}$ identical to a prefix of ${\bf s_2}$.

Showing optimality of the greedy algorithm entails showing that if the reads cover the DNA sequence and there are no repeats of length $L$, then the greedy algorithm can reconstruct the DNA sequence. In the remainder of this subsection we heuristically explain the result, and we give a detailed proof in Appendix\ref{sec:basic_proofs}.

Since the greedy algorithm merges reads according to overlap score, we may think of the algorithm as working in stages, starting with an overlap score of $L$ down to an overlap  score of $0$. At stage $\ell$, the merging is between contigs with overlap score $\ell$. The key is to find the typical stage at which the {\em first} error in merging occurs. Assuming no errors have occurred in stages $L,L-1, \ldots, \ell+1$, consider the situation in stage $\ell$, as depicted in Figure~\ref{fig:contigs}. The algorithm has already merged the reads into a number of contigs. The boundary between two neighboring contigs is where the overlap between the neighboring reads is less than or equal to $\ell$; if it were larger than $\ell$, the two contigs would have been merged already. Hence, the expected number of contigs at stage $\ell$ is the expected number of pairs of successive reads with spacing greater than $L-\ell$. Again invoking the Poisson approximation, this is roughly equal to
$$Ne^{-\lambda(L-\ell)},$$
where $\lambda = N/G $.

\begin{figure}
\begin{center}
\includegraphics[width=5in]{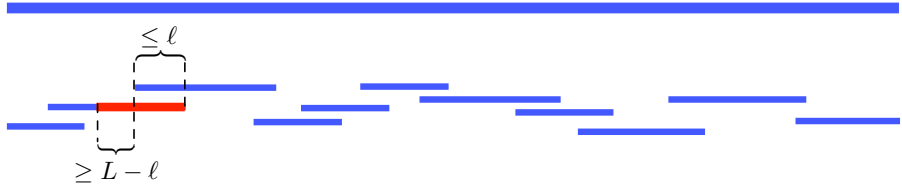}
\end{center}
\vspace{-.6cm}
\caption{The greedy algorithm merges reads into contigs according to the amount of overlap. At stage $\ell$ the algorithm has already merged all reads with overlap greater than $\ell$. The red segment denotes a read at the boundary of two contigs; the neighboring read must be offset by at least $L-\ell$.}
\label{fig:contigs}
\end{figure}

 Two contigs will be merged in error in stage $\ell$ if the length $\ell$ suffix of one contig equals the length $\ell$ prefix of another contig from a different location. Assuming these substrings are physically disjoint, the probability of this event is:
 $$ 2^{-\ell H_2({\bf p})}.$$
 Hence, the expected number of pairs of contigs for which this confusion event happens is approximately:
 \begin{equation}
 \label{eq:pairs}
  \left[Ne^{-\lambda(L-\ell)}\right]^2 \cdot 2^{-\ell H_2({\bf p})}\,.
 \end{equation}
 This number is largest either when $\ell = L$ or $\ell = 0$. This suggests that, typically, errors occurs in stage $L$ or stage $0$ of the algorithm.
 Errors occur at stage $L$ if there are repeats of length $L$ substrings in the DNA sequence. Errors occur at stage $0$ if there are still leftover unmerged contigs. The no-repeat condition ensures that the probability of the former event is small and the coverage condition ensures that the probability of the latter event is small. Hence, the two necessary conditions are also sufficient for reconstruction.

\subsection{Performance of Existing Algorithms}

The greedy algorithm was used by several of the most widely used genome assemblers for Sanger data, such as phrap, TIGR Assembler~\cite{Sut95} and CAP3~\cite{HM99}. More recent software aimed at assembling short-read sequencing data uses different algorithms. We will evaluate the normalized coverage depth of some of these algorithms on our basic statistical model and compare them to the information theoretic limit. The goal is not to compare between different algorithms; that would have been unfair since they are mainly designed for more complex scenarios including noisy reads and repeats in the DNA sequence. Rather, the aim is to illustrate our information theoretic framework and make some contact with existing assembly algorithm literature.

\subsubsection{Sequential Algorithm}

By merging reads with the largest overlap first, the greedy algorithm discussed above effectively grows the contigs {\em in parallel}. An alternative greedy strategy, used by software like
SSAKE \cite{War07}, VCAKE \cite{Jec07} and SHARCGS \cite{Doh07}, grows one contig {\em sequentially}. An unassembled read is chosen to start a contig, which is then repeatedly extended (say towards the right) by identifying reads that have the largest overlap with the contig until no more extension is possible. The algorithm succeeds if the final contig is the original DNA sequence.

The following proposition gives the normalized coverage depth of this algorithm.

\begin{proposition}\label{t:sequentialRate}
\Fr for the sequential algorithm is $\rate_{\text{seq}} (\bar L) = {\bar LH_2(\p) \ln 2\over \bar L H_2(\p) - 1}$ if $\bar{L} > 2/H_2(\mathbf{p})$.
\end{proposition}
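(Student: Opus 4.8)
The plan is to locate the threshold number of reads at which the sequential algorithm's probability of exact reconstruction jumps from $0$ to $1$, and then divide by $\Ncov\sim G/\bar L$ from Lemma~\ref{lem:cov_asym}. Label the reads by their starting positions $t_1\le\cdots\le t_N$ and let $d_i=t_{i+1}-t_i$ denote the (cyclic) spacings, so $\sum_i d_i=G$. Growing a single contig rightward from an arbitrary seed read, the algorithm outputs $\bs$ exactly if and only if (a) the reads cover $\bs$, i.e.\ $d_i<L$ for every $i$, and (b) at each extension step the correct read is selected. When the current contig ends with read $\br_i$, its length-$\ell$ suffix (for $\ell\le L$) equals the length-$\ell$ suffix of $\br_i$, the true next read $\br_{i+1}$ has overlap score $W(\br_i,\br_{i+1})=L-d_i$, and every other not-yet-used read physically overlapping $\br_i$ has strictly smaller overlap score with $\br_i$; hence an incorrect merge occurs if and only if some not-yet-used read $\br_j$ physically disjoint from $\br_i$ satisfies $W(\br_i,\br_j)\ge L-d_i$ (reads physically overlapping $\br_i$, and self-overlaps of $\br_i$, contribute only lower-order corrections, to be bounded via Lemma~\ref{lem self match iid}). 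So reconstruction fails exactly when there is a coverage gap or the event $B_i:=\{\exists\ \text{unused disjoint}\ \br_j:\ W(\br_i,\br_j)\ge L-d_i\}$ occurs for some $i$.

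Next I would carry out a first-moment estimate of $\sum_i\P(B_i)$. For two physically disjoint length-$L$ strings, $\P(W(\br_i,\br_j)\ge m)=\Theta(q^{m})$ with $q:=\sum_k p_k^2=2^{-H_2(\p)}$. Invoking the Lander--Waterman Poisson approximation (read starts form a Poisson process of rate $\lambda=N/G$, so the $d_i$ are essentially i.i.d.\ $\mathrm{Exponential}(\lambda)$) and union-bounding over the $\Theta(N)$ unused reads, $\P(B_i\mid d_i)\approx 1-\exp(-Nq^{\,L-d_i})$, so that $\sum_i\P(B_i)$ is, up to a constant factor, $N\cdot\E_d\!\left[\min\{1,\,Nq^{\,L-d}\};\ d\le L\right]$ (the number of unused reads drops from $\sim N$ to $1$ over the run, which only changes the $\Theta(1)$ prefactor). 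Splitting this expectation at $d^\star:=L-\tfrac{\log N}{H_2(\p)}$, the value where $Nq^{\,L-d}=1$, and using that $\beta:=\ln(1/q)=H_2(\p)\ln 2$ exceeds $\lambda$ in the relevant range — so the tail integral concentrates near $d^\star$ — both pieces are $\Theta(e^{-\lambda d^\star})$, whence $\sum_i\P(B_i)=\Theta(Ne^{-\lambda d^\star})$, which dominates the expected number of coverage gaps $Ne^{-\lambda L}$. Writing $c:=NL/G$ for the coverage depth and using $L=\bar L\log G$ together with $\log N=(1+o(1))\log G$, one has $\lambda d^\star=c\bigl(1-\tfrac{1}{\bar L H_2(\p)}\bigr)(1+o(1))$ and $\ln N=(1+o(1))\ln G$, so $\sum_i\P(B_i)\to0$ (resp.\ $\to\infty$) according as $c$ lies above (resp.\ below) $\frac{\bar L H_2(\p)}{\bar L H_2(\p)-1}\ln G$. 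Hence the threshold number of reads is $N_{\mathrm{seq}}\sim\frac{G}{L}\cdot\frac{\bar L H_2(\p)}{\bar L H_2(\p)-1}\ln G$; dividing by $\Ncov\sim G/\bar L$ and using $\bar L\ln G/L=\ln 2$ gives $\rate_{\text{seq}}(\bar L)=\frac{\bar L H_2(\p)\ln 2}{\bar L H_2(\p)-1}$. As an internal check, $\beta>\lambda$ at this threshold is equivalent to $\bar L H_2(\p)>2$, which is precisely the standing hypothesis and also makes the denominator positive.

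Converting the heuristic into a proof needs three ingredients. First, the Poisson approximation for the spacings must be made rigorous with explicit error control (by Poissonizing the read count, as in the proof of Lemma~\ref{lem:cov_asym}), so that the manipulations of $\E_d[\cdot]$ and of $\max_i d_i$ are legitimate. Second, for the ``$c$ above threshold $\Rightarrow$ success'' direction one applies Markov's inequality to $\sum_i\P(B_i)$, after harmlessly replacing ``unused reads'' by ``all reads'' and after absorbing the contributions of nearby reads and of periodicities in $\bs$ into lower-order terms via Lemma~\ref{lem self match iid}; combined with $\P(\text{coverage gap})\to0$ (which holds since $\tfrac{\bar L H_2(\p)}{\bar L H_2(\p)-1}>1$) this gives success with high probability. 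Third — and this is the main obstacle — for the ``$c$ below threshold $\Rightarrow$ failure'' direction one needs a second-moment (Paley--Zygmund) argument showing $\P(\bigcup_i B_i)\to1$ when $\sum_i\P(B_i)\to\infty$: restricting to frontier indices $i$ pairwise far apart on the circle and to confusing partners $j$ far from all of them makes the governing substrings disjoint and the events $B_i$ nearly independent, so that $\mathrm{Var}(\sum_i\mathbf 1_{B_i})=o\bigl((\sum_i\P(B_i))^2\bigr)$; carefully discounting the rare correlated configurations (one read confusing two frontiers, a frontier read confusing another frontier, overlapping windows) is the delicate part, and I expect this estimate to be the crux of the argument.
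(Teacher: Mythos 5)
Your calculation is essentially the paper's own argument: the paper likewise gives only a heuristic first-moment computation, counting roughly $N^2 e^{-\lambda(L-\ell)}2^{-\ell H_2(\mathbf{p})}$ potential erroneous merges at overlap $\ell\in[L-\lambda^{-1}\ln N,\,L]$ and locating the threshold at the endpoint $\ell=L-\lambda^{-1}\ln N$, which is exactly your split at $d^\star$ (your frontier-read/union-bound setup, the dominance check $\beta>\lambda\Leftrightarrow \bar L H_2(\mathbf{p})>2$, and the final normalization all match, giving the same formula). The extra machinery you outline for a rigorous version — Poissonization of spacings, a first-moment/Markov bound for success, and a Paley--Zygmund second-moment argument for failure — goes beyond the paper, which states explicitly that only a heuristic argument is given for this proposition.
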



\begin{figure}\centering
\begin{pspicture}(-1,-1)(5,5)
\psline{->}(0,0)(5,0)
\psline{->}(0,0)(0,5)
\rput[t](4.8,-.1){$\bar L $}
\psline[linecolor=red,linestyle=dashed,linewidth=.05](1.25,1)(1.25,5)
\psline[linecolor=red,linewidth=.05](1.25,1)(5,1)
\psline[linecolor=blue,linewidth=.05](1.25,1.75)(5,1.75)
\psline[linecolor=black,linewidth=.05](1.25,2.5)(5,2.5)
\rput[r](-.1,4.8){$\capacity$}
\rput[t](1.25,-.1){$2\over H_2(\mathbf{p})$}
\rput[r](-.1,1){1}
\rput[r](-.1,1.75){$\rate_{\text{seq}} $}
\rput[r](-.1,2.5){$\rate_{K-\text{mer}} $}
\end{pspicture}
\caption{\Fr obtained by the sequential algorithm is in the middle, given by $\rate_{\text{seq}} = {\bar LH_2(\p) \ln 2 \over \bar L H_2(\p) - 1}$ , \fr obtained by the $K$-mers based algorithm is at top, given by $\rate_{K-\text{mer}} = {\bar LH_2(\p)  \over \bar L H_2(\p) - 2}$.}
\label{fig:plotSeqKmers}
\end{figure}

%
%
%
%

The result is plotted in Fig. \ref{fig:plotSeqKmers}. The performance is strictly worse than that of the greedy algorithm. We give only a heuristic argument for Proposition~\ref{t:sequentialRate}.

 Motivated by the discussion in the previous section, we seek the typical overlap $\ell$ at which the first error occurs in merging a read; unlike the greedy algorithm, where this overlap corresponds to a specific stage of the algorithm, for the sequential algorithm this error can occur anytime between the first and last merging.

Let us compute the expected number of pairs of reads which can be merged in error at overlap $\ell$. To begin, a read has the potential to be merged to an incorrect successor at overlap $\ell$ if it has overlap less than or equal to $\ell$ with its true successor, since otherwise the sequential algorithm discovers the read's true successor. By the Poisson approximation, there are roughly $Ne^{-\lam(L-\ell)}$ reads with physical overlap less than or equal to $\ell$ with their successors. In particular, if $\ell < L-\lam\inv\ln N$ there will be no such reads, and so we may assume that $\ell$ lies between $L-\lam\inv \ln N$ and $L$.

Note furthermore that in order for an error to occur, the second read must not yet have been merged when the algorithm encounters the first read, and thus the second read must be positioned later in the sequence. This adds a factor one-half. Combining this reasoning with the preceding paragraph, we see that there are approximately $$\sfrac12N^2e^{-\lam(L-\ell)}$$ pairs of reads which may potentially be merged incorrectly at overlap $\ell$.

For such a pair, an erroneous merging actually occurs if the length-$\ell$ suffix of the first read equals the length-$\ell$ prefix of the second. Assuming (as in the greedy algorithm calculation) that these substrings are physically disjoint, the probability of this event is $2^{-\ell H_2(\p)}$.

The expected number of pairs of reads which are merged in error at overlap $\ell$, for $L-\lam\inv \ln N\leq \ell \leq L$, is thus approximately
\begin{equation}
\label{e:seqExp}
N^2e^{-\lam(L-\ell)}2^{-\ell H_2(\p)}\,.
\end{equation}
This number is largest when $\ell=L$ or $\ell = L-\lam\inv \ln N$, so the expression in \eqref{e:seqExp}
approaches zero if and only if ${N \over \Ncov} > {\bar LH_2(\p) \ln 2 \over \bar L H_2(\p) - 1}$  and $\bar{L}>2/H_2(\p)$, as in Proposition~\ref{t:sequentialRate}.

\subsubsection{$K$-mer based Algorithms}

Due to complexity considerations, many  recent assembly algorithms operate on $K$-mers instead of directly on the reads themselves. $K$-mers are length $K$ subsequences of the reads; from each read, one can generate $L-K+1$ $K$-mers. One of the early works which pioneer this approach is the sort-and-extend technique in ARACHNE \cite{Bat02}. By lexicographically sorting the set of all the $K$-mers generated from the collection of reads, identical $K$-mers from physically overlapping reads will be adjacent to each other. This enables the overlap relation between the reads (so called overlap graph)  to be computed in $O(N \log N)$ time (time to sort the set of $K$-mers) as opposed to the $O(N^2)$ time needed if pairwise comparisons between the reads were done. Another related approach is the De Brujin graph approach \cite{IW95,PTW01}. In this approach, the $K$-mers are represented as vertices of a De Brujin graph and there is an edge between two vertices if they represent adjacent $K$-mers in some read (here adjacency means their positions are offset by one). The DNA sequence reconstruction problem is then formulated as computing an Eulerian cycle traversing all the edges of the De Brujin graph.

The performance of these algorithms on the basic statistical model can be analyzed by observing that two conditions must be satisfied for them to work.

First, $K$ should be chosen such that with high probability, $K$-mers from physically disjoint parts of the DNA sequence should be distinct, i.e. there are no repeats of length $K$ subsequences in the DNA sequence. In the sort-and-extend technique, this will ensure that two identical adjacent $K$-mers in the sorted list belong to two physically overlapping reads rather than two physically disjoint reads. In the De Brujin graph approach, this will ensure that the Eulerian path will be connecting $K$-mers that are physically overlapping. This minimum $K$ can be calculated as we did to justify Lemma \ref{t:repeat}:
\begin{equation}
\label{eq:min_K}
\frac{K}{\log G} > \frac{2}{H_2({\bf p})}\,.
\end{equation}

Second, all successive reads should have physical overlap of at least $K$ base pairs. This is needed so that the reads can be assembled via the $K$-mers. According to the Poisson approximation, the expected number of successive reads with spacing greater than $L-K$ base pairs is roughly $N e^{-\lambda (L-K)}$. To ensure that with high probability all successive reads have overlap at least $K$ base pairs, this expected number should be small, i.e.
\begin{equation}
N > \frac{G\ln N }{L-K} \approx \frac{G \ln G }{L-K}\,.
\end{equation}
Substituting Eq. (\ref{eq:min_K}) into this and using the definition $\bar{L} = L/\log G$, we obtain
$$ \frac{N}{\Ncov} >  {\bar L H_2(\p) \over \bar L H_2(\p) -2}\,.$$

\Fr of this algorithm is plotted in Figure \ref{fig:plotSeqKmers}. Note that the  performance the $K$-mer based algorithms is strictly less than the performance achieved by the greedy algorithm. The reason is that for $\bar{L} > 2/H_2({\bf p})$, while the greedy algorithm only requires the reads to cover the DNA sequence, the $K$-mer based algorithms need more, that successive reads have (normalized) overlap at least $2/H_2({\bf p})$.

\subsection{Complexity of the Greedy Algorithm}

A naive implementation of the greedy algorithm would require an all-to-all pairwise comparison between all the reads. This would require a complexity of $O(N^2)$ comparisons. For $N$ in the order of tens of millions, this is not acceptable.  However, drawing inspiration from the sort-and-extend technique discussed in the previous section, a more clever implementation would yield a complexity of $O(LN\log N)$. Since $L \ll N$, this is a much more efficient implementation. Recall that in stage $\ell$ of the greedy algorithm, successive reads with overlap $\ell$ are considered. Instead of doing many pairwise comparisons to obtain such reads, one can simply extract all the $\ell$-mers from the reads and perform a sort-and-extend to find all the reads with overlap $\ell$. Since we have to apply sort-and-extend in each stage of the algorithm, the total complexity is $O(LN \log N)$.

An idea similar to this and resulting in the same complexity was described by Turner \cite{Tur89} (in the context of the shortest common superstring problem), with the sorting effectively replaced with a suffix tree data structure. Ukkonen \cite{Ukk90} used a more sophisticated data structure, which essentially computes overlaps between strings in parallel, to reduce the complexity to $O(NL)$.

\section{Markov DNA Model}

\label{sec:source}

In this section we extend the results for the basic i.i.d. DNA sequence model to a Markov sequence model.

\subsection{Formulation and Result}
The problem formulation is identical to the one in Section \ref{sec:formulation} except that we assume the DNA sequence is correlated and model it by a Markov source with transition matrix $Q=[q_{ij}]_{i,j\in \{1,2,3,4\}}$, where $q_{ij}=\P(S_k = i | S_{k-1}=j)$.

\begin{remark}
We assume that the DNA is a Markov process of order 1, but the result can be generalized to  Markov processes of order $m$ as long as $m$ is constant and does not grow with $G$.
\end{remark}

In the basic i.i.d. model, we observed that \fr depends on the DNA statistics through the R\'enyi entropy of order~2. We prove that a similar dependency holds for Markov models. In \cite{RAL01}, it is shown that the R\'enyi entropy {\em rate} of order~2 for a stationary ergodic Markov source with transition matrix $Q$ is given by
$$ H_2(Q) := \log \left({1\over \rho_{\max}(\bar Q)}\right),$$
where $\rho_{\max} (\bar Q) \triangleq \max\{ | \rho | : \rho ~\text{eigenvalue of}~ \bar Q\},$ and  $\bar Q = [q_{ij}^2]_{i,j\in \{1,2,3,4\}}$. In terms of this quantity, we state the following theorem.

\begin{theorem}\label{t:markovCapacity}
\Fr of a stationary ergodic Markov DNA sequence  is given by
\begin{equation}
\label{eq:markovCapacity}
\capacity (\bar L)=
\begin{cases}
\infty & \text{if} ~ \bar{L}  < 2/ H_2(Q) ,\\
1 & \text{if} ~ \bar{L} > 2/ H_2(Q).
\end{cases}
\end{equation}
\end{theorem}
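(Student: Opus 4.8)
The plan is to follow the two-part structure of the proof of Theorem~\ref{t:mainCapacity}, with the scalar $\sum_i p_i^2 = 2^{-H_2(\p)}$ systematically replaced by the squared transition matrix $\bar Q = [q_{ij}^2]$ and its Perron eigenvalue $\rho_{\max}(\bar Q) = 2^{-H_2(Q)}$. The unifying point is that, exactly as in the i.i.d. case, one quantity controls both the repeat statistics (needed for the converse) and the spurious-overlap statistics (needed for achievability); here that quantity is $\rho_{\max}(\bar Q)$. Note that $\bar Q$ has the same support as $Q$, hence inherits irreducibility, so standard Perron--Frobenius theory applies to it.

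\emph{Part 1 ($\bar L < 2/H_2(Q)$).} First I would establish a Markov analogue of Lemma~\ref{t:repeat}: a stationary ergodic Markov DNA sequence contains two interleaved length-$L$ repeats with probability $1-o(1)$ when $\bar L < 2/H_2(Q)$. For the first moment, the probability that two physically disjoint length-$L$ windows agree equals, up to lower-order corrections, $\sum_{\mathbf a}\pi_{a_1}^2\prod_{k=2}^L q_{a_k a_{k-1}}^2$, which by Perron--Frobenius applied to $\bar Q$ is $\Theta(\rho_{\max}(\bar Q)^L)=\Theta(2^{-LH_2(Q)})$; summing over the $\Theta(G^2)$ disjoint pairs gives an expected repeat count of $\Theta(G^2 2^{-LH_2(Q)})$, which diverges precisely when $\bar L < 2/H_2(Q)$. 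A Markov version of Lemma~\ref{lem self match iid} is needed to show the physically overlapping pairs do not dominate. A second-moment argument --- using the spectral gap of $\bar Q$ to bound the correlations between the agreement events over different pairs --- then yields concentration and shows the repeats are spread out enough that two are interleaved with high probability; I expect this to be the most delicate step. Given the lemma, the rest is identical to the i.i.d. argument: a length-$L$ repeat contains a length-$(L-1)$ one, so if $\bs$ has interleaved length-$(L-1)$ repeats and $\bs'$ swaps the two intervening blocks $\bx$ and $\by$, then every read occurs equally often in $\bs$ and $\bs'$, giving $\P(\RR\mid\bs)=\P(\RR\mid\bs')$; and $\P(\bs)=\P(\bs')$ still holds in the Markov model because the swap preserves the multiset of one-step transitions (the two copies of the length-$(L-1)$ repeat end in the same symbol, so the transition leaving each block comes from the same pair, and symmetrically for the symbol entering the following repeat block). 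Hence the MAP error probability is at least $1/2$ on an event of probability $1-o(1)$, so $\Nmin=\infty$ for large $G$ since $\epsilon<1/2$.

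\emph{Part 2 ($\bar L > 2/H_2(Q)$).} Lemma~\ref{lem:cov_asym} depends only on read locations and carries over verbatim, so $\Ncov$ has the same asymptotics and it suffices to show the greedy algorithm reconstructs $\bs$ whenever the reads cover it and $\bs$ has no length-$L$ repeat. The stage-$\ell$ analysis of Section~\ref{sec:achieve} is unchanged except for one input: the probability that the length-$\ell$ suffix of one contig coincides with the length-$\ell$ prefix of a physically disjoint contig becomes $\Theta(\rho_{\max}(\bar Q)^\ell)=\Theta(2^{-\ell H_2(Q)})$, by the same Perron--Frobenius estimate (noting that disjoint substrings of a Markov chain are only conditionally independent given the boundary symbols, which affects only constants). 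The expected number of erroneous merges at stage $\ell$ is then $\Theta([Ne^{-\lambda(L-\ell)}]^2\, 2^{-\ell H_2(Q)})$, maximized at $\ell=L$ or $\ell=0$; the no-repeat hypothesis kills the $\ell=L$ term and coverage kills the $\ell=0$ term, while the no-repeat event itself has probability $1-o(1)$ once $\bar L > 2/H_2(Q)$, by the first-moment half of the Markov repeat lemma. Making this rigorous requires the same union-bound and second-moment estimates as in Appendix~\ref{sec:basic_proofs}, with $\sum_i p_i^2$ replaced throughout by $\bar Q$ and $\rho_{\max}(\bar Q)$; the one genuinely new ingredient is a two-sided bound on the length-$\ell$ transition-type probabilities attached to $\bar Q$, which is immediate from Perron--Frobenius once irreducibility (and aperiodicity, or a harmless periodic correction) is noted. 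Combining the two parts gives \eqref{eq:markovCapacity}.
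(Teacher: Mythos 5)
Your overall architecture coincides with the paper's: the converse via interleaved repeats plus the MAP symmetry argument, and achievability by rerunning the greedy analysis with $2^{-\ell H_2(\p)}$ replaced by $O\bigl(\rho_{\max}(\bar Q)^{\ell}\bigr)$ for physically disjoint windows and $O\bigl(\rho_{\max}(\bar Q)^{\ell/2}\bigr)$ for overlapping ones; the latter two bounds are exactly the paper's Lemma~\ref{lem self match markov}, whose overlapping case needs a genuine argument (a periodic decomposition of the overlapped stretch plus Cauchy--Schwarz), not just ``the same estimates with $\sum_i p_i^2$ replaced by $\bar Q$,'' but it is of the same nature as Lemma~\ref{lem self match iid} and your outline correctly identifies where it is used. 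Your explicit check that $\P(\bs)=\P(\bs')$ survives the Markov model because the swap preserves the multiset of one-step transitions (the flanking length-$(L-1)$ repeats force matching boundary symbols) is a point the paper leaves implicit, and it is correct and needed.

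The genuine gap is in Part 1, exactly where you defer: proving the Markov analogue of Lemma~\ref{t:repeat}, i.e.\ Lemma~\ref{l:markovConverse}. Your plan establishes, by first and second moments, that the number of length-$L$ repeats diverges and concentrates when $\bar L < 2/H_2(Q)$, but the converse needs two \emph{interleaved} pairs of repeats, and ``many repeats whose positions are roughly uniform, hence some pair is interleaved'' is the statement to be proved, not a proof. A second moment on the repeat count gives existence of repeats only; to get interleaving you must either run the second-moment (or inclusion--exclusion) argument directly on the count of interleaved quadruples of positions --- which requires controlling correlations among four Markov windows, including clumping of overlapping repeat occurrences, and is not sketched --- or follow the paper's route: partition the sequence into length-$L$ blocks separated by spacer symbols, condition on the spacers to decouple the block-pair indicators, apply the Chen--Stein Poisson approximation (Theorem~\ref{th:chen-stein}) to the repeat count $W$, and then use the exact combinatorial fact that, given $k$ repeats at exchangeable positions, the probability of \emph{no} interleaving is $2^k/(k+1)!$ (Catalan numbers), so that the no-interleaving probability is at most $d_{\text{TV}}$-error plus $e^{-\theta+2\sqrt{2\theta}}$ with $\theta=\Theta\bigl(G^2\rho_{\max}(\bar Q)^L\bigr)\to\infty$. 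Some such mechanism for converting ``many repeats'' into ``interleaved repeats with probability $1-o(1)$'' is the technical heart of the Markov converse; without it your Part~1 is incomplete, while the remainder of your outline would go through.
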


\subsection{Sketch of Proof}
Similar to the i.i.d. case, it suffices to show the following statements:
\begin{enumerate}
\item If $\bar L < {2 \over H_2(Q)}$, $\Nmin(\epsilon,G,L) = \infty$ for sufficiently large $G$.
\item If $\bar L > {2 \over H_2(Q)}$, then  $\capacity =1 $.
\end{enumerate}

The following lemma is the analogue of Lemma \ref{t:repeat} for the Markov case, and is used similarly to prove statement $1$.

\begin{lemma}\label{l:markovConverse}
If $\bar{L}  < 2/H_2(Q) $, then a Markov DNA sequence contains interleaved repeats with probability $1-o(1)$.
\end{lemma}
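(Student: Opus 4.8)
The plan is to mirror the two-stage argument behind Lemma~\ref{t:repeat}, keeping careful track of the extra correlations a Markov source introduces: (i) a first-moment computation showing that the expected number of length-$L$ repeats diverges exactly when $\bar L<2/H_2(Q)$; (ii) a second-moment (or Chen--Stein Poisson) argument promoting this to existence of repeats with probability $1-o(1)$ and to an approximately Poisson law for their locations; and (iii) the combinatorial observation that many, roughly uniformly scattered repeats force two of them to be interleaved. Steps (i)--(iii) give Lemma~\ref{l:markovConverse} (interleaved repeats of length $L$, hence of length $L-1$ by truncation, with probability $1-o(1)$), and then Part~1 of the proof of Theorem~\ref{t:mainCapacity} applies verbatim to establish statement~1 for the Markov model.

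\emph{First moment.} Write $\E[\text{\# of length $L$ repeats}]=\sum_{1\le i<j\le G}\P(\s_i^L=\s_j^L)$. For positions whose length-$L$ blocks are physically disjoint, geometric ergodicity of $Q$ makes $\s_i^L$ and $\s_j^L$ within $o(1)$ total variation of independent stationary paths once the gap between the two blocks exceeds a slowly growing threshold; hence, for all but $o(G^2)$ of the disjoint pairs,
$$\P(\s_i^L=\s_j^L)=(1+o(1))\sum_{w\in\{1,2,3,4\}^L}\Big(\pi_{w_1}\prod_{k=1}^{L-1}q_{w_{k+1}w_k}\Big)^{2}=(1+o(1))\sum_{a}\pi_a^2\,(\bar Q^{\,L-1}\mathbf 1)_a,$$
where $\pi$ is the stationary distribution and $\bar Q=[q_{ij}^2]$. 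Since $Q$ is irreducible, so is the nonnegative matrix $\bar Q$ (same zero pattern), and Perron--Frobenius gives $(\bar Q^{\,L-1}\mathbf 1)_a=\Theta\big(\rho_{\max}(\bar Q)^{L}\big)=\Theta\big(2^{-LH_2(Q)}\big)$. Discarding the $O(GL)$ overlapping pairs (and the $o(G^2)$ disjoint-but-close ones) as in \eqref{eq:expdupl}, $\E[\text{\# of repeats}]=\Theta\big(G^2\,2^{-LH_2(Q)}\big)$, which tends to $\infty$ iff $\bar L<2/H_2(Q)$. One also needs the Markov analogue of Lemma~\ref{lem self match iid}, confirming that physically overlapping pairs — approximate periodicities of the chain — contribute only lower-order terms; this follows from a transfer-matrix estimate showing that the probability $\s_i^L$ matches a nontrivial shift of itself decays geometrically in $L$.

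\emph{From first moment to high probability.} A first-moment bound does not by itself give existence, so I would bound the second moment of $X:=\#\{(i,j):i<j,\ \s_i^L=\s_j^L,\ \text{physically disjoint}\}$ and show $\E[X^2]=(1+o(1))(\E X)^2$ whenever $\E X\to\infty$: the cross term $\P(\s_i^L=\s_j^L,\ \s_k^L=\s_l^L)$ factorizes up to geometric mixing corrections when the four blocks are mutually separated, while the degenerate configurations (shared or adjacent blocks) sum to $o((\E X)^2)$ by the same transfer-matrix bounds used in~(i). Chebyshev then yields $X>0$ with high probability. Equivalently, one can run the Chen--Stein method as in \cite{Arr96}, which additionally shows that the left endpoints of the repeats form an asymptotically Poisson process of intensity $\Theta\big(G\,2^{-LH_2(Q)}\big)$ around the circular genome.

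\emph{From many repeats to interleaving, and the main obstacle.} Given $\omega(1)$ repeats with approximately uniform and approximately independent left endpoints, two of them $(a_1,b_1)$ and $(a_2,b_2)$ interleave when $a_1<a_2<b_1<b_2$ cyclically, an event of constant probability for a random pair, so the probability that no two interleave is $o(1)$; this step is inherited from \cite{Arr96} essentially unchanged. The real work is everywhere the Markov dependence enters: making the mixing corrections uniformly $o(\E X)$ over the $\Theta(G^2)$ pairs and $o((\E X)^2)$ over the $\Theta(G^4)$ quadruples in the second-moment estimate, and controlling the periodicity/self-overlap terms — all of which reduce to quantitative geometric-ergodicity bounds for $Q$ together with spectral estimates for $\bar Q$ and its Perron eigenvalue $\rho_{\max}(\bar Q)$. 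The combinatorial passage of step~(iii), by contrast, is routine.
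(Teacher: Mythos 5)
Your outline follows the same broad strategy as the paper (moment calculation, Chen--Stein Poisson approximation, and the Catalan-number/interleaving computation inherited from Arratia et al.), but the step you treat as routine is exactly the content of the lemma, and the route you sketch has a concrete hole. First, the second-moment/Chebyshev argument only delivers ``at least one repeat exists with high probability''; the lemma needs \emph{interleaved} repeats, and your passage from ``many repeats'' to ``two of them interleave'' presupposes that the repeat locations are approximately exchangeable/uniform, i.e.\ precisely the Poisson \emph{process} approximation. You defer that to ``as in [Arr96]'', but Arratia et al.\ prove it for the i.i.d.\ model; redoing it under Markov dependence is the whole point of Lemma~\ref{l:markovConverse} and cannot be inherited verbatim. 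Second, the Chen--Stein theorem in the form used here (Theorem~\ref{th:chen-stein}) requires $\chi_\alpha$ to be \emph{exactly} independent of the $\sigma$-algebra generated by the indicators outside its neighborhood $B_\alpha$; for a Markov chain and indicators indexed by all $\Theta(G^2)$ position pairs, no such exact independence ever holds, so your plan of running Chen--Stein over all pairs ``with mixing corrections'' does not meet the theorem's hypotheses (one would need a genuinely different dependent version with an extra error term, which you neither state nor bound). Relatedly, without a de-clumping device the indicators are badly clustered --- a repeat of length $L+m$ spawns many shifted length-$L$ repeats and left-extensions --- which inflates the pair term $b_2$ and destroys the Poisson approximation; your sketch never addresses clumping. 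Even the ``routine'' second moment is not routine: quadruples sharing a block contribute terms of order $G^3$ times a three-way match probability, and these are not automatically $o\bigl((\E X)^2\bigr)$ throughout the whole range $\bar L < 2/H_2(Q)$.

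The paper's proof supplies the missing ideas by engineering exact conditional independence rather than appealing to mixing: it partitions the circular sequence into $K \approx G/(L+1)$ disjoint length-$L$ blocks $\mathbf{X}_1,\dots,\mathbf{X}_K$ separated by single letters, conditions on the separator letters $\mathbf{Y}$ (so that, given $\mathbf{Y}$, the blocks are independent and Theorem~\ref{th:chen-stein} applies with a small neighborhood $B_\alpha$), and declares a repeat only when the blocks agree in positions $2,\dots,L$ but differ in the first symbol, which kills left-extension clumps. It then needs only a \emph{lower} bound $\theta \geq \binom{K}{2}\,\zeta\,\rho_{\max}(\bar Q)^{L}$ on the conditional mean of this grid-restricted count --- so overlapping pairs and mixing corrections never enter this lemma at all --- and combines the Catalan-number bound $2^k/(k+1)!$ with the Poisson approximation to get $\P(\text{no interleaved repeats}) \leq O(\theta/K) + e^{-\theta + 2\sqrt{2\theta}}$, which vanishes when $\bar L < 2/H_2(Q)$. (The self-overlap estimates you invoke, the Markov analogue of Lemma~\ref{lem self match iid}, are indeed needed in the paper, but for the achievability Lemma~\ref{p:MarkvoAchievability}, not here.) To repair your proposal you would either have to adopt this block-and-condition construction, or prove a dependent Chen--Stein bound with quantitative mixing and an explicit de-clumping step --- neither of which is supplied by ``quantitative geometric-ergodicity bounds'' alone.
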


To justify Lemma \ref{l:markovConverse} we use a similar heuristic argument as for the i.i.d. model, but with a new value for the probability that two physically disjoint  sequences $\s_i^L$ and $\s_j^L$ are equal:
$$\P(\s_i^L=\s_j^L)\approx e^{-L  \log \left({\rho_{\max}(\bar Q)}\right) }.$$
The lemma follows from the fact that  there are roughly $G^2$ such pairs in the DNA sequence. A formal proof of the lemma is provided in Appendix \ref{s:proofMarkovConverse}.

Statement 2 is again a consequence of the optimality of the greedy algorithm, as shown in the following lemma.

\begin{lemma}\label{p:MarkvoAchievability}
The greedy algorithm with exactly the same overlap score as used for the i.i.d. model can achieve minimum normalized coverage depth $\capacity = 1$  if $\bar L > {2 \over H_2(Q)}$.
\end{lemma}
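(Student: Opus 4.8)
**The plan is to mirror the i.i.d. achievability proof (Theorem~\ref{t:mainCapacity}, Part~2) step by step, replacing the i.i.d. collision probability $2^{-\ell H_2(\p)}$ with the Markov analogue $\rho_{\max}(\bar Q)^{\ell} = 2^{-\ell H_2(Q)}$ wherever it appears, and tracking the extra care needed because successive symbols are now correlated.** Recall the structure of the i.i.d. argument: the greedy algorithm succeeds provided (i) the reads cover the circular sequence, and (ii) there are no repeated substrings of length $L$. Lemma~\ref{l:markovConverse} already handles the ``no long repeats'' side of the story in the Markov setting, and the coverage lemma (Lemma~\ref{lem:cov_asym}) is purely about read locations, hence unchanged. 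So the real content is: show that when $\bar L > 2/H_2(Q)$, with probability $1-o(1)$ the Markov DNA has no length-$L$ repeats \emph{and} the $N = (1+\delta)\Ncov$ reads cover it, and that these two events together force the greedy algorithm to output $\s$.

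First I would state the combinatorial backbone: if all successive reads (in the cyclic order of their starting positions $\TT$) overlap physically, and no two physically-disjoint length-$L$ windows of $\s$ are equal, then at every stage $\ell$ of the greedy algorithm the highest-scoring merge is between two contigs that are in fact physically adjacent, so the algorithm reconstructs $\s$ exactly. This is a deterministic statement about strings and is identical to the i.i.d. case — I would simply cite the corresponding lemma in Appendix~\ref{sec:basic_proofs} (the one proving Lemma~\ref{p:MarkvoAchievability}'s i.i.d. counterpart), since its proof never used independence, only the no-repeat property. The probabilistic work is then to control two bad events. The coverage failure event has probability $o(1)$ by Lemma~\ref{lem:cov_asym} once $N/\Ncov > 1$; this is untouched by the DNA model. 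The ``spurious overlap'' event — that some length-$\ell$ suffix of a contig equals the length-$\ell$ prefix of a physically-disjoint contig, for some $\ell$ — is where the Markov structure enters. Following the heuristic in Section~\ref{sec:achieve}, the expected number of confusable contig pairs at stage $\ell$ is $\bigl(N e^{-\lambda(L-\ell)}\bigr)^2 \cdot \P(\text{two disjoint length-}\ell\text{ windows agree})$, and I would bound the collision probability by $C\,\rho_{\max}(\bar Q)^{\ell}$ using the spectral-radius estimate from \cite{RAL01} (for a stationary ergodic Markov chain, $\P(\s_i^\ell = \s_j^\ell) \le C \rho_{\max}(\bar Q)^{\ell}$ for disjoint windows, with the constant absorbing stationary-distribution factors). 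Summing over $\ell = 0, 1, \dots, L$ and over the $O(\log G)$ relevant contig counts, the dominant terms are at $\ell = L$ (killed by the no-repeat lemma, i.e. by $\bar L > 2/H_2(Q)$) and at $\ell = 0$ (killed by coverage), exactly as before.

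The main obstacle is the treatment of \emph{physically overlapping} windows — the analogue of the i.i.d. ``self-match'' Lemma~\ref{lem self match iid} from Appendix~\ref{sec:basic_proofs}. When the two candidate substrings of $\s$ overlap, their agreement probability is not simply a product of per-symbol collision probabilities, and the periodicity that self-overlap induces must be analyzed: for a Markov chain one needs that the probability of a length-$\ell$ window being $(\ell-d)$-periodic decays geometrically in $\ell$ for every shift $d$, uniformly. I would handle this by a separate lemma bounding $\P(\s \text{ has a length-}\ell \text{ self-overlap of offset } d)$ via the transfer-matrix / spectral bound applied to the $d$-step structure, showing the total contribution of overlapping pairs to \eqref{e:expectedDuplicates1}-type sums is lower order, so the clean disjoint-window bound governs the threshold. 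A secondary technical point is that, strictly, the greedy analysis must also rule out errors at \emph{intermediate} stages $0 < \ell < L$, not just the endpoints; but the convexity of $\bigl(Ne^{-\lambda(L-\ell)}\bigr)^2 \rho_{\max}(\bar Q)^\ell$ in $\ell$ (an exponential in $\ell$) means the maximum over the integer range is attained at an endpoint, so a union bound over the at most $L+1$ stages costs only a factor $L = O(\log G)$, which is swallowed by the geometric decay once $\bar L$ is strictly above the threshold. Assembling these pieces — deterministic backbone, unchanged coverage bound, Markov collision bound for disjoint windows, and a Markov self-overlap lemma for the overlapping case — gives $\P(\EE) \to 0$ with $N = (1+\delta)\Ncov$ for every $\delta > 0$, hence $\capacity(\bar L) \le 1$; the matching lower bound $\capacity(\bar L)\ge 1$ is immediate since coverage is necessary, and together these prove Lemma~\ref{p:MarkvoAchievability}.
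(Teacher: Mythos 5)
Your route is the same as the paper's: the appendix proof of Lemma \ref{p:MarkvoAchievability} simply reruns the i.i.d.\ achievability argument of Theorem \ref{t:mainCapacity}, Part 2 --- the decomposition $\EE\subseteq\EE_1\cup\EE_2$, the stage events $\BB_\ell,\CC_\ell$, the bounds $q_\ell$, and the reduction to the endpoints $\ell=0,L$ --- and the only new ingredient is a Markov replacement for Lemma \ref{lem self match iid}: for physically disjoint windows the collision probability is at most $\gamma\,\rho_{\max}(\bar Q)^{\ell}$, and for physically overlapping windows at most $\sqrt{\gamma}\,\rho_{\max}(\bar Q)^{\ell/2}$ (Lemma \ref{lem self match markov}). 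You correctly single out the overlapping-window case as the crux, but you leave unspecified the mechanism the paper actually uses there: writing the self-overlap event as the periodic pattern $\mathbf{U}\mathbf{V}\mathbf{U}\mathbf{V}\cdots\mathbf{U}$ and applying Cauchy--Schwarz to pass from products of transition probabilities raised to powers to the squared-entry matrix $\bar Q$, which is what yields the spectral-radius bound with exponent $\ell/2$. That step is the one piece of genuine technical work in the paper's proof, so in a full write-up it must be supplied rather than gestured at.

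One claim in your outline is simply false and should be dropped: the ``combinatorial backbone,'' i.e.\ that coverage plus absence of repeated length-$L$ windows \emph{deterministically} forces every greedy merge to be between physically adjacent contigs (there is also no such lemma in the appendix to cite). At a stage $\ell<L$, a read whose true successor has physical overlap at most $\ell$ can have a spurious length-$\ell$ suffix--prefix match with a non-adjacent read, and the greedy algorithm will make that merge; nothing about length-$L$ windows excludes this. That is precisely why the paper (and, fortunately, the remainder of your own sketch) bounds $\P(\BB_\ell)$ and $\P(\CC_\ell)$ probabilistically for every $\ell$, using the exponential form of $q_\ell$ to reduce the union over stages to the two endpoints. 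Since your later stage-by-stage analysis does not rely on the backbone, the flaw is not fatal, but as written it is a wrong step. A smaller slip: the $\ell=L$ endpoint is killed by the disjoint-window collision bound under $\bar L>2/H_2(Q)$ (together with the overlap bound for $\CC_L$), not by Lemma \ref{l:markovConverse}, which is the converse-direction statement about interleaved repeats.
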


Lemma \ref{p:MarkvoAchievability} is proved in Appendix \ref{s:markovAchievabilityProof}. The key technical contribution of this result is to show that the effect of physically overlapping reads does not affect the asymptotic performance of the algorithm, just as in the i.i.d. case.

\section{Noisy Reads}
\label{sec:noise}
In our basic model, we assumed that the read process is noiseless. In this section, we assess the effect of noise on the greedy algorithm.

\subsection{Formulation and Result}
\label{sec:formulation-noisy}
The problem formulation here differs from Section \ref{sec:formulation} in two aspects. First,  we assume that the read process is noisy and consider a simple probabilistic model for the noise. A nucleotide $s$ is read to be $r$ with probability $Q(r|s)$. Each nucleotide is perturbed independently, i.e. if $\br$ is a read from the physical underlying subsequence $\bs$ of the DNA sequence, then
$$\P(\br | \bs) =\prod_{i=1}^L Q(r_i| s_i).$$
Moreover, it is assumed that the noise affecting different reads is independent.

Second, we require a weaker notion of reconstruction. Instead of {\em perfect reconstruction}, we aim for {\em perfect layout}. By perfect layout, we mean that all the reads are mapped correctly to their true locations. Note that perfect layout does not imply perfect reconstruction as the consensus sequence may not be identical to the DNA sequence. On the other hand, since coverage implies that most positions on the DNA are covered by many reads ($O(\log G)$, to be more precise), the consensus sequence will be correct in most positions if we achieve perfect layout.

\begin{remark}
In the jargon of information theory, we are modeling the noise in the read process as a {\em discrete memoryless channel} with transition probability $Q(\cdot|\cdot)$. Noise processes in actual sequencing technologies can be more complex than this model. For example, the amount of noise can increase as the read process proceeds, or there may be insertions and deletions in addition to substitutions. Nevertheless, understanding the effect of noise on the assembly problem in this model provides considerable insight to the problem.
\end{remark}

We now evaluate the performance of the greedy algorithm for the noisy read problem. Finding the optimal algorithm for this case is an open problem.

To tailor the greedy algorithm for the noisy reads, the only requirement is to define the overlap score between two distinct reads. Given two reads $\br_i$ and $\br_j$, we would like to know whether they are physically overlapping with length $\ell$. Let $\bX$ and $\bY$ of length $\ell$ be the suffix of $\br_i$ and prefix of $\br_j$, respectively. We have the following hypotheses for $\bX$ and $\bY$:
\begin{itemize}
\item $H_0$: $\bX$ and $\bY$ are noisy reads from the same physical source subsequence;
\item $H_1$: $\bX$ and $\bY$ are noisy reads from two disjoint source subsequences.
\end{itemize}
The decision rule that is optimal in trading off the two types of error is the {\em maximum a posteriori} (MAP) rule, obtained by a standard large deviations calculation (see for example Chapter 11.7 and 11.9 of \cite{CT06}.)
In log likelihood form, the  MAP rule for this hypothesis testing problem is:
\begin{equation}\label{eq:criterion for decision}
\mbox{Decide $H_0$ if $\log \frac{P(\bx,\by)}{P(\bx)P(\by)} =\sum_{j=1}^{\ell} \log \frac{P_{X,Y}(x_j,y_j)}{P_X(x_j)P_Y(y_j)} \geq \ell \theta$},
\end{equation}
where $P_{X,Y}(x,y), P_X(x)$ and $P_Y(y)$ are the marginals of the joint distribution $P_{S}(s) Q(x | s) Q(y |s)$, and $\theta$ is a parameter reflecting the prior distribution of $H_0$ and $H_1$.

We can now define the overlap score, whereby two reads $\bR_i$ and $\bR_j$ have overlap at least $\ell$ if the MAP rule on the length $\ell$ suffix of $\bR_i$ and the length $\ell$ prefix of read $\bR_j$ decides $H_0$. The performance of the greedy algorithm using this score is given in the following theorem.

\begin{theorem}\label{p:general_IIDachievability}
The modified greedy algorithm can achieve normalized coverage depth $c(\bar L) =1$ if $\bar{L}>2/I^*$, where
$$I^*= \max_{\theta} \min (2D(P_{\mu} ||P_{X,Y}),D(P_{\mu} ||P_X \cdot P_Y)),,$$ and the distribution $P_\mu$ is given by
$$P_\mu(x,y) := \frac{[P_{X,Y}(x,y)]^{\mu}[P_X(x)P_Y(y)]^{1-\mu}}{\sum_{a,b} [P_{X,Y}(a,b)]^{\mu}[P_X(a)P_Y(b)]^{1-\mu}}$$ with $\mu$ the solution to the equation
$$ D(P_{\mu} ||P_X \cdot P_Y)-D(P_{\mu} ||P_{X,Y})=\theta.$$

\end{theorem}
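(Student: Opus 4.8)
\medskip
\noindent\textbf{Proof strategy.}\quad The plan is to follow the heuristic for the noiseless greedy algorithm in Section~\ref{sec:achieve} (made rigorous in Appendix~\ref{sec:basic_proofs}), with the exact-match test replaced by the MAP test~\eqref{eq:criterion for decision}. First I would fix the design parameter $\theta$ in the range where the tilted law $P_\mu$ of the theorem lies strictly between $P_{X,Y}$ and $P_X\cdot P_Y$ (equivalently $\mu\in(0,1)$), and again view the greedy algorithm as running in stages $\ell=L,L-1,\dots,0$, merging at stage $\ell$ those contig pairs whose overlap score equals $\ell$. The key structural claim is that the algorithm attains perfect layout whenever three events occur: (i) the reads cover the sequence; (ii) \emph{no missed link} --- for every pair of successive reads, the MAP test run at their true physical overlap decides $H_0$; (iii) \emph{no over-merge} --- the MAP test never declares an overlap strictly longer than the true physical overlap of the source positions being compared. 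Establishing that (i)--(iii) force the greedy output to be the correct layout is a purely combinatorial matter and would be handled exactly as for the noiseless model in Appendix~\ref{sec:basic_proofs}; the only new ingredient is that overlaps are now \emph{detected} by a test rather than observed exactly, which is what (ii) and (iii) control.

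The analytic core would be a pair of Chernoff bounds on the log-likelihood statistic in~\eqref{eq:criterion for decision}, i.e.\ the standard hypothesis-testing large-deviations estimate (cf.\ Ch.~11 of \cite{CT06}): if the length-$\ell$ strings $\bX,\bY$ are noisy reads of the \emph{same} source subsequence, then $\P(\text{MAP decides }H_1)\le 2^{-\ell D(P_\mu\|P_{X,Y})}$, and if they are noisy reads of physically \emph{disjoint} source subsequences, then $\P(\text{MAP decides }H_0)\le 2^{-\ell D(P_\mu\|P_X\cdot P_Y)}$. The crucial observation is that the \emph{single} tilted law $P_\mu$ controls \emph{both} exponents: fixing $\theta$ fixes the log-likelihood level set on which the rate functions of the two hypotheses are attained, the extremal distribution on that set is precisely $P_\mu$, and $\theta$ equals the expected log-likelihood ratio under $P_\mu$, i.e.\ $\theta=D(P_\mu\|P_X\cdot P_Y)-D(P_\mu\|P_{X,Y})$, matching the theorem.

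Given these, the remaining work is first-moment (union) bounds, term by term, just as for~\eqref{eq:pairs} and~\eqref{e:seqExp}. Using the Poisson approximation for read starts, at stage $\ell$ there are about $Ne^{-\lambda(L-\ell)}$ contigs ($\lambda=N/G$), so the expected number of over-merges at stage $\ell$ involving disjoint contig ends is at most $\bigl(Ne^{-\lambda(L-\ell)}\bigr)^2\,2^{-\ell D(P_\mu\|P_X\cdot P_Y)}$; summed over $\ell$, the dominant contributions are at $\ell=0$ --- which is $o(1)$ precisely when the reads cover the sequence --- and at $\ell=L$, which (using $N=\Theta(\Ncov)$ and $L=\bar L\log G$) is $o(1)$ precisely when $\bar L>2/D(P_\mu\|P_X\cdot P_Y)$, the factor $2$ coming from the $N^2$ pairs. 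Similarly there are about $N\lambda e^{-\lambda(L-\ell)}$ successive read pairs with physical overlap near $\ell$, so the expected number of missed links is at most $\sum_\ell N\lambda e^{-\lambda(L-\ell)}\,2^{-\ell D(P_\mu\|P_{X,Y})}$, dominated at $\ell=0$ (again controlled by coverage) and at $\ell=L$, which is $o(1)$ precisely when $\bar L>1/D(P_\mu\|P_{X,Y})=2/\bigl(2D(P_\mu\|P_{X,Y})\bigr)$. Thus, for the chosen $\theta$, perfect layout holds with probability $1-o(1)$ as soon as the reads cover the sequence and $\bar L>2/\min\bigl(2D(P_\mu\|P_{X,Y}),\,D(P_\mu\|P_X\cdot P_Y)\bigr)$; maximizing over the free parameter $\theta$ turns this into $\bar L>2/I^*$. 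Since coverage requires only normalized depth $1$ by Lemma~\ref{lem:cov_asym} and is necessary for perfect layout, this yields $c(\bar L)=1$.

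I expect the main obstacle to be the same one as in the i.i.d.\ and Markov achievability proofs: not the estimates above for physically disjoint strings, but the bookkeeping for the ``self-overlapping'' configurations, where the suffix region of one contig and the prefix region of another intersect on the DNA sequence, so the compared symbols acquire (banded) correlations and the MAP outcome is no longer governed by the clean product laws $P_{X,Y}^{\otimes\ell}$ or $(P_X\cdot P_Y)^{\otimes\ell}$. This would be handled by an analogue of Lemma~\ref{lem self match iid}, showing that such configurations contribute only lower-order terms to the failure probability. A secondary technical point is making the ``first error'' reduction rigorous --- conditioning on the algorithm having behaved correctly through stage $\ell+1$ so that the contig counts used in the union bounds are the stated ones --- which again proceeds as in Appendix~\ref{sec:basic_proofs}.
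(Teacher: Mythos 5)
Your proposal follows essentially the same route as the paper's proof: the same decomposition into coverage, missed-detection, and false-alarm events analyzed stage by stage with a first-moment bound, the same two exponents $D(P_\mu\|P_{X,Y})$ and $D(P_\mu\|P_X\cdot P_Y)$ for the MAP test (paper's Theorem~\ref{thm:hypothesis}), dominant terms at $\ell=0$ (coverage) and $\ell=L$, and a final optimization over $\theta$. The single step you defer---pairs whose compared source positions physically overlap---is precisely what the paper's Lemma~\ref{l:false_alarm}, part~2, supplies, by splitting the log-likelihood sum into two interleaved independent halves to obtain the halved exponent $D(P_\mu\|P_X\cdot P_Y)/2$, which together with the $O(L/G)$ rarity of such pairs makes their contribution negligible, exactly as you anticipate.
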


The statement of Theorem~\ref{p:general_IIDachievability} uses the KL Divergence $D(P||Q)$ of the distribution $P$ relative to $Q$, defined as
\begin{equation}
D(P|| Q) = \sum_{a} P(a) \log {\frac{P(a)}{Q(a)}}.
\end{equation}
The details of the proof of the theorem are in Appendix \ref{sec:noisy-reads}. To illustrate the main ideas, we sketch the proof for the special case of uniform source and symmetric noise.

\subsection{Sketch of Proof for Uniform Source and Symmetric Noise}
In this section, we provide an argument to justify Theorem~\ref{p:general_IIDachievability} in the case of uniform source and symmetric noise. Concretely, $\bp = (1/4,1/4,1/4,1/4)$ and the noise is symmetric with transition probabilities:
\begin{equation}
Q(i | j)=
\begin{cases}
1-\epsilon ~& \text{if} ~ i=j \\
\epsilon/3 ~ & \text{if} ~ i\neq j\, .
\end{cases}
\end{equation}
The parameter $\epsilon$ is often called the error rate of the read process. It ranges from $1\%$ to $10$\% depending on the sequencing technology.

\begin{corollary}\label{p:Noisyachievability}
The greedy algorithm with the modified definition of overlap score between reads can achieve normalized coverage depth  $c(\bar{L})=1$ if $\bar{L}>2/I^*(\epsilon)$, where
$$ I^*(\epsilon) = D(\alpha^*||\tfrac{3}{4})$$
and $\alpha^*$ satisfies
$$ D(\alpha^*||\tfrac{3}{4}) = 2 D(\alpha^*||2\epsilon  - \tfrac{4}{3} \epsilon^2).$$
Here, $D(\alpha||\beta)$ is the divergence between a ${\rm Bern}(\alpha)$ and a ${\rm Bern}(\beta)$ random variable.
\end{corollary}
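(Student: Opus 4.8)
The plan is to specialize the general quantity $I^*$ from Theorem~\ref{p:general_IIDachievability} to the uniform-source, symmetric-noise setting, and show that it collapses to the stated Bernoulli-divergence expression. The first step is to identify the relevant marginal distributions. Under the uniform prior $P_S = (1/4,1/4,1/4,1/4)$ and symmetric channel $Q$, the pairwise marginals $P_X$ and $P_Y$ are each uniform on $\{1,2,3,4\}$, so $P_X(x)P_Y(y) = 1/16$ for every pair. The joint distribution $P_{X,Y}(x,y) = \sum_s P_S(s) Q(x|s)Q(y|s)$ depends only on whether $x = y$ or $x \neq y$: a short computation gives $P_{X,Y}(x,x) = \tfrac14\big[(1-\epsilon)^2 + 3(\epsilon/3)^2\big]$ and $P_{X,Y}(x,y) = \tfrac14\big[2(1-\epsilon)(\epsilon/3) + \ldots\big]$ for $x\neq y$; the point is that $P_{X,Y}$ is a function of the indicator $\mathbf{1}\{x\neq y\}$ alone. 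Writing $\beta := 2\epsilon - \tfrac43\epsilon^2$ for the probability that $X \neq Y$ under $H_0$, the whole problem reduces to a comparison between two Bernoulli distributions on the ``mismatch'' bit: ${\rm Bern}(\beta)$ under $H_0$ (the joint) and ${\rm Bern}(3/4)$ under $H_1$ (the product of uniforms, since two independent uniform symbols disagree with probability $3/4$).

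The second step is to carry out this reduction inside the definitions of $P_\mu$, $D(P_\mu\|P_{X,Y})$ and $D(P_\mu\|P_X P_Y)$. Because both $P_{X,Y}$ and $P_X P_Y$ are constant on the ``match'' event and constant on the ``mismatch'' event (with the mismatch mass split uniformly over the three off-diagonal pairs), the tilted distribution $P_\mu$ inherits the same structure, and all three KL divergences reduce to one-dimensional Bernoulli divergences in the mismatch parameter. Concretely, $D(P_\mu \| P_{X,Y})$ becomes $D(\alpha \| \beta)$ and $D(P_\mu \| P_X P_Y)$ becomes $D(\alpha\|\tfrac34)$, where $\alpha$ is the mismatch probability under $P_\mu$ and $\beta = 2\epsilon - \tfrac43\epsilon^2$; here one uses that $D(P_\mu\|P_X P_Y)$ has no additional entropy term beyond the binary divergence precisely because the off-diagonal mass is spread uniformly, matching $P_X P_Y$'s uniform off-diagonal. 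The optimization $\max_\theta \min(2D(P_\mu\|P_{X,Y}), D(P_\mu\|P_X P_Y))$ then becomes $\max_\alpha \min(2D(\alpha\|\beta), D(\alpha\|\tfrac34))$, where $\alpha$ now ranges over $(0,1)$ as $\theta$ ranges over its domain (the map $\theta \mapsto \alpha$ via the tilting equation is a bijection).

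The third step is the scalar optimization. Since $D(\alpha\|\tfrac34)$ is decreasing in $\alpha$ on $(\beta, 3/4)$ and $D(\alpha\|\beta)$ is increasing there (assuming $\beta < 3/4$, which holds for $\epsilon < 1/2$, i.e.\ the realistic regime), the inner minimum is maximized at the crossing point $\alpha^*$ where $2D(\alpha^*\|\beta) = D(\alpha^*\|\tfrac34)$, giving $I^*(\epsilon) = D(\alpha^*\|\tfrac34)$ with $\alpha^*$ the solution of $D(\alpha^*\|\tfrac34) = 2D(\alpha^*\|2\epsilon - \tfrac43\epsilon^2)$, exactly as stated. Combined with Theorem~\ref{p:general_IIDachievability} this yields the corollary. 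I expect the main obstacle to be the bookkeeping in the second step: verifying rigorously that the multi-letter divergences genuinely collapse to binary divergences without stray terms — one must check that $P_\mu$, restricted to the mismatch event, really is uniform over the three off-diagonal symbol pairs (which follows from the symmetry of $Q$ and the uniformity of $P_S$, but needs to be spelled out), and that the normalization constant in $P_\mu$ behaves correctly under the tilting. The optimization and the explicit evaluation of $\beta$ are routine once that structural reduction is in place.
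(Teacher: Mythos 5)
Your proposal is correct and follows essentially the same route as the paper's (very terse) proof: specialize Theorem~\ref{p:general_IIDachievability} to the uniform source and symmetric channel, observe that by symmetry $P_{X,Y}$, $P_XP_Y$, and hence the tilted family $P_\mu$ are constant on the match/mismatch classes so all divergences collapse to binary divergences in the mismatch probability $\alpha$ (with $\beta=2\epsilon-\tfrac43\epsilon^2$ under $H_0$ and $\tfrac34$ under $H_1$), and conclude that the max--min is attained at the equalization point $2D(\alpha^*\|\beta)=D(\alpha^*\|\tfrac34)$. Your write-up simply makes explicit the symmetry reduction and the monotonicity argument that the paper leaves implicit, and the supporting computations (uniform marginals, the value of $\beta$, and the Bernoulli collapse of the three divergences) check out.
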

\begin{proof}
The proof follows by applying Theorem \ref{p:general_IIDachievability}. For uniform source and symmetric noise, the optimum $I^*$ is attained when $2D(P_{\mu} ||P_{X,Y})=D(P_{\mu} ||P_X \cdot P_Y)$.  The result is written in terms of $\alpha^*$ which is a function of the optimal value $\theta^*$.
\end{proof}

%

The performance of this algorithm is shown in Figure \ref{fig:plot_noisy_assembly}. The only difference between the two is a larger threshold, $2/I^*(\epsilon)$ at which \fr becomes one. A plot of this threshold as a function of $\epsilon$ is shown in Figure \ref{fig:I*}. It can be seen that when $\epsilon = 0$, $2/I^*(\epsilon)  = 2/H_2(\bp) = 1$, and increases continuously as $\epsilon$ increases.


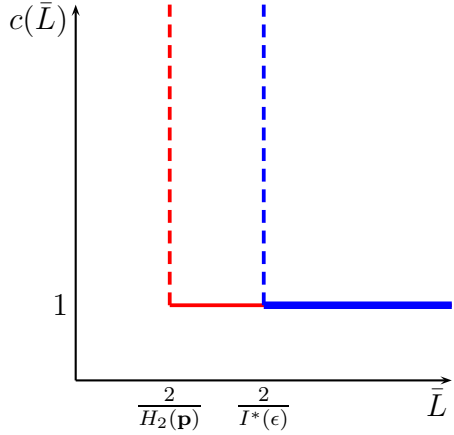
\begin{figure}\centering
\begin{pspicture}(-1,-1)(5,5)
\psline{->}(0,0)(5,0)
\psline{->}(0,0)(0,5)
\rput[t](4.8,-.1){$\bar L $}
\psline[linecolor=red,linestyle=dashed,linewidth=.05](1.25,1)(1.25,5)
\psline[linecolor=red,linewidth=.05](1.25,1)(5,1)
\rput[r](-.1,4.8){$c(\bar{L})$}
\rput[t](1.25,-.1){$2\over H_2(\mathbf{p})$}
\rput[r](-.1,1){1}

\psline[linecolor=blue,linestyle=dashed,linewidth=.05](2.5,1)(2.5,5)
\psline[linecolor=blue,linewidth=.1](2.5,1)(5,1)
\rput[t](2.5,-.1){$2\over I^*(\epsilon)$}
\end{pspicture}
\caption{The performance of the modified greedy algorithm with noisy reads.}
\label{fig:plot_noisy_assembly}
\end{figure}

%
%
%
%

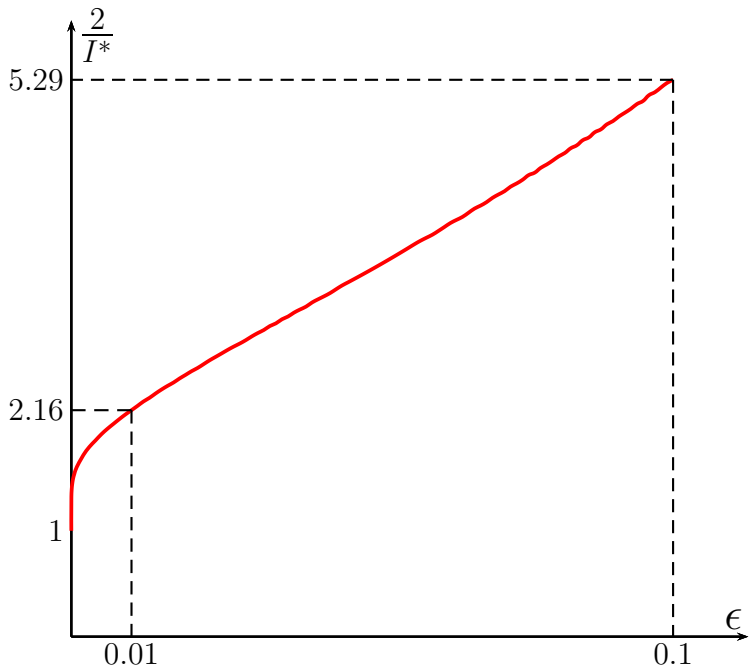
\begin{figure}
\centering

\readdata{\awgn}{noisy_read.data}

\centering

\begin{pspicture}(-.1,-.1)(9,8)
\psaxes[ticks=none,labels=none]{->}(0,0)(9,8.2)
\dataplot[plotstyle=curve,linecolor=red,linewidth=.05]{\awgn}

\psline[linestyle=dashed](8 , 7.4091)(8 ,0)
\rput[t]( 8,-.1){0.1}
\psline[linestyle=dashed](8 ,  7.4091)(0 , 7.4091)
\rput[r](-.1, 7.4091){5.29}

\psline[linestyle=dashed](0.8  ,  3.0133)(0.8  ,  0)
\rput[t](0.8,-.1){0.01}
\psline[linestyle=dashed](0.8,  3.0133)(0   , 3.0133)
\rput[r](-.1,3.0133){2.16}

\rput[r](-.1,1.4091){1}

\rput[l](.1,8){\Large $\frac{2}{I^*}$}
\rput[b](8.8,.1){\Large $\epsilon $}

\end{pspicture}

\caption{Plot of $2\over I^*(\epsilon)$  as a function of ~$\epsilon$ for the uniform source and  symmetric noise model.}
\label{fig:I*}
\end{figure}

We  justify the corollary by the following argument. In the noiseless case, two reads overlap by at least $\ell$ if the length $\ell$ prefix of one read is identical to the length $\ell$ suffix of the other read. The overlap score is the largest such $\ell$. When there is noise, this criterion is not appropriate. Instead, a natural modification of this definition is that two reads overlap by at least $\ell$ if the Hamming distance between the  prefix and the suffix strings is less than a fraction $\alpha$ of the length $\ell$. The overlap score between the two reads is the largest such $\ell$. The parameter $\alpha$ controls how stringent the overlap criterion is. By optimizing over the value of $\alpha$, we can obtain the following result.

We picture the greedy algorithm as working in stages, starting with an overlap score of $L$ down to an overlap score of 0. Since the spacing between reads is independent of the DNA sequence and noise process, the number of reads at stage $\ell$ given no errors have occurred in previous stages is again roughly
$$Ne^{-\lambda (L -\ell)}.$$
To pass this stage without making an error, the greedy algorithm should correctly merge those reads having spacing of length $\ell$ to their successors. Similar to the noiseless case, the greedy algorithm makes an error if the overlap score between two non-consecutive reads is $\ell$ at stage $\ell$, in other words
\begin{enumerate}
\item[1.] The Hamming distance between the length $\ell$ suffix of the present read and the length $\ell$ prefix of some read which is not the successor is less than $\alpha \ell$ by random chance.
\end{enumerate}
A standard large deviations calculation shows that the probability of this event is approximately
$$ 2^{-\ell D(\alpha||{3\over 4})},$$
which is the probability that two independent strings of length $\ell$ have Hamming distance less than $\alpha \ell$. Hence, the expected number of pairs of contigs for which this confusion event happens is approximately
\begin{equation}\label{eq:exp_f}
\left[ Ne^{-\lambda (L -\ell)} \right]^2 2^{-\ell D(\alpha||{3\over 4})}.
\end{equation}

Unlike the noiseless case, however, there is another important event affecting the performance of the algorithm. The \emph{missed detection} event is defined as
\begin{enumerate}
\item[2.] The Hamming distance between the length $\ell$ suffix of the present read and the length $\ell$ prefix of the successor read is larger than $\alpha \ell$ due to an excessive amount of noise.
\end{enumerate}
Again, a standard large deviations calculation shows that the probability of this event for a given read is approximately
$$ 2^{-\ell D(\alpha||\eta)},$$
where $\eta = 2\epsilon  - \frac{4}{3} \epsilon^2$ is the probability that the $i$th symbol in the length $\ell$ suffix of the present read does not match the $i$th symbol in the length $\ell$ prefix of the successor read (here we are assuming that $\alpha > \eta$). Thus the expected number of contigs missing their successor contig at stage $\ell$ is approximately
\begin{equation}\label{eq:expmiss}
Ne^{-\lambda (L -\ell)} 2^{-\ell D(\alpha||\eta)}.
\end{equation}
Both Equations \eqref{eq:exp_f} and \eqref{eq:expmiss} are largest either when $\ell=L$ or $\ell=0$. Similarly to the noiseless case, errors do not occur at stage 0 if the DNA sequence is covered by the reads. The coverage condition guarantees no gap exists in the assembled sequence.  From~\eqref{eq:exp_f} and~\eqref{eq:expmiss} we see that no errors occur at stage $L$ if
$$\bar L = \frac{L}{\log G} > \frac{2}{\min( 2D(\alpha||\eta), D(\alpha||\frac{3}{4}))}.$$
Selecting $\alpha$ to minimize the right hand side results in the two quantities within the minimum being equal, which gives the result.

\section{Discussions and Future Work}
\label{sec:discussions}

This paper seeks to understand the basic data requirements for shotgun sequencing, and we obtain results for simple models. The models for the DNA sequence and read process in this paper serve as a starting point from which to pursue extensions to more realistic models. We discuss a few of the many possible extensions.

\paragraph{Long repeats. } Long repeats occur in many genomes, from bacteria to human. The repetitive nature of real genomes is understood to be a major bottleneck for sequence assembly. Thus a caveat of this paper is that
the DNA sequence models we have considered, both i.i.d. and Markov, exhibit only short-range correlations, and therefore fail to capture the long-range correlation present in complex genomes. Motivated by this issue, a follow-up work \cite{BBT13} extends the approach of this paper to \emph{arbitrary} repeat statistics, in particular the statistics of actual genomes. The read model considered in \cite{BBT13} is the same uniform noiseless model we consider.

We briefly summarize the results and approach of \cite{BBT13}. 
First, Ukkonen's condition that there be no interleaved or triple repeats of length at least $L-1$ is generalized to give a lower bound on the read length and the coverage depth required for reconstruction in terms of repeat statistics of the genome. Next, they design a de Brujin graph based assembly algorithm that can achieve very close to the lower bound for repeat statistics of a wide range of sequenced genomes. The approach results in a pipeline, which takes
as input a genome sequence and desired success
probability $1-\eps$, computes a few simple repeat statistics, and from these statistics computes a
feasibility plot that indicates for which $L$ and $N$ reconstruction is possible. 

\paragraph{Double-strandedness. } The DNA sequence is double stranded and consists of a sequence $\bs$ and its reverse complement $\tilde \bs$. Reads are obtained from either of the two strands, and a natural concern is whether this affects the results. It turns out that for the i.i.d. sequence model considered in this paper (as well as the Markov model), the asymptotic minimum normalized coverage depth remains the same. The optimal greedy algorithm is modified slightly by including the reverse complements of the reads as well as the originals, and stopping when there are two reconstructed sequences $\bs$ and $\tilde \bs$.  The heuristic argument follows by observing that the probability of error at stage $\ell$ given in \eqref{eq:pairs} is changed only be a factor two, which does not change the asymptotic result. The rigorous proof involves showing that the contribution from overlapping reads is negligible, where the notion of reads overlapping accounts for both the sequence and its reverse complement. 

\paragraph{Read process. } There are a number of important properties of the read process which can be incorporated into more accurate models. Beyond the substitution noise considered in this paper, some sequencing technologies (such as PacBio) produce insertions and deletions. Often bases come with quality scores, and these scores can be used to mitigate the effect of noise. Other interesting aspects include correlation in the noise from one base to another (e.g. typically producing several errors in a row), non-uniformity of the error rate within a read, and correlation of the noise process with the read content. Aside from noise, a serious practical difficulty arises due to the positions of reads produced by some sequencing platforms being biased by the sequence, e.g. by the GC content. 
Noise and sampling bias in the reads make assembly more difficult, but another important direction is to incorporate mate-pairs into the read model. Mate-pairs (or paired-end reads) consisting of two reads with an approximately known separation, help to resolve long repeats using short reads. 

\paragraph{Partial reconstruction. } In practice the necessary conditions for perfect reconstruct are not always satisfied, but it is still desirable to produce the best possible assembly. While the notion of perfect reconstruction is relatively simple, defining what ``best" means is more delicate for partial reconstructions; one must allow for multiple contigs in the output as well as errors (misjoins). Thus an optimal algorithm is one which trades off optimally between the number of contigs and number of errors.

\section*{Acknowledgements}

This work is supported by the Natural Sciences and Engineering
Research Council (NSERC) of Canada  and by the Center for Science of Information (CSoI), an NSF Science and Technology Center, under grant agreement CCF-0939370. The authors would like to thank Professor Yun Song for discussions in the early stage of this project, and Ramtin Pedarsani for discussions about the complexity of the greedy algorithm.

\appendix

\section{Proof of Theorem \ref{t:mainCapacity}, Part $2$}

We first state and prove the following lemma. This result can be found in \cite{Arr96}, but for ease of generalization to the Markov case later, we will include the proof.

\begin{lemma}\label{lem self match iid}
For any distinct substrings $\bX$ and $\bY$ of length $\ell$ of the i.i.d. DNA sequence:
\begin{enumerate}
\item If the strings have no physical overlap, the probability that they are identical is  $e^{-\ell H_2(\bp)}$.
\item If the strings have physical overlap, the probability that they are identical is  bounded above by $e^{-\ell H_2(\bp)/2}$.
\end{enumerate}
\end{lemma}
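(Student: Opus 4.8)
The plan is to handle Part~1 by a one-line independence computation, and Part~2 by reducing the event $\{\bX=\bY\}$ to an independent sub-collection of pairwise symbol equalities rather than evaluating its probability exactly.

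For Part~1, let $i,j$ be the starting positions of $\bX,\bY$ and suppose the two length-$\ell$ arcs of positions they occupy are disjoint on the circle. Then the $2\ell$ symbols involved are mutually independent, so
$$\P(\bX=\bY)=\prod_{k=0}^{\ell-1}\P(s_{i+k}=s_{j+k})=\Big(\sum_k p_k^2\Big)^{\ell}=e^{-\ell H_2(\bp)},$$
using that two independent symbols agree with probability $\sum_k p_k^2=e^{-H_2(\bp)}$, exactly as in the heuristic of Section~\ref{sec:upper}.

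For Part~2, suppose the two arcs overlap, with offset $d$; since $\bX$ and $\bY$ are distinct, $1\le d\le\ell-1$, and the positions covered by $\bX\cup\bY$ form a single arc of length $\ell+d\le 2\ell-1$, which is a proper arc (not the whole circle) whenever $2\ell\le G$ — always the case for the read lengths $L=\bar L\log G$ to which we apply the lemma — so no circular wrap-around complicates the picture. The event $\{\bX=\bY\}$ forces the $\ell$ equalities $s_m=s_{m+d}$, one for each position $m$ of $\bX$. The key point is that we do \emph{not} evaluate $\P(\bX=\bY)$ exactly: doing so yields a product $\prod_t\big(\sum_k p_k^{c_t}\big)$ over the chains of indices linked by these equalities, which does not reduce to a power of $\sum_k p_k^2$. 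Instead we discard most constraints and keep only an independent subfamily. Concretely, partition the positions of $\bX$ into residue classes modulo $d$; each class is an arithmetic progression of step $d$, and selecting every other element of each class gives a set $S$ of positions of $\bX$ such that (i) no two elements of $S$ differ by $d$, so the $2|S|$ positions in $S\cup(S+d)$ are all distinct, and (ii) $|S|\ge\ell/2$, since a class of size $c$ contributes $\lceil c/2\rceil\ge c/2$ elements and the class sizes sum to $\ell$. Then $\{(s_m,s_{m+d})\}_{m\in S}$ is a family of independent pairs, each equality holds with probability $\sum_k p_k^2=e^{-H_2(\bp)}\le 1$, and hence
$$\P(\bX=\bY)\le\prod_{m\in S}\P(s_m=s_{m+d})=\Big(\sum_k p_k^2\Big)^{|S|}\le\Big(\sum_k p_k^2\Big)^{\ell/2}=e^{-\ell H_2(\bp)/2}.$$

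The only step with real content is the combinatorial extraction in Part~2 — producing $\lceil\ell/2\rceil$ equalities on pairwise-disjoint pairs of positions — and the residue-class argument above settles it; everything else is routine. Worth noting: the bound $|S|\ge\ell/2$ is essentially tight (it is met when $d=1$, where the $\ell$ equalities chain into a single path), which is exactly why the exponent in the statement degrades to $H_2(\bp)/2$. I would also set things up so the same skeleton transplants to the Markov case later — ``two independent symbols agree with probability $\sum_k p_k^2$'' gets replaced by the transfer-matrix estimate governed by $\rho_{\max}(\bar Q)$, while the combinatorial part is unchanged — which is the reason to isolate this lemma now.
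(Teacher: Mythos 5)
Your proof is correct, and for Part~2 it takes a genuinely different route from the paper. The paper characterizes the event $\{\bX=\bY\}$ exactly: writing the overlap as a shared block of length $k<\ell$, it shows $\bX=\bY$ forces the union string to have the periodic form $\mathbf{UVUV\dots UVU}$, computes the probability exactly as $(\pi_{q+2})^{r}(\pi_{q+1})^{\ell-k-r}$ with $\pi_k=\sum_i p_i^k$, and then bounds it via the moment inequality $(\pi_q)^{1/q}\le(\pi_2)^{1/2}$, arriving at the slightly sharper $(\pi_2)^{\ell-k/2}$ before relaxing to $(\pi_2)^{\ell/2}$. You instead discard constraints: from the $\ell$ forced equalities $s_m=s_{m+d}$ you extract, by the residue-class-mod-$d$, every-other-element selection, at least $\ell/2$ equalities supported on pairwise disjoint pairs of positions, and use independence of disjoint symbols to get $(\pi_2)^{\ell/2}$ directly. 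Your combinatorial extraction is sound (no two selected indices differ by $d$, so the $2|S|$ positions are distinct and the pair events are independent), and the wrap-around caveat $2\ell\le G$ matches the implicit assumption in the paper. What each approach buys: the paper's exact periodicity computation yields the $k$-dependent bound and a template that survives loss of independence, whereas yours is more elementary, avoiding both the periodic-structure characterization and the power-mean inequality. One caveat on your closing remark: the skeleton does \emph{not} transplant unchanged to the Markov case, because dropping constraints and multiplying probabilities over disjoint pairs relies on independence of symbols at distinct positions; the paper's Markov analogue (Lemma~\ref{lem self match markov}) instead keeps the full periodic structure and controls it with Cauchy--Schwarz plus the transfer-matrix bound governed by $\rho_{\max}(\bar Q)$. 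This does not affect the correctness of your proof of the present lemma.
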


\begin{proof}
We first note that for any $k$ distinct bases in the DNA sequence the probability that they are identical is given by
$$\pi_k\triangleq \sum_{i=1}^{4} p_i^k.$$

1- Consider $\bX=S_{i+1}\dots S_{i+\ell}$ and $\bY=S_{j+1}\dots S_{j+\ell}$ have no physical  overlap. In this case, the events $\{S_{i+m}=S_{j+m}\}$ for $m\in\{1,\dots,\ell\}$ are independents and equiprobable. Therefore, the probability that  $\bX=\bY$ is given by
$$ \pi_2 ^ \ell = 2^{-\ell H_2(\bp)}.$$

 2- For the case of overlapping strings $\bX$ and $\bY$, we assume that a substring of length $k< \ell$ from the DNA sequence is shared between the two strings. Without loss of generality, we also assume that $\bX$ and $\bY$ are, respectively, the prefix and suffix of $\bS_1^{2\ell-k}$. Let $q$ and $r$ be the quotient and remainder of $\ell$ divided by $\ell-k$, i.e., $\ell=q(\ell-k)+r$ where $0\leq r< \ell-k$.
 It can be shown that  $\bX=\bY$ if and only if $\bS_1^{2\ell-k}$ is a string of the form $\bf U V U V \dots U V U $ where $\bf U$ and $\bf V$ have length $r$ and $\ell -k -r$. Since the number of $\bf U$ and $\bf V$ are, respectively, $q+2$ and $q+1$, the probability of observing a structure of the form $\bf U V U V \dots U V U $ is given by
\begin{align*}
(\pi_{q+2})^{r}\times (\pi_{q+1})^{\ell-k-r}  \stackrel{(a)}{\leq} (\pi_{2})^{\frac{r(q+2)}{2}}\times (\pi_{2})^{\frac{(\ell-k-r)(q+1)}{2}}=(\pi_2)^{\ell-\frac{k}{2}},
\end{align*}
where $(a)$ comes from the fact that $(\pi_{q})^{\frac{1}{q}} \leq (\pi_{2})^{1\over 2}$ for all $q\geq 2$.
Since $k<\ell$, we have $(\pi_2)^{\ell-\frac{k}{2}}\leq (\pi_2)^{\frac{\ell}{2}}$. Therefore, the probability that $\bX=\bY$ for two overlapping strings is bounded above by
$$(\pi_2)^{\frac{\ell}{2}} =2^{-\ell H_2(\bp)/ 2}.$$
This completes the proof.
\end{proof}

\noindent
{\bf Proof of Theorem \ref{t:mainCapacity}, Part $2$}

The greedy algorithm finds a contig corresponding to a substring of the DNA sequence if each read $\bR_i$ is correctly merged to its successor read $\bR_i^s$ with the correct amount of \emph{physical overlap} between them which is  $\ph_i=L-(T^s_i - T_i)_{ \mod G}$.\footnote{Note that the physical overlap can take negative values.} If, in addition, the whole sequence is covered by the reads then the output of the algorithm is exactly the DNA sequence $\s$.

Let $\EE_1$ be the event that some read is merged incorrectly; this includes merging to the read's valid successor but at the wrong relative position as well as merging to an impostor\footnote{A read $R_j$ is an impostor to $R_i$ if $ W(\bR_i,\bR_j)\geq V_i$.}. Let $\EE_2$ be the event that the DNA sequence is not covered by the reads. The union of these events, $\EE_1\cup \EE_2$, contains the error event $\EE$. We first focus on event $\EE_1$.

Since the greedy algorithm merges reads according to overlap score, we may think of the algorithm as working in stages starting with an overlap score of $L$ down to an overlap  score of $0$. Thus $\EE_1$ naturally decomposes as $\EE_1=\cup_\ell \calA_\ell$, where $\calA_\ell$ is the event that the first error in merging occurs at stage $\ell$.

Now, we claim that
\begin{equation}\label{e:Adecomb}
\calA_\ell \subseteq \BB_{\ell}\cup \CC_{\ell}\,,
\end{equation}
where:
\begin{eqnarray}
\BB_\ell & \triangleq & \{\bR_j\neq \bR_i^s, U_j\leq \ell, V_i\leq \ell, W_{ij}=\ell \mbox{ for some $i \not = j$.}\} \label{eq:eventB}\\
\CC_\ell & \triangleq & \{\bR_j=\bR_i^s, U_j=V_i< \ell, W_{ij}=\ell \mbox{ for some $i \not = j$.}\} \label{eq:eventC}
\end{eqnarray}

If the event $\calA_{\ell}$ occurs,  then either there are two reads $\bR_i$ and $\bR_j$'s such that $\bR_i$ is merged to its successor $\bR_j$ but at an overlap larger than their physical overlap, or  there are two reads $\bR_i$ and $\bR_j$ such that $\bR_i$ is merged to $\bR_j$, an impostor. The first case implies the event $\CC_\ell$. In the second case, in addition to $W_{ij} = \ell$, it must be true that the physical overlaps $V_i, U_j \le \ell$, since otherwise at least one of these two reads would have been merged at an earlier stage. (By definition of $\calA_{\ell}$, there were no errors before stage $\ell$). Hence, in this second case, the event $\BB_\ell$ occurs.

Now we will bound  $\P(\BB_\ell)$ and $\P(\CC_\ell)$.

First, let us consider the event $\BB_\ell$. This is the event that two reads which are not neighbors with each other got merged by mistake. Intuitively, event $\BB_\ell$ says that the  pairs of reads that can potentially cause such confusion at stage $\ell$ are limited to those with short physical overlap with their own neighboring reads, since the ones with large physical overlaps have already been successfully merged to their correct neighbor by the algorithm in the early stages. In Figure \ref{fig:contigs}, these are the reads at the ends of the contigs that are formed by stage $\ell$.

For any two distinct reads $\bR_i$ and $\bR_j$, we define the following event
$$\BB_\ell ^ {ij } \triangleq   \{\bR_j\neq \bR_i^s, U_j\leq \ell, V_i\leq \ell, W_{ij}=\ell \}$$
From the definition of $\BB_\ell$ in~\eqref{eq:eventB}, we have $\BB_\ell \subseteq \cup_{ij} \BB_\ell ^{ij}$. Applying the union bound and considering the fact that $\BB_\ell ^{ij}$'s are equiprobable yields
$$\P(\BB_\ell) \leq N^2 \P(\BB_\ell^{12}).$$

Let $\DD$ be the event that the two reads $\bR_1$ and $\bR_2$ have no physical overlap.
Using the law of total probability  we obtain
\begin{align*}
\P(\BB_\ell^{12}) & = \P(\BB_\ell^{12}| \DD)\P(\DD)+\P(\BB_\ell^{12}|\DD^c)\P(\DD^c).
\end{align*}
Since $\DD^c$ happens only if $T_2 \in [T_1-L+1,T_1+L-1]$, $\P(\DD^c) \le \frac{2L}{G}$.  Hence,
\begin{align}\label{e:DtotProb}
\P(\BB_\ell^{12})& \leq  \P(\BB_\ell^{12}| \DD)+\P(\BB_\ell^{12}|\DD^c)\frac{2L}{G}.
\end{align}
We proceed with bounding  $\P(\BB_\ell^{12}| \DD)$ as follows,
\begin{align*}
\P(\BB_\ell^{12}| \DD) & = \P(U_2\leq \ell, V_1\leq \ell, W_{12}=\ell |\DD )\\
&\stackrel{(a)}{=}  \P(U_2\leq \ell, V_1\leq \ell | \DD )\P(  W_{12}=\ell | \DD )\\
&\stackrel{(b)}{=}  \P(U_2\leq \ell, V_1\leq \ell |  \DD )e^{- \ell H_2(\bp)},
\end{align*}
where $(a)$ comes from the fact that given $\DD$, the events $\{U_2\leq \ell, V_1\leq \ell \}$ and $\{W_{12}=\ell\}$ are independent, and $(b)$ follows from Lemma  \ref{lem self match iid} part 1.

Note that the event $\{\bR_2\neq \bR_1^s, U_2\leq \ell, V_1\leq \ell \}$ implies that no reads start in the intervals $[T_1, T_1+L-\ell-1]$ and $[T_2-L+\ell+1,T_2]$. Given $\DD$, if the two intervals overlap then there exists a read with starting position $T_i$ with $T_i\in [T_1, T_1+L-\ell-1]$ or $T_i\in [T_2-L+\ell+1,T_2]$. To see this, suppose $T_i$ is not in one of the intervals then $\bR_2$ has to be the successor of $\bR_1$ contradicting $\bR_2\neq \bR_1^s$.  If the two intervals are disjoint, then the probability that there is no read starting in them is given by
$$ \left(1-\frac{2(L-\ell)}{G} \right)^{N-2}.$$
Using the inequality $1-a\leq e^{-a}$, we obtain
\begin{align}\label{e:B1stTermBound}
\P(\BB_\ell^{12}| \DD)\leq e^{-2\lambda (L-\ell)(1-2/N)} 2^{- \ell H_2(\bp)}
\end{align}

To bound  $\P(\BB_\ell^{12}| \DD^c)$, we note that it has a non-zero value only if the length of physical overlap between $\bR_1$ and $\bR_2$ is less than $\ell$. Hence, we only consider physical overlaps of length less that $\ell$ and denote this event by $ \DD_1$. We proceed as follows,
\begin{align*}
\P(\BB_\ell^{12}| \DD^c) & \leq \P(U_2\leq \ell, V_1\leq \ell, W_{12}=\ell |\DD_1 )\\
& \leq \P( V_1\leq \ell, W_{12}=\ell |\DD_1 )\\
&\stackrel{(a)}{\leq }  \P(V_1\leq \ell |\DD_1 )\P(  W_{12}=\ell |\DD_1 )\\
&\stackrel{(b)}{\leq }  \P(V_1\leq \ell |\DD_1 ) 2^{-\ell H_2(\bp)/2}
\end{align*}
where $(a)$ comes from the fact that given $\DD_1$, the events $\{ V_1\leq \ell \}$ and $\{W_{12}=\ell\}$ are independent, and $(b)$ follows from Lemma  \ref{lem self match iid} part 2.  Since $\{ V_1\leq \ell \}$ corresponds to the event that there is no read starting in the interval $[T_1, T_1+L-\ell-1]$, we obtain
$$\P(V_1\leq \ell |\bar \DD_2 ) =\left(1-\frac{L-\ell}{G} \right)^{N-2}.$$
Using the inequality $1-a\leq e^{-a}$, we obtain
\begin{align*}
\P(\BB_\ell^{12}| \DD_2) &\leq    e^{-\lambda (L-\ell)(1-2/N)} 2^{- \ell H_2(\bp)/2}.
\end{align*}

Putting all terms together, we have
\begin{align}
\label{eq:pb}
\P(\BB_\ell) &\leq  q_\ell ^2 + 2\lambda L q_\ell.
\end{align}
where
\begin{align}\label{eq:ql}
q_\ell  = \lambda G e^{-\lam(L-\ell)(1-2/N)} 2^{-\ell H_2(\p)/2}.
\end{align}
The first term reflects the contribution from the reads with no physical overlap and the second term from the reads with physical overlap. Even though there are lots more of the former than the latter, the probability of confusion when the reads are physically overlapping can be much larger. Hence both terms have to be considered.



Let us define
$$\CC_\ell ^i  \triangleq  \{V_i< \ell, W(\bR_i, \bR_i^s)=\ell \}. $$
From the definition of $\CC_\ell$ in~\eqref{eq:eventC}, we have $\CC_\ell \subseteq \cup_{i} \CC_\ell ^{i}$. Applying the union bound and considering the fact that $\CC_\ell ^{i}$'s are equiprobable yields
$$\CC_\ell \leq N \P(\CC_\ell ^1). $$
Hence,
\begin{align*}
\P(\CC_\ell) & \leq N \P( W(\bR_i, \bR_i^s)=\ell  | V_i< \ell) \P(V_i< \ell)
\end{align*}
Applying Lemma  \ref{lem self match iid} part 2, we obtain
\begin{align*}
\P(\CC_\ell) & \leq N e^{- \ell H_2(\bp)/2} \left(1-\frac{L-\ell}{G} \right)^{N-1}.
\end{align*}
Using the inequality $1-a\leq e^{-a}$, we obtain
\begin{align}
\nonumber
\P(\CC_\ell) & \leq \lambda G e^{-\lambda (L-\ell)(1-1/N)} 2^{- \ell H_2(\bp)/2} \\
& \leq q_\ell \label{eq:pc}
\end{align}



Using the bounds, \eqref{eq:pb} and \eqref{eq:pc}, we get
\begin{align*}
\P(\EE_1)  =  \P(\cup_\ell \calA_\ell)
\le  \sum_{\ell =0}^L \P(\calA_\ell)
=  \sum_{\ell = 0}^L \P(\BB_\ell) + \P(\CC_\ell)
\le   \sum_{\ell = 0}^L q_\ell ^2 + (2 \lambda L+1)  q_\ell,
\end{align*}
where $q_\ell$ is defined in \eqref{eq:ql}. Since $q_\ell$ is monotonic in $\ell$, we can further bound $\P(\EE_1)$ by:
\begin{equation}
\label{eq:bb}
 \P(\EE_1) \le (L+1) \max \left \{ q_0^2+(2\lambda L + 1) q_0,  q_L^2 + (2\lambda L + 1) q_L^2 \right \}.
 \end{equation}

Since $\bar{L}>\frac{2}{H_2(\p)}$,  $q_L$ vanishes exponentially in $L$ and the second term on the right hand side of (\ref{eq:bb}) has no contribution asymptotically. Now, choose
$$ N = \frac{G}{L} \ln (GL^3).$$
A direct computation shows that for this choice of $N$, $q_0^2+(2\lambda L + 1) q_0 = O (\frac{1}{L^2} ).$ Hence,  the bound (\ref{eq:bb}) implies that $\P(\EE_1) \rightarrow 0$. Moreover the probability of no coverage $\P(\EE_2)$ also goes to zero with this choice of $N$. Hence, the probability of error in reconstruction $\P(\EE)$ also goes to zero. This implies that the minimum number of reads required to meet the reconstruction error probability of at most $\eps$ satisfies:
$$ \Nmin(\eps,G,L) \le \frac{G}{L} \ln (GL^3)$$
for sufficiently large $G$ and $L$ with $L/\log G = \bar{L}$.  Hence, this implies that
$$ \limsup_{L,G \rightarrow \infty, L/\log G = \bar{L}} = \frac{\Nmin(\eps,G,L)}{G/\bar{L}} \le 1.$$
Combining this with Lemma \ref{lem:cov_asym}, we get:
$$ \limsup_{L,G \rightarrow \infty, L/\log G = \bar{L}} = \frac{\Nmin(\eps,G,L)}{\Ncov(\eps,G,L)} \le 1.$$
But since $\Nmin(\eps,G,L) \ge \Ncov(\eps,G,L)$, it follows that:
$$ \lim_{L,G \rightarrow \infty, L/\log G = \bar{L}} = \frac{\Nmin(\eps,G,L)}{\Ncov(\eps,G,L)} = 1,$$
completing the proof.

\section{Proof of Theorem \ref{t:markovCapacity}}
The stationary distribution of the source is denoted by $\bp=(p_1,p_2,p_3,p_4)^t$.
Since $\bar Q$ has positive entries, the Perron-Frobenius theorem implies that its largest eigenvalue $\rho_{\max}(\bar Q)$ is real and positive and the corresponding eigenvector $\boldsymbol{\pi}$ has positive components. The following inequality is useful:
\begin{align}
\nonumber \sum_{i_1 i_2 \dots i_\ell}  q_{i_2 i_1}^2 q_{i_3 i_2}^2\dots q_{i_{\ell} i_{\ell-1}}^2 & \leq \max_{i_1\in \{1,2,3,4\}}\left\{{1 \over \pi_{i_1}} \right\}  \sum_{i_1 i_2 \dots i_\ell} \pi_{i_1} q_{i_2 i_1}^2 q_{i_3 i_2}^2\dots q_{i_{\ell} i_{\ell-1}}^2 \\
\nonumber & = \max_{i\in \{1,2,3,4\}}\left\{{1 \over \pi_{i}} \right\} || \bar Q^{\ell-1} \boldsymbol{\pi} ||_1\\
\nonumber & =\max_{i\in \{1,2,3,4\}}\left\{{1 \over \pi_{i}} \right\}  \left(\rho_{\max}(\bar Q)\right)^{\ell-1} ||  \boldsymbol{\pi} ||_1 \\
\label{eq:markov-inequality} & =\gamma  \left(\rho_{\max}(\bar Q)\right)^{\ell}
\end{align}
where $\gamma=\max_{i\in \{1,2,3,4\}}\left\{{||  \boldsymbol{\pi} ||_1 \over \pi_{i}\rho_{\max}(\bar Q) } \right\} $.

\subsection{Proof of Lemma \ref{l:markovConverse} }\label{s:proofMarkovConverse}

In \cite{Arr96}, Arratia \emph{et al.} showed that interleaved pairs of repeats are the dominant term causing non-recoverability. They also used poisson approximation to derive bounds on the event that $\s$ is recoverable from its $L$-spectrum. We take a similar approach to obtain an upper bound under the Markov model. First, we state the following theorem regarding Poisson approximation of the sum of indicator random variables, c.f. Arratia \emph{et al.} \cite{Arratia90}.

\begin{theorem}[Chen-Stein Poisson approximation]\label{th:chen-stein}
Let $W=\sum_{\alpha\in I} \chi_{\alpha}$ where $\chi_{\alpha}$'s are indicator random variables for some index set $I$. For each $\alpha$, $B_\alpha\subseteq I $ denotes the set of indices where $\chi_\alpha$ is independent from the $\sigma$-algebra generated by  all $\chi_\beta$ with $\beta\in I-B_\alpha$.  Let
\begin{align}
\label{eq:chen1} b_1 & = \sum_{\alpha \in I} \sum_{\beta \in B_\alpha} \E[\chi_\alpha]\E[\chi_\beta],\\
\label{eq:chen2} b_2 & = \sum_{\alpha \in I} \sum_{\beta \in B_\alpha, \beta\neq \alpha} \E[\chi_\alpha \chi_\beta].
\end{align}
Then
\begin{equation}
d_{\text{TV}}(W,W')   \leq {1-e^{-\theta} \over \theta}(b_1+b_2),
\end{equation}
where $\theta=\E[W]$ and $d_{\text{TV}}(W,W')$ is the total variation distance\footnote{The total variation distance between two distributions $W$ and $W'$ is defined by $d_{\text{TV}}(W,W') =\sup_{A\in \mathcal{F}} |\P_W(A)-\P_{W'} (A) |  $, where $\mathcal{F}$ is the $\sigma$-algebra defined for $W$ and $W'$} between $W$ and Poisson random variable $W'$ with the same mean.

\end{theorem}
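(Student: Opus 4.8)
The plan is to run \emph{Stein's method} for Poisson approximation. The starting point is the Stein characterization of the Poisson law: a random variable $Z$ supported on the nonnegative integers is $\mathrm{Poisson}(\theta)$ if and only if $\E[\theta f(Z+1)-Z f(Z)]=0$ for every bounded real-valued function $f$ on the nonnegative integers. For a fixed target set $A$ of nonnegative integers, I would introduce the solution $f_A$ of the Stein equation
$$\theta\, f_A(j+1)-j\, f_A(j)=\mathbf 1\{j\in A\}-\P(W'\in A),\qquad j\ge 0,$$
obtained by unrolling the recursion. The only property of $f_A$ used downstream is the uniform ``magic factor'' bound
$$\|\Delta f_A\|:=\sup_{j\ge 0}\bigl|f_A(j+1)-f_A(j)\bigr|\ \le\ \frac{1-e^{-\theta}}{\theta},$$
which holds simultaneously for all $A$. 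Granting this and recalling $d_{\text{TV}}(W,W')=\sup_A|\P(W\in A)-\P(W'\in A)|$, substituting $j=W$ into the Stein equation and taking expectations gives $\P(W\in A)-\P(W'\in A)=\E[\theta f_A(W+1)-W f_A(W)]$, so the theorem reduces to showing $\bigl|\E[\theta f_A(W+1)-W f_A(W)]\bigr|\le \|\Delta f_A\|\,(b_1+b_2)$ for an arbitrary fixed $A$.

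To prove that bound I would use the local dependence structure. Put $p_\alpha:=\E[\chi_\alpha]$, so $\theta=\E[W]=\sum_\alpha p_\alpha$, and for each $\alpha$ split $W=\chi_\alpha+V_\alpha+Z_\alpha$ with $V_\alpha:=\sum_{\beta\in B_\alpha,\,\beta\ne\alpha}\chi_\beta$ and $Z_\alpha:=\sum_{\beta\notin B_\alpha}\chi_\beta$; by the defining property of $B_\alpha$ the variable $\chi_\alpha$ is independent of $Z_\alpha$. Using $\chi_\alpha f_A(W)=\chi_\alpha f_A(1+V_\alpha+Z_\alpha)$ and the independence $\E[\chi_\alpha f_A(1+Z_\alpha)]=p_\alpha\E[f_A(1+Z_\alpha)]$, one obtains the decomposition
$$\E[\theta f_A(W+1)-W f_A(W)]=\sum_\alpha p_\alpha\,\E\bigl[f_A(W+1)-f_A(1+Z_\alpha)\bigr]-\sum_\alpha \E\bigl[\chi_\alpha\bigl(f_A(1+V_\alpha+Z_\alpha)-f_A(1+Z_\alpha)\bigr)\bigr].$$
In the first sum the arguments $W+1=1+\chi_\alpha+V_\alpha+Z_\alpha$ and $1+Z_\alpha$ differ by $\chi_\alpha+V_\alpha$, so each summand is at most $\|\Delta f_A\|\,p_\alpha\E[\chi_\alpha+V_\alpha]=\|\Delta f_A\|\sum_{\beta\in B_\alpha}p_\alpha p_\beta$ in absolute value, and summing over $\alpha$ gives $\|\Delta f_A\|\,b_1$. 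In the second sum the arguments differ by $V_\alpha$, so each summand is at most $\|\Delta f_A\|\,\E[\chi_\alpha V_\alpha]=\|\Delta f_A\|\sum_{\beta\in B_\alpha,\,\beta\ne\alpha}\E[\chi_\alpha\chi_\beta]$ in absolute value, and summing gives $\|\Delta f_A\|\,b_2$. Combining the two, applying the magic factor, and taking the supremum over $A$ yields the theorem.

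The step I expect to be the main obstacle is the uniform estimate $\|\Delta f_A\|\le (1-e^{-\theta})/\theta$ on the Stein solution. The route is: by linearity of the Stein equation in $A$ it suffices to treat singletons $A=\{k\}$; solving the recursion then gives the closed form $f_{\{k\}}(j+1)=\tfrac{j!\,e^{\theta}}{\theta^{j+1}}\,\pi_\theta(k)\bigl(\mathbf 1\{k\le j\}-\P(W'\le j)\bigr)$, where $\pi_\theta$ is the $\mathrm{Poisson}(\theta)$ pmf, and one then analyzes the sign pattern and unimodality of $j\mapsto f_{\{k\}}(j)$ to locate and evaluate the largest increment; this is the classical (but somewhat delicate) computation underlying the Chen--Stein bound. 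Everything else---the Stein characterization, the coupling decomposition above, and the combinatorial bookkeeping for $b_1$ and $b_2$---is routine. A final remark: in the fully general Chen--Stein theorem a third term $\min(1,\theta^{-1/2})\sum_\alpha \E\bigl|\E[\chi_\alpha-p_\alpha\mid \sigma(\chi_\beta:\beta\notin B_\alpha)]\bigr|$ is also present, but under the hypothesis here $\chi_\alpha$ is independent of $\sigma(\chi_\beta:\beta\notin B_\alpha)$, so that term vanishes and the clean bound stated above results.
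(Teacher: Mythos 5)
This theorem is not proved in the paper at all: it is imported as a known result, cited to Arratia et al.\ \cite{Arratia90}, so there is no internal proof to compare against. Your proposal is the standard Stein--Chen argument from that literature (Arratia--Goldstein--Gordon; Barbour--Holst--Janson), and it is correctly structured: the Stein equation for $\mathrm{Poisson}(\theta)$, the split $W=\chi_\alpha+V_\alpha+Z_\alpha$ with $\chi_\alpha$ independent of $Z_\alpha$ by the hypothesis on $B_\alpha$, the identity $\E[\chi_\alpha f_A(1+Z_\alpha)]=\E[\chi_\alpha]\,\E[f_A(1+Z_\alpha)]$, and the Lipschitz-in-increments bounds that produce exactly $b_1$ and $b_2$; your closed form $f_{\{k\}}(j+1)=\frac{j!\,e^{\theta}}{\theta^{j+1}}\pi_\theta(k)\bigl(\mathbf 1\{k\le j\}-\P(W'\le j)\bigr)$ is also the correct solution of the recursion, and your observation that the general $b_3$ term vanishes under the stated independence hypothesis is right. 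Two caveats. First, the accounting $p_\alpha\,\E[\chi_\alpha+V_\alpha]=\sum_{\beta\in B_\alpha}p_\alpha p_\beta$ silently uses $\alpha\in B_\alpha$; this is forced by the hypothesis (a nondegenerate $\chi_\alpha$ cannot be independent of itself), but it deserves a sentence. Second, the uniform increment bound $\sup_j|f_A(j+1)-f_A(j)|\le(1-e^{-\theta})/\theta$ is asserted rather than proved; you correctly identify it as the technical core and sketch the right route (reduce to singletons by linearity, use the sign pattern of the increments of $f_{\{k\}}$, which are positive only at $j=k$, together with $f_{A^c}=-f_A$ for the lower bound), but as written this lemma remains an IOU, so the proposal is a correct and faithful outline of the classical proof rather than a complete self-contained one.
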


\paragraph{Proof of Lemma \ref{l:markovConverse}}

 Let $\UU$ denote the event that there is no two pairs of interleaved repeats in the DNA sequence. Given the presence of $k$ repeats in $\s$, the probability of $\UU$ can be found by using  the Catalan numbers \cite{Arr96}. This probability is $2^k /(k+1)!$. If $Z$ denotes the random variable indicating the number of repeats in the DNA sequence, we obtain,
$$\P(\UU)=\sum_{k} \frac{2^k}{(k+1)!}\P(Z=k).$$

To approximate $\P(\UU)$, we partition the sequence as
$$\s=S_1 \bX_1 S_{L+2} \bX_2 S_{2(L+1)+1}\bX_3 ~\dots S_{(K-1)(L+1)+1} \bX_K$$
where $\bX_i=\s [(i-1)(L+1)+2,i(L+1)]$ and $K={G\over L+1}$.
Each $\bX_{i}$ has length $L$ and will be denoted by $\bX_i=X_{i1}\dots X_{iL}$. We write $\bX_i \sim \bX_j$ with $i\neq j$ to mean  $X_{i1}\neq X_{j1}$ and $X_{ik}= X_{jk}$ for $2 \leq k \leq L$.  In other words, $\bX_i \sim \bX_j$ means that there is a repeat of length at least $L-1$ starting from locations $(i-1)(L+1)+3$ and $(j-1)(L+1)+3$ in the DNA sequence and the repeat cannot be extended from left. The requirement $X_{i1}\neq X_{j1}$ is due to the fact that allowing left extension ruins accuracy of Poisson approximation as repeats appear in clumps.

Let $I=\{(i,j) | 1\leq i < j\leq K \}$. Let $\chi_{\alpha}$ with $\alpha \in I$ denote the indicator random variable for a repeat at $\alpha=(i,j)$, i.e., $\chi_{\alpha}=\mathbf{1}(\bX_i \sim \bX_j)$.  Let $W=\sum_{\alpha \in I} X_{\alpha}$. Clearly,
$$\P(\UU) \leq \sum_{k} \frac{2^k}{(k+1)!}\P(W=k).$$

Letting  $\bY=S_1S_{L+2}\dots S_{(K-1)(L+1)+1}$, we obtain
$$\P(\UU) \leq \sum_{\bY}  \sum_{k}\frac{2^k}{(k+1)!}\P(W=k | \bY)\P(\bY).$$
For any $\bY$, let $\epsilon$ be the total variation distance between $W$ and its corresponding Poisson distribution $W'$ with mean $\theta_{\bY}=\E[W| \bY]$. Then, we obtain
\begin{align*}
\P(\UU) & \leq \sum_{\bY} \left( \epsilon+ e^{-\theta_\bY} \sum_{k=0}^{\infty} \frac{(2\theta_\bY)^k}{k!(k+1)!}\right)\P(\bY)\\
& = \epsilon +  \sum_{\bY} e^{-\theta_\bY} \sum_{k=0}^{\infty} \frac{(2\theta_\bY)^k}{k!(k+1)!} \P(\bY)\\
& \leq \epsilon +  \sum_{\bY} e^{-\theta_\bY} \sum_{k=0}^{\infty} \left( \frac{(\sqrt{2\theta_\bY})^k}{k!} \right)^2 \P(\bY)\\
& \leq \epsilon +  \sum_{\bY} e^{-\theta_\bY} \left( \sum_{k=0}^{\infty}  \frac{(\sqrt{2\theta_\bY})^k}{k!} \right)^2 \P(\bY)\\
& \leq \epsilon +  \sum_{\bY} e^{-\theta_\bY+2\sqrt{2\theta_\bY}} \P(\bY)
\end{align*}

We assume $\theta_\bY \geq 8$ for all $\bY$ and let $\theta=\min_{\bY} \theta_\bY$. For this region, the exponential factor within the summation is monotonically decreasing and
\begin{equation}\label{eq:interleave-bound}
\P(\UU) \leq \epsilon +  e^{-\theta+2\sqrt{2\theta}}.
\end{equation}

To calculate the bound, we need to obtain an upper bound for $\epsilon$ and a lower bound for $\theta$. We start with the lower bound on $\theta$.
From Markov property and for a given $\alpha=(i,j)$,
\begin{align*}
\E[\chi_\alpha | \bY ] & =  \sum_{i_1 i_2 \dots i_L} \P(X_{i1}\neq X_{j1} | \bY)  q_{i_2 i_1}^2 q_{i_3 i_2}^2\dots q_{i_{L} i_{L-1}}^2\\
& \geq \min_{} \left\{ \frac{\P(X_{i1}\neq X_{j1} | \bY)}{\pi_{i_1}}  \right\} \sum_{i_1 i_2 \dots i_L} \pi_{i_1}  q_{i_2 i_1}^2 q_{i_3 i_2}^2\dots q_{i_{L} i_{L-1}}^2\\
&= \zeta  \left(\rho_{\max}(\bar Q)\right)^{L}
\end{align*}
where $\zeta=\min_{} \left\{ \frac{\P(X_{i1}\neq X_{j1} | \bY)}{\pi_{i_1}\rho_{\max}(\bar Q)}  \right\} $. Therefore,

\begin{align}\label{eq:theta}
\theta_\bY =\sum_{\alpha\in I} \E[\chi_\alpha | Y ]\geq {K \choose 2} \zeta  \left(\rho_{\max}(\bar Q)\right)^{L}=\theta.
\end{align}

To bound $\epsilon$, we make use of the Chen-Stein method.  Let $B_{\alpha}=\{(i',j') \in I | i'=i~\text{or}~j'=j\}.$ Note that $B_{\alpha}$ has cardinality $2K-3$. Since given $\bY$, $\chi_{\alpha}$  is independent of the sigma-algebra generated by all $\chi_{\beta}$, $\beta\in I- B_{\alpha}$, we can use Theorem \ref{th:chen-stein} to obtain
\begin{equation}
 d_{\text{TV}}(W,W' | \bY) \leq \frac{b_1+b_2}{\theta_{\bY}},
\end{equation}
where $b_1$ and $b_2$ are defined in (\ref{eq:chen1}) and  (\ref{eq:chen2}), respectively. Since $\E[X_{\alpha}X_{\beta}] =\E [X_\alpha ]\E[X_\beta]$ for all $\alpha \neq \beta \in B_\alpha$, we can conclude that $b_2 \leq b_1$. Therefore,
\begin{equation}\nonumber
 d_{\text{TV}}(W,W' | \bY) \leq \frac{2b_1}{\theta_{\bY}}.
\end{equation}
Since $\theta \leq \theta_{\bY}$,
\begin{equation}\nonumber
 d_{\text{TV}}(W,W' | \bY) \leq \frac{2b_1}{\theta}.
\end{equation}
In order to compute $b_1$, we need an upper bound on $\E[\chi_\alpha | \bY ]$. By using (\ref{eq:markov-inequality}), we obtain
\begin{align*}
\E[\chi_\alpha | \bY ] & =  \sum_{i_1 i_2 \dots i_L} \P(X_{i1}\neq X_{j1} | \bY)  q_{i_2 i_1}^2 q_{i_3 i_2}^2\dots q_{i_{L} i_{L-1}}^2\\
& \leq \sum_{i_1 i_2 \dots i_L}   q_{i_2 i_1}^2 q_{i_3 i_2}^2\dots q_{i_{L} i_{L-1}}^2\\
& \leq \gamma  \left(\rho_{\max}(\bar Q)\right)^{L}.
\end{align*}
Hence,
\begin{align*}
b_1 & = \sum_{\alpha \in I} \sum_{\beta \in B_\alpha} \E[\chi_\alpha | \bY]\E[\chi_\beta | \bY],\\
&\leq (2K-3){K \choose 2} \gamma^2 \left(\rho_{\max}(\bar Q)\right)^{2L}\\
& = \frac{\gamma^2 \theta^2 (2K-3)}{\zeta^2 {K\choose 2}}\\
& \leq \frac{4\gamma^2 \theta^2 }{\zeta^2 K}.
\end{align*}
Using the bound for $b_1$, we have the following bound for the total variation distance.
\begin{equation}\nonumber
 d_{\text{TV}}(W,W' | \bY) \leq  \frac{8\gamma^2 \theta }{\zeta^2 K}.
\end{equation}
Form the above inequality, we can choose $\epsilon=\frac{8\gamma^2 \theta }{\zeta^2 K}$. Substituting in (\ref{eq:interleave-bound}) yields
\begin{equation}\label{eq:final-bound}
\P(\UU) \leq \frac{8\gamma^2 \theta }{\zeta^2 K} +  e^{-\theta+2\sqrt{2\theta}}.
\end{equation}

From the definition of $\theta$ in (\ref{eq:theta}), we have
$$\theta= {\zeta (K-1)(L+1)^2 \over 2K} G^{2-\bar L \log \left({1\over \rho_{\max}(\bar Q)}\right)}.$$
Therefore, if $2> \bar L \log \left({1\over \rho_{\max}(\bar Q)}\right)$ then $\theta$ and $\theta\over K$ go, respectively, to infinity and zero  exponentially fast.  Since the right hand side of (\ref{eq:final-bound}) approaches zero, we can conclude that with probability $1-o(1)$ there exists a two pairs of interleaved repeats in the sequence. This completes the proof.

\subsection{Proof of Lemma \ref{p:MarkvoAchievability}} \label{s:markovAchievabilityProof}

The proof follows closely from that of the i.i.d. model. In fact, we only need to replace Lemma \ref{lem self match iid} with the following lemma.

\begin{lemma}\label{lem self match markov}
For any distinct substrings $\bX$ and $\bY$ of length $\ell$ of the Markov DNA sequence:
\begin{enumerate}
\item If the strings have no physical overlap, the probability that they are identical is   bounded above by $\gamma  2^{\ell \log  \left(\rho_{\max}(\bar Q)\right)}$.
\item If the strings have physical overlap, the probability that they are identical is  bounded above by $ \sqrt{\gamma}  2^{\ell \log  \left(\rho_{\max}(\bar Q)\right)/2}$.
\end{enumerate}
\end{lemma}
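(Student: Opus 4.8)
The plan is to rerun the proof of Lemma~\ref{lem self match iid} essentially verbatim, replacing every coordinate‑independence argument (valid for i.i.d.\ strings) by a spectral estimate for $\bar Q=[q_{ij}^2]$; the only inputs needed beyond \eqref{eq:markov-inequality} are two consequences of Perron--Frobenius. Write $\rho:=\rho_{\max}(\bar Q)$, so $2^{\ell\log\rho}=\rho^\ell$, and recall $\bar Q\boldsymbol\pi=\rho\boldsymbol\pi$ with $\boldsymbol\pi>0$ and $\gamma=\max_i\|\boldsymbol\pi\|_1/(\pi_i\rho)=\|\boldsymbol\pi\|_1/(\rho\min_i\pi_i)$. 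The two facts are: (i) $(\bar Q^m)_{ii}\,\pi_i\le(\bar Q^m\boldsymbol\pi)_i=\rho^m\pi_i$, hence $(\bar Q^m)_{ii}\le\rho^m$ for every $i$; and (ii) $\rho_{\max}(\bar Q^{\circ v})\le\rho^v$ for the Hadamard power $\bar Q^{\circ v}$, because $\bar Q^{\circ v}$ is the principal submatrix of the Kronecker power $\bar Q^{\otimes v}$ on the ``diagonal'' index tuples and $\rho_{\max}(\bar Q^{\otimes v})=\rho^v$. Note also $\gamma\ge 1/\rho$ (since $\|\boldsymbol\pi\|_1\ge\min_i\pi_i$), so $\sqrt\gamma\ge\rho^{-1/2}$; this slack is exactly what lets a ``$\rho^{(\ell-1)/2}$'' estimate meet the claimed ``$\sqrt\gamma\,\rho^{\ell/2}$''.

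\noindent\emph{Part 1 (disjoint $\bX,\bY$).} For a fixed word $\mathbf a$, the Markov property factors $\P(\bX=\mathbf a,\bY=\mathbf a)$ as $p_{a_1}$ times $\big(\prod_{t=2}^{\ell}q_{a_ta_{t-1}}\big)$ (the transitions internal to $\bX$), times the \emph{same} product again (since $\bY$ is forced to be the identical word $\mathbf a$), times multi‑step transition factors for the one or two chain segments joining the two windows and closing the circle; these extra factors, like $p_{a_1}$, are all $\le1$. Summing over $\mathbf a$ and dropping the $\le1$ factors leaves $\P(\bX=\bY)\le\sum_{\mathbf a}\prod_{t=2}^{\ell}q_{a_ta_{t-1}}^2$, which is the left side of \eqref{eq:markov-inequality}; hence $\P(\bX=\bY)\le\gamma\,\rho^\ell$.

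\noindent\emph{Part 2 (overlapping $\bX,\bY$).} As in Lemma~\ref{lem self match iid}, a physical overlap of length $k<\ell$ forces $S_t=S_{t+d}$ for $\ell$ consecutive $t$, with $d:=\ell-k\ge1$; equivalently the window $\bS_1^{2\ell-k}$ (length $N:=2\ell-k$) is $d$‑periodic, hence determined by one period $(c_1,\dots,c_d)$, so $\P(\bX=\bY)=\sum_{c_1,\dots,c_d}p_{c_1}\prod_{j=1}^{d}\big(q_{c_{j+1}c_j}\big)^{e_j}$ (indices cyclic), where $e_j$ counts the traversals of the $j$‑th cyclic transition, $\sum_j e_j=N-1$, and $e_j\in\{q,q+1,q+2\}$ with $q=\lfloor\ell/d\rfloor$. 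Since $\bar Q$‑entries are $\le1$ we may replace $q_{c_{j+1}c_j}^{e_j}=\bar Q_{c_{j+1}c_j}^{e_j/2}$ by $\bar Q_{c_{j+1}c_j}^{\,n_j}$ with $n_j:=\lfloor e_j/2\rfloor$; the discarded exponent is $\sum_j(e_j-2n_j)\le d$, so $\sum_j n_j\ge(N-1-d)/2=(\ell-1)/2$. What remains is a $\boldsymbol p$‑weighted diagonal sum of a cyclic product of Hadamard powers of $\bar Q$, which by (i)--(ii) collapses to $\le\rho^{\sum_j n_j}\le\rho^{(\ell-1)/2}\le\sqrt\gamma\,\rho^{\ell/2}=\sqrt\gamma\,2^{\ell\log\rho/2}$.

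\noindent\emph{Main obstacle.} The substance is entirely in Part 2, and precisely in the small‑period regime ($d$ small, $\bX$ highly repetitive, $e_j\approx\ell/d$ large): the crude bound $q^{e_j}\le q^2$ alone yields only $\rho^d$, far short of $\rho^{\ell/2}$, so one must retain roughly half of the total transition budget $N-1=2\ell-k-1$ — this is the Markov counterpart of $\pi_m\le\pi_2^{m/2}$ from the i.i.d.\ proof. The one delicate step is bounding the cyclic sum $\sum_{\mathbf c}p_{c_1}\prod_j\bar Q_{c_{j+1}c_j}^{\,n_j}$ with \emph{non‑uniform} exponents $n_j$ by $\rho^{\sum_j n_j}$; a weighted AM--GM along each edge (or a direct transfer‑matrix estimate using $\boldsymbol\pi$) reduces this to the uniform case covered by (i) and (ii). Once Lemma~\ref{lem self match markov} is in hand, the proof of Lemma~\ref{p:MarkvoAchievability} is word‑for‑word that of Theorem~\ref{t:mainCapacity}, Part~2, with $e^{-\ell H_2(\bp)}$ and $e^{-\ell H_2(\bp)/2}$ replaced by $\gamma\rho^\ell$ and $\sqrt\gamma\,\rho^{\ell/2}$; the $\ell$‑independent constants $\gamma,\sqrt\gamma$ are asymptotically harmless.
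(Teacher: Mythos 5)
Part 1 of your argument and the setup of Part 2 (reducing the overlap event to a periodic window determined by one period of length $d=\ell-k$) match the paper. The gap is in the step where you claim the weighted cyclic sum $\sum_{\mathbf c}p_{c_1}\prod_{j}\bar Q_{c_{j+1}c_j}^{\,n_j}$ ``collapses'' to $\rho^{\sum_j n_j}$, $\rho:=\rho_{\max}(\bar Q)$, ``by (i)--(ii)''. Neither fact gives this: (i) concerns powers of the single matrix $\bar Q$, while the quantity at hand is the ($\bp$-weighted) diagonal of a \emph{product of different Hadamard powers} $\bar Q^{\circ n_d}\cdots\bar Q^{\circ n_1}$; and (ii) only bounds the spectral radius of each factor separately. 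For entrywise nonnegative matrices the trace of a product is not controlled by the product of spectral radii (e.g.\ $A=E_{12}$, $B=E_{21}$ have $\rho_{\max}=0$ but $\mathrm{tr}(AB)=1$), so some additional structure must be exploited. The natural repair---a common positive vector with $\bar Q^{\circ v}\boldsymbol{\pi}\le\rho^{v}\boldsymbol{\pi}$, which would make the product argument and the diagonal bound go through---is false in general, because an entry $q_{ij}^2$ of $\bar Q$ can exceed $\rho_{\max}(\bar Q)$ (easy to arrange with a column-stochastic $Q$), so for large $v$ the inequality fails. Thus the ``weighted AM--GM / transfer-matrix'' step you defer as a detail is exactly the missing proof, and I do not see how to complete it from the ingredients you list; crude fixes (per-walk cycle-mean bounds plus path sums for the leftover edges) pick up constants that grow exponentially in $d$ or in $n_{\min}$ and do not uniformly yield $\sqrt{\gamma}\,\rho^{\ell/2}$.

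The paper avoids this entirely with a single Cauchy--Schwarz step: writing the periodic event as $\sum_{i_1\dots i_{\ell-k}}p_{i_1}\bigl(\prod q\bigr)^{q}\bigl(\prod q\bigr)$, it bounds this by $\bigl(\sum_i p_i^2\bigr)^{1/2}\bigl(\sum_{i_1\dots i_{\ell-k}}(\prod q^2)^{q}(\prod q^2)\bigr)^{1/2}$, uses $\sum_i p_i^2\le 1$, and then observes that the constrained (periodic) sum of squared-transition products is dominated, term by term, by the unconstrained sum over all $2\ell-k$ indices of $\prod q^2$, to which \eqref{eq:markov-inequality} applies directly, giving $\sqrt{\gamma\,\rho^{2\ell-k}}\le\sqrt{\gamma}\,\rho^{\ell/2}$. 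In other words, the squaring is done globally (at the price of a square root) rather than by pairing transitions edge by edge, which is what makes the spectral bound applicable without any statement about products of Hadamard powers. I recommend you either adopt this Cauchy--Schwarz device or supply a genuine proof of your product-of-Hadamard-powers estimate; as written, Part 2 of your proposal has a real gap at its central step. (Minor points---the exact values of the exponents $e_j$ and the slack $\sqrt{\gamma}\ge\rho^{-1/2}$---are immaterial.)
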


%

\def\bS{\mathbf{S}}
\def\bU{\mathbf{U}}
\def\bV{\mathbf{V}}

\begin{proof}
1- From Markov property, we can show that
\begin{align*}
\P(\bX=\bY) & = \sum_{i_1 i_2 \dots i_\ell} \P(X_1=Y_1=i_1) q_{i_2 i_1}^2 q_{i_3 i_2}^2\dots q_{i_{\ell} i_{\ell-1}}^2\\
& \leq \sum_{i_1 i_2 \dots i_\ell} q_{i_2 i_1}^2 q_{i_3 i_2}^2\dots q_{i_{\ell} i_{\ell-1}}^2\\
&\leq  \gamma \left(\rho_{\max}(\bar Q)\right)^{\ell},
\end{align*}
where the last line follows from (\ref{eq:markov-inequality}).

2- Without loss of generality, we assume that $\bX=\bS [1, \ell]$ and $\bY=\bS[\ell-k+1,2\ell - k]$ for some $k\in\{1,\dots,\ell-1\}$. Let $q$ and $r$ be the quotient and remainder of dividing $2\ell-k$ by $\ell-k$. From decomposition of  $\bS [1, 2\ell - k]$  as $\bU_1 \bU_2 \dots \bU_q \bV$ where $|\bU_i|=\ell-k$ for all $i\in\{1,\dots,q\}$ and $|\bV|=r$, one can deduce that $\bX=\bY$ if and only if $\bU_i=S_1S_2\dots S_{\ell-k}$ for all $i\in\{1,\dots,q\}$ and $\bV= S_1 S_2 \dots S_r$. Hence, we have
\begin{align*}
\P(\bX=\bY) & = \P(\bS [1, 2\ell - k]=\bU \bU \dots \bU \bV)\\
& =\sum_{i_1 i_2 \dots i_{\ell-k}} p_{i_1}  \left( q_{i_2 i_1} q_{i_3 i_2}\dots q_{i_1 i_{\ell-k}}\right)^q \left( q_{i_2 i_1} q_{i_3 i_2}\dots q_{i_r i_{r-1}}\right) \\
& \stackrel{(a)}{\leq } \sqrt {\sum_{i_1}p_{i_1}^2} \sqrt{ \sum_{i_1 i_2 \dots i_{\ell-k}} \left( q_{i_2 i_1}^2 q_{i_3 i_2}^2\dots q_{i_1 i_{\ell-k}}^2\right)^q \left( q_{i_2 i_1}^2 q_{i_3 i_2}^2\dots q_{i_r i_{r-1}}^2\right)} \\
& \stackrel{(b)}{\leq }  \sqrt{ \sum_{i_1 i_2 \dots i_{\ell-k}} \left( q_{i_2 i_1}^2 q_{i_3 i_2}^2\dots q_{i_1 i_{\ell-k}}^2\right)^q \left( q_{i_2 i_1}^2 q_{i_3 i_2}^2\dots q_{i_r i_{r-1}}^2\right)} \\
& \stackrel{(c)}{\leq } \sqrt{ \sum_{i_1 i_2 \dots i_{2\ell-k}} q_{i_2 i_1}^2 q_{i_3 i_2}^2\dots q_{i_{2\ell-k} i_{2\ell-k-1}}^2} \\
& \stackrel{(d)}{\leq } \sqrt{ \gamma \left(\rho_{\max}(\bar Q)\right)^{2\ell - k}}\\
&= \sqrt{\gamma} \left(\rho_{\max}(\bar Q)\right)^{\ell - {k\over 2}} \\
& \stackrel{(e)}{\leq }  \sqrt{\gamma} \left(\rho_{\max}(\bar Q)\right)^{\ell \over 2},
\end{align*}
where $(a)$ follows from the Cauchy-Schwarz inequality and $(b)$ follows from the fact that $\sum_{i} P_i^2 \leq 1$. In $(c)$, some extra terms are added to the inequality. $(d)$ comes from (\ref{eq:markov-inequality}) and finally $(e)$ comes from the fact that $k < \ell$ and $\rho_{\max}(\bar Q)\leq 1$.
\end{proof}

%
%
%
%
%

\section{Proof of Theorem  \ref{p:general_IIDachievability}}\label{sec:noisy-reads}
As explained in Section \ref{sec:formulation-noisy}, the criterion for overlap scoring is based the MAP rule for deciding between two hypotheses: $H_0$ and $H_1$. The null hypothesis $H_0$ indicates that two reads are from same physical source subsequence. Formally, we say two reads $\bR_i$ and $\bR_j$ have  the overlap score $W_{ij}=w$ if $w$ is the longest suffix of $\bR_i$ and prefix of $\bR_j$  passing the criterion (\ref{eq:criterion for decision}).

\def\sanov {f(\ell)}
\def\san#1 {f(#1)}

Let $\sanov=\left(1+\ell \right)^{|\mathcal{X}|}$, where $|\mathcal{X} |$ is the cardinality of the channel's output symbols. The following theorem is a standard result in the hypothesis testing problem, c.f. Chapter 11.7 of \cite{CT06}.

\begin{theorem}\label{thm:hypothesis}
Let $\bX$ and $\bY$ be two random sequences of length $\ell$. For the  given hypotheses $H_0$ and $H_1$ and their corresponding MAP rule  (\ref{eq:criterion for decision}),
$$\P(H_0 | H_1) \leq \sanov 2^{-\ell D(P_{\mu} ||P_X \cdot P_Y)} $$ and
$$\P(H_1 | H_0) \leq \sanov 2^{-\ell D(P_{\mu} ||P_{X,Y})},$$
where
$$P_\mu(x,y) := \frac{[P_{X,Y}(x,y)]^{\mu}[P_X(x)P_Y(y)]^{1-\mu}}{\sum_{a,b} [P_{X,Y}(a,b)]^{\mu}[P_X(a)P_Y(b)]^{1-\mu}}$$ and $\mu$ is the solution of
$$ D(P_{\mu} ||P_X \cdot P_Y)-D(P_{\mu} ||P_{X,Y})=\theta.$$
\end{theorem}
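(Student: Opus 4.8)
This is the classical Hoeffding--Chernoff--Stein analysis of a binary hypothesis test via the method of types (Ch.~11.7--11.9 of~\cite{CT06}), so I only sketch it. Under $H_0$ the pairs $(X_j,Y_j)$, $j=1,\dots,\ell$, are i.i.d.\ $\sim P_{X,Y}$, while under $H_1$ they are i.i.d.\ $\sim P_X\cdot P_Y$. Writing $\Lambda(a,b):=\log\frac{P_{X,Y}(a,b)}{P_X(a)P_Y(b)}$, the MAP rule~(\ref{eq:criterion for decision}) decides $H_0$ exactly when $\frac1\ell\sum_{j=1}^\ell\Lambda(x_j,y_j)\ge\theta$. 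The first step is that this statistic depends on the sample only through its joint type $\nu$ (the empirical distribution of the $\ell$ pairs), since $\frac1\ell\sum_j\Lambda(x_j,y_j)=\sum_{a,b}\nu(a,b)\Lambda(a,b)=:\langle\nu,\Lambda\rangle$. Hence, at the level of types, the $H_0$-acceptance region is the half-space $\mathcal{A}=\{\nu:\langle\nu,\Lambda\rangle\ge\theta\}$ and the closure of the rejection region is the opposite half-space $\{\nu:\langle\nu,\Lambda\rangle\le\theta\}$.

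Next I would apply the method-of-types (Sanov) bound: the number of joint types of $\ell$ symbol-pairs is polynomial in $\ell$ (absorbed into the factor $\sanov$), and for a closed set $\mathcal{S}$ of distributions the probability, under i.i.d.\ $R$, that the joint type lies in $\mathcal{S}$ is at most $\sanov\,2^{-\ell\inf_{\nu\in\mathcal{S}}D(\nu||R)}$. Taking $R=P_X\cdot P_Y$, $\mathcal{S}=\mathcal{A}$ gives $\P(H_0\mid H_1)\le\sanov\,2^{-\ell d_1}$ with $d_1:=\inf\{D(\nu||P_X\cdot P_Y):\langle\nu,\Lambda\rangle\ge\theta\}$, and taking $R=P_{X,Y}$, $\mathcal{S}=\{\langle\nu,\Lambda\rangle\le\theta\}$ gives $\P(H_1\mid H_0)\le\sanov\,2^{-\ell d_0}$ with $d_0:=\inf\{D(\nu||P_{X,Y}):\langle\nu,\Lambda\rangle\le\theta\}$. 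So it remains to identify $d_1=D(P_\mu||P_X\cdot P_Y)$ and $d_0=D(P_\mu||P_{X,Y})$.

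These two information projections are the crux. Each minimizes a KL divergence---strictly convex in its first argument---over a half-space cut out by the \emph{linear} functional $\nu\mapsto\langle\nu,\Lambda\rangle$, so the minimizer lies on the boundary hyperplane $\langle\nu,\Lambda\rangle=\theta$ and, by a Lagrange-multiplier computation, is an exponential tilt of the reference distribution along $\Lambda$. Tilting $P_X\cdot P_Y$ along $\Lambda$ produces the distributions proportional to $P_X(a)P_Y(b)\big(\frac{P_{X,Y}(a,b)}{P_X(a)P_Y(b)}\big)^s=[P_{X,Y}(a,b)]^s[P_X(a)P_Y(b)]^{1-s}$, which is exactly the family $P_s$ of the statement; tilting $P_{X,Y}$ along $\Lambda$ produces the distributions proportional to $[P_{X,Y}(a,b)]^{1+t}[P_X(a)P_Y(b)]^{-t}$, the \emph{same} family under $t=s-1$. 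Hence both projections land on a single member $P_\mu$: the one on the boundary hyperplane, whose parameter is pinned down by $\langle P_\mu,\Lambda\rangle=\theta$. A one-line rewriting gives $\langle P_\mu,\Lambda\rangle=D(P_\mu||P_X\cdot P_Y)-D(P_\mu||P_{X,Y})$, so this is precisely the equation $D(P_\mu||P_X\cdot P_Y)-D(P_\mu||P_{X,Y})=\theta$ that determines $\mu$ in the statement, and substituting $P_\mu$ into the two infima yields $d_1$ and $d_0$. The only point needing care is the range of $\theta$: the computation is valid when $\theta$ lies strictly between $\E_{P_X\cdot P_Y}[\Lambda]=-D(P_X\cdot P_Y||P_{X,Y})$ and $\E_{P_{X,Y}}[\Lambda]=D(P_{X,Y}||P_X\cdot P_Y)$, equivalently $\mu\in(0,1)$, which is the only regime relevant to the application; for $\theta$ outside this interval one constraint is vacuous and its error bound is the trivial $1$, while the other exponent reduces to a Chernoff--Stein exponent. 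Since only upper bounds are asserted, the uniqueness and Pythagorean properties of information projections are needed only insofar as they confirm that the tilted distribution is indeed the minimizer.
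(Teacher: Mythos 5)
Your sketch is correct and is exactly the argument the paper relies on: the paper does not prove Theorem~\ref{thm:hypothesis} itself but cites Chapter~11.7 of \cite{CT06}, whose proof is precisely your method-of-types/Sanov bound with the information projection onto the half-space $\{\nu:\langle\nu,\Lambda\rangle\ge\theta\}$ identified as the tilted distribution $P_\mu$ on the boundary $D(P_\mu\|P_X\cdot P_Y)-D(P_\mu\|P_{X,Y})=\theta$. Your caveat about the admissible range of $\theta$ (equivalently $\mu\in(0,1)$) is also the standard one and matches the regime used in the paper's application.
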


Parallel to the proof of the noiseless case, we first prove the following lemma concerning erroneous merging due to impostor reads.

\begin{lemma}[False alarm]\label{l:false_alarm}
For any distinct $\ell$-mers $\bX$ and $\bY$ from the set of reads:
\begin{enumerate}
\item If the two $\ell$-mers have no physical overlap, the probability that $H_0$ is accepted is 
\begin{equation}
\sanov 2^{-\ell D(P_{\mu} ||P_X \cdot P_Y)}.
\end{equation}
\item If the two $\ell$-mers have physical overlap, the probability that $H_0$ is accepted is
\begin{equation}\label{eq:false_positive_cor}
\gamma \sanov 2^{ - \ell D(P_{\mu} ||P_X \cdot P_Y) /2},
\end{equation} 
where $\gamma$ is a constant.
\end{enumerate}
\end{lemma}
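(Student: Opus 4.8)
The plan is to treat the two parts separately and in both cases reduce to the single‑pair bound of Theorem~\ref{thm:hypothesis}; the first part is essentially immediate and the second requires handling the correlations created by physical overlap.

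For part~1, I would first note that the assertion is a statement about the \emph{distribution} of the pair $(\bX,\bY)$ fed to the test \eqref{eq:criterion for decision}. When the two $\ell$‑mers are physically disjoint, the DNA positions under $\bX$ and those under $\bY$ are disjoint, so the underlying source symbols are independent (the source is i.i.d.); since the read noise is independent across reads and across positions, $x_1,\dots,x_\ell$ are i.i.d.\ with marginal $P_X(x)=\sum_s P_S(s)Q(x\mid s)$, $y_1,\dots,y_\ell$ are i.i.d.\ with marginal $P_Y$, and $\bX\perp\bY$. Thus $(\bX,\bY)$ is distributed exactly as under the alternative hypothesis $H_1$, and ``$H_0$ is accepted'' is precisely the error event $\{H_0\mid H_1\}$. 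Theorem~\ref{thm:hypothesis} then gives $\P(\text{accept }H_0)\le f(\ell)\,2^{-\ell D(P_\mu\|P_X\cdot P_Y)}$, which is the claim.

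For part~2 I would set up the dependency structure explicitly. If the source segments under $\bX$ and $\bY$ overlap by a length $k\in\{1,\dots,\ell-1\}$, write their union as a single i.i.d.\ window $\sigma_1,\dots,\sigma_{2\ell-k}$; then the $m$‑th comparison term $T_m:=g(x_m,y_m)$, with $g(x,y):=\log\frac{P_{X,Y}(x,y)}{P_X(x)P_Y(y)}$, involves a noisy copy of $\sigma_m$ and an \emph{independently} noisy copy of $\sigma_{m+\ell-k}$. Hence two terms $T_m,T_{m'}$ share a source symbol iff $|m-m'|=\ell-k$, so the indices $\{1,\dots,\ell\}$ split into $\ell-k$ chains (the residue classes modulo $\ell-k$); different chains use disjoint blocks of $(\sigma_j)$ and are therefore mutually independent, while within a chain consecutive terms share exactly one source symbol and all noise variables are fresh. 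The log‑likelihood statistic is $T=\sum_{m=1}^{\ell}T_m$ and ``$H_0$ accepted'' is $\{T\ge\ell\theta\}$. I would then (i) apply the Chernoff/Markov bound $\P(T\ge\ell\theta)\le 2^{-s\ell\theta}\,\E[2^{sT}]$ at $s=\mu/2$ with $\mu$ the parameter of Theorem~\ref{thm:hypothesis}; (ii) factor $\E[2^{sT}]$ over the independent chains; (iii) inside a chain with $t$ indices apply Cauchy--Schwarz to $\prod_{j\ \mathrm{odd}}2^{sT_j}\cdot\prod_{j\ \mathrm{even}}2^{sT_j}$---the odd‑indexed (resp.\ even‑indexed) terms use pairwise‑disjoint pairs of source symbols, so each such sub‑collection is a family of independent terms each distributed as under $H_1$---which bounds the per‑chain moment generating function by $M(2s)^{t/2}$, where $M(\nu):=\sum_{x,y}P_{X,Y}(x,y)^{\nu}\big(P_X(x)P_Y(y)\big)^{1-\nu}$; (iv) multiply over the $\ell-k$ chains to obtain $\E[2^{sT}]\le M(2s)^{\ell/2}$, hence $\P(\text{accept }H_0)\le\big(2^{-2s\theta}M(2s)\big)^{\ell/2}$. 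The final step is the standard large‑deviations identity already behind Theorem~\ref{thm:hypothesis}: with $2s=\mu$ and $\mu$ chosen by $D(P_\mu\|P_X\cdot P_Y)-D(P_\mu\|P_{X,Y})=\theta$ one has $2^{-\mu\theta}M(\mu)=2^{-D(P_\mu\|P_X\cdot P_Y)}$, so $\P(\text{accept }H_0)\le 2^{-\ell D(P_\mu\|P_X\cdot P_Y)/2}$, which already implies the stated bound; the prefactor $\gamma f(\ell)$ in the lemma is harmless slack and is the kind of constant one picks up if one prefers to route the per‑chain estimate through the method‑of‑types form of Theorem~\ref{thm:hypothesis}, just as the constant appears in Lemma~\ref{lem self match markov}.

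The main obstacle is part~2: unlike the noiseless Lemma~\ref{lem self match iid}, where physical overlap merely forces a periodic source pattern whose probability is read off directly, in the noisy case there is no deterministic coincidence event and one must control the large deviations of a \emph{dependent} sum of log‑likelihood increments. The chain decomposition together with the odd/even Cauchy--Schwarz split is exactly what turns that dependent sum into independent pieces, each equivalent to an i.i.d.\ $H_1$‑type sum of half the length, and this halving is the origin of the factor $\frac{1}{2}$ in the exponent; once the reduction is in place the rest is bookkeeping inside Theorem~\ref{thm:hypothesis}. A secondary point to verify carefully is the geometry of ``physical overlap'' (that for the ranges of $\ell\le L$ and $G$ of interest it is always a single contiguous region at adjacent ends, so that the one‑window model with $k<\ell$ is legitimate).
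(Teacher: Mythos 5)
Your proposal is correct; Part 1 coincides with the paper's proof (the disjoint case is exactly the $H_1$ distribution, so Theorem~\ref{thm:hypothesis} applies verbatim). For Part 2 the underlying structural observation is the same as the paper's — each increment $\chi_j=\log\frac{P_{X,Y}(x_j,y_j)}{P_X(x_j)P_Y(y_j)}$ depends only on the source symbols at positions $j$ and $j+(\ell-k)$, so the dependency graph is a disjoint union of paths and the index set splits into two internally independent halves — but your analytic packaging is genuinely different. The paper builds the two halves $J_1,J_2$ explicitly (alternating blocks of length $\ell-k$, stated only for even $\ell$), union-bounds $\{\sum_j\chi_j\ge\ell\theta\}$ by the two events $\{\sum_{J_i}\chi_j\ge\tfrac{\ell}{2}\theta\}$, and invokes Theorem~\ref{thm:hypothesis} as a black box on each half, which is where the prefactor $\gamma f(\ell)$ comes from. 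You instead run a Chernoff bound on the full dependent sum at tilt $s=\mu/2$, factor the moment generating function over the residue-class chains, and decouple within each chain by Cauchy--Schwarz on the odd/even positions, using $\E_{P_X\times P_Y}\bigl[2^{\mu\chi}\bigr]=M(\mu)$ and the standard identity $2^{-\mu\theta}M(\mu)=2^{-D(P_\mu||P_X\cdot P_Y)}$ (which indeed follows from the defining equation $D(P_\mu||P_X\cdot P_Y)-D(P_\mu||P_{X,Y})=\theta$). This yields the clean bound $2^{-\ell D(P_\mu||P_X\cdot P_Y)/2}$ with no polynomial prefactor and no parity restriction on $\ell$, which is strictly stronger than the stated \eqref{eq:false_positive_cor}, so the lemma follows a fortiori. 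The only hypotheses you should make explicit are that $\mu\ge 0$ (so that the Markov step $\P(T\ge\ell\theta)\le 2^{-s\ell\theta}\E[2^{sT}]$ is legitimate), which holds in the standard regime where $\theta$ lies between the two means of the log-likelihood ratio and is the same implicit assumption behind Theorem~\ref{thm:hypothesis}, and the single-contiguous-overlap geometry, which you already flag and which the paper also assumes without comment.
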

\begin{proof}
The proof of the first statement is an immediate consequence of Theorem \ref{thm:hypothesis}.

We now turn to the second statement. We only consider $\ell =2k$ and the other case can be deduced easily by following similar steps. Let $\chi_j= \log \frac{P(x_j,y_j)}{P(x_j)P(y_j)}$. Since $\chi_j$'s are not independent, we cannot directly use Theorem \ref{thm:hypothesis} to compute $\P \left( \sum_{j=1}^{\ell} \chi_j  \geq \ell \theta \right)$. However, we claim that $\chi_j$'s can be partitioned into two disjoint sets $J_1$ and $J_2$ of the same size, where the $\chi_j$'s within each set are independent. Assuming the claim,
\begin{align*}
 \P \left( \sum_{j=1}^{\ell} \chi_j  \geq \ell \theta \right)
 & = \P \left( \sum_{j\in J_1} \chi_j + \sum_{j\in J_1} \chi_j  \geq \ell \theta \right)  \\
& \stackrel{(a)}{\leq}  \P \left( \sum_{j\in J_1} \chi_j  \geq {\ell \over 2} \theta \right)+ \P \left( \sum_{j\in J_2} \chi_j  \geq {\ell \over 2} \theta \right) \\
& \leq 2  \P \left( \sum_{j\in J_1} \chi_j  \geq {\ell \over 2} \theta \right),
\end{align*}
where $(a)$ follows from the union bound. Since $|J_1|={\ell \over 2}$, one can use Theorem \ref{thm:hypothesis} to show (\ref{eq:false_positive_cor}).

It remains to prove the claim. To this end, let $k$ be the amount of physical overlap between $\bX$ and $\bY$. Without loss of generality, we assume that $S_1S_2\dots S_{2\ell +k}$ is the shared DNA sequence. Let $q$ and $r$ be the quotient and remainder of $\ell$ divided by $2(\ell -k)$, i.e. $\ell = 2q(\ell -k) +r$ where $0\leq r < 2(\ell -k)$. Since $\ell$ is even, $r$ is even.  Let $J_1$ be the set of indices $j$ where either $(j \mod 2(\ell-k))\in \{0,1,\dots,\ell -k-1\}$ for $j \in \{1,\dots, 2q(\ell-k)\}$ or $j \in \{2q(\ell-k)+1,\dots, 2q(\ell-k)+{\ell \over 2} \}$. We claim that the random variables $\chi_j$'s with $j\in J_1$ are independent. We observe that $\chi_j$ depends only on $s_j$ and $s_{j+ (\ell -k)}$. Consider two indices $j_1 < j_2 \in J_1$. The pairs $(s_{j_1},s_{j_1+ (\ell -k)})$ and $(s_{j_2},s_{j_2+ (\ell -k)})$ are disjoint iff $j_1+ (\ell -k) \neq j_2$. By the construction of $J_1$, one can show that $j_1+ (\ell -k) \neq j_2$ for any $j_1 < j_2 \in J_1$. Hence, $\chi_j$'s with $j\in J_1$ are independent. A similar argument shows $\chi_j$'s with $j\in J_2=\{1,\dots,2\ell-k\}- J_1$ are independent. This completes the proof.
\end{proof}

Due to noise,  two physically overlapping reads may not pass the criterion. To deal with this event, we state the following lemma.

\begin{lemma}[Mis-detection]\label{l:mis_detection}
Let $\bX$ and $\bY$ be two distinct  $\ell$-mers  from the same physical location. The probability that $H_1$ is accepted is bounded by
$$\sanov 2^{-\ell D(P_{\mu} ||P_{X,Y})}.$$
\end{lemma}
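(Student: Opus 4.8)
The plan is to derive the bound as a direct consequence of Theorem~\ref{thm:hypothesis}, the only real content being the observation that the two $\ell$-mers give rise to an i.i.d.\ sequence of symbol pairs distributed according to the $H_0$ law. First I would unwind the definitions. By the MAP criterion \eqref{eq:criterion for decision}, the hypothesis $H_1$ is accepted exactly when
$$\sum_{j=1}^{\ell}\chi_j < \ell\theta,\qquad \chi_j := \log\frac{P_{X,Y}(x_j,y_j)}{P_X(x_j)P_Y(y_j)},$$
where $x_j,y_j$ are the $j$-th symbols of $\bX$ and $\bY$. So the quantity to bound is the probability of this event under the true law of $(\bX,\bY)$, which, since the two $\ell$-mers come from the same physical location, is the $H_0$ law.

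The key structural point is that this $H_0$ law is a product law over the $\ell$ coordinates. Indeed, $\bX$ and $\bY$ are length-$\ell$ noisy reads of the \emph{same} block of $\ell$ DNA bases, but they are carried by two \emph{different} reads, whose noise processes are independent of each other (and i.i.d.\ within each read), and the underlying DNA is i.i.d. Hence the pairs $(X_1,Y_1),\dots,(X_\ell,Y_\ell)$ are i.i.d.\ with common distribution $P_{X,Y}(x,y)=\sum_{s}P_S(s)Q(x|s)Q(y|s)$ — precisely the marginal introduced in Section~\ref{sec:formulation-noisy} and used to define $H_0$. Consequently the $\chi_j$ are i.i.d., the displayed event is exactly the ``miss'' error $\{\text{decide }H_1 \mid H_0\}$ of the length-$\ell$ hypothesis test, and Theorem~\ref{thm:hypothesis} immediately gives the bound $f(\ell)\,2^{-\ell D(P_\mu\|P_{X,Y})}$. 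For completeness one can recall why Theorem~\ref{thm:hypothesis} holds: writing $\hat P$ for the empirical joint type of $(x_1,y_1),\dots,(x_\ell,y_\ell)$, one has $\tfrac1\ell\sum_j\chi_j = D(\hat P\|P_X\times P_Y)-D(\hat P\|P_{X,Y})$, so the miss event is $\{\hat P\in\mathcal{A}\}$ with $\mathcal{A}=\{P:D(P\|P_X\times P_Y)-D(P\|P_{X,Y})<\theta\}$; by Sanov's theorem its probability is at most the polynomial factor $f(\ell)$ times $2^{-\ell\min_{P\in\bar{\mathcal{A}}}D(P\|P_{X,Y})}$, and a routine Lagrange-multiplier computation identifies the minimizer of $D(\cdot\|P_{X,Y})$ over $\bar{\mathcal{A}}$ as the tilted distribution $P_\mu$ with the stated $\mu$.

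I do not expect any serious obstacle here; what little work there is lies in the bookkeeping of the previous paragraph, namely checking that ``same physical location'' together with ``independent per-read noise'' and ``i.i.d.\ DNA'' genuinely yields an i.i.d.\ pair sequence. It is worth emphasizing what makes this case cleaner than the false-alarm bound with physical overlap (Lemma~\ref{l:false_alarm}, part~2): there the suffix and prefix being compared are read from DNA positions in a periodic relationship, so the $\chi_j$ are only conditionally independent within two interleaved blocks, which forces the split into $J_1,J_2$, the factor of two in the exponent, and the extra constant $\gamma$. Here the two $\ell$-mers are perfectly aligned, there is no such clumping, the $\chi_j$ are genuinely independent, and so the full exponent $D(P_\mu\|P_{X,Y})$ appears with no extra constant.
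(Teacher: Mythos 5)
Your proposal is correct and follows essentially the same route as the paper, which proves Lemma~\ref{l:mis_detection} simply as an immediate consequence of Theorem~\ref{thm:hypothesis}. Your extra bookkeeping (checking that same physical location, independent per-read noise, and i.i.d.\ DNA give i.i.d.\ pairs $(X_j,Y_j)$ under $H_0$, and recalling the Sanov/tilting argument behind Theorem~\ref{thm:hypothesis}) is sound and merely makes explicit what the paper leaves implicit.
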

\begin{proof}
This is an immediate consequence of Theorem \ref{thm:hypothesis}.
\end{proof}

{\bf Proof of Theorems  \ref{p:general_IIDachievability}:} Similar to the proof of achievability result in the noiseless case, we decompose the error event $\EE$ into $\EE_1\cup \EE_2$ where $\EE_1$ is the event that some read is merged incorrectly and $\EE_2$ is the event that the DNA sequence is not covered by the reads. The probability of the second event, similar to the noiseless case, goes to zero exponentially fast if $R> \bar L$. We only need to compute $\P(\EE_1)$. Again, $\EE_1$ can be decomposed as $\EE_1=\cup_\ell \calA_\ell$, where $\calA_\ell$ is the event that the first error in merging occurs at stage $\ell$. Moreover, 
\begin{equation}\label{e:AdecombN}
\calA_\ell \subseteq \BB_{\ell}\cup \CC_{\ell}\,,
\end{equation}
where:
\begin{eqnarray}
\BB_\ell & \triangleq & \{\bR_j\neq \bR_i^s, U_j\leq \ell, V_i\leq \ell, W_{ij}=\ell \mbox{ for some $i \not = j$.}\} \label{eq:eventBN}\\
\CC_\ell & \triangleq & \{\bR_j=\bR_i^s, U_j=V_i\neq  \ell, W_{ij}=\ell \mbox{ for some $i \not = j$.}\} \label{eq:eventCN}
\end{eqnarray}
Note that here the definition of $\CC_\ell$ is different from that of \eqref{eq:eventC} as for the noiseless reads the overlap score is never less than the physical overlap. However, in the noisy reads there is a chance for observing this event due to mis-detection.

The analysis of $\BB_\ell$ follows closely from that of the noiseless case. In fact, using Lemma \ref{l:false_alarm} which is a counterpart of Lemma \ref{lem self match iid} and following similar steps in calculation of $\P(\BB_\ell)$ in the noiseless case, one can obtain
\begin{align}
\P(\BB_\ell)  \leq &  \sanov ( q_\ell ^2 + 2\gamma \lambda  L q_\ell ),
\label{eq:pbN}
\end{align}
where 
\begin{align}
q_\ell = \lambda G  e^{-\lambda (L-\ell)(1-2/N)} 2^{ -\ell D(P_{\mu} ||P_X \cdot P_Y)/2}.
\end{align}

To compute $\P(\CC_\ell)$, we note that $\CC_\ell \subseteq \cup_{i} \CC_\ell ^{i}$, where 
$$ \CC_\ell ^i  \triangleq  \{V_i= \ell, W(\bR_i, \bR_i^s)\neq \ell \}. $$
Applying the union bound and considering the fact that $\CC_\ell ^{i}$'s are equiprobable yields
$$\CC_\ell \leq N \P(\CC_\ell ^1).$$
Hence,
\begin{align*}
\P(\CC_\ell) & \leq N (\P( W(\bR_i, \bR_i^s)> \ell  | V_i = \ell)+\P( W(\bR_i, \bR_i^s) < \ell  | V_i = \ell)) \P(V_i< \ell).
\end{align*}
Using Lemma \ref{l:false_alarm} part 2 and Lemma \ref{l:mis_detection} yields
\begin{align*}
\P(\CC_\ell) & \leq \lambda G \sanov \left(\gamma e^{ - \ell D(P_{\mu} ||P_X \cdot P_Y) /2}+ e^{ -\ell D(P_{\mu} ||P_{X,Y})}\right)   e^{-\lambda(L-\ell)(1-1/N)}\\
&  \leq  \sanov \left(\gamma q_\ell + q'_\ell \right),
\end{align*}
where
$$q'_\ell =\lambda G e^{ -\ell D(P_{\mu} ||P_{X,Y})} 2^{-\lambda(L-\ell)(1-1/N)}.$$

Combining all the terms, we obtain
\begin{align*}
\P(\EE_1)  \leq  \sum_{\ell = 0}^L \P(\BB_\ell) + \P(\CC_\ell) \leq \sum_{\ell=0}^{L}  \sanov ( q_\ell ^2 +  \gamma (2 \lambda L +1) q_\ell + q'_\ell  ).
 \end{align*}
 
To show that that $\P(\EE_1) \rightarrow 0$, it is sufficient to argue that $q_0$, $q'_0$, $q_L$,  and $q'_L$  go to zero exponentially in $L$. Considering first $q_0$ and $q'_0$, they vanish exponentially in $L$ if $N > G\ln G/ L$ which implies $\capacity(\bar L)=1 \,$. The terms $q_L$ and $q'_L$ vanish exponentially in $L$ if 
$$\bar{L}>\frac{2}{\min (2D(P_{\mu} ||P_{X,Y}),D(P_{\mu} ||P_X \cdot P_Y))}.$$
%
%

Since  $\P(\EE_1)=o(1)$ and $\P(\EE_2)=o(1)$ for any choice of $\theta$, one can optimize over $\theta$ to obtain the result given in the theorem. This completes the proof.

\bibliographystyle{amsplain}
\bibliography{references}

\end{document}